\documentclass[floatfix,aps,prl,reprint,nofootinbib]{revtex4-2}

\usepackage{palatino}
\usepackage{graphicx}
\usepackage{morestyles}
\usepackage{physics}
\usepackage{mathrsfs} % \mathscr

\usepackage{xr-hyper} % for cross-document referencing
\usepackage[hidelinks]{hyperref} % for clickable links
\usepackage[capitalise]{cleveref} % for easy referencing
\Crefname{theorem}{Theorem}{Theorems}
\crefname{theorem}{Thm.}{Thms.}
\Crefname{figure}{Figure}{Figures}
\crefname{figure}{Fig.}{Figs.}
\Crefname{equation}{Equation}{Equations}
\crefname{equation}{Eq.}{Eqs.}
\Crefname{section}{Section}{Sections}
\crefname{section}{Sec.}{Secs.}
\usepackage{amsthm}
\newtheorem{corollary}{Corollary}
\newtheorem{theorem}{Theorem}
\newtheorem*{theorem*}{Theorem}
\newtheorem{definition}{Definition}

\newtheorem{proposition}{Proposition}

\usepackage{pdfpages}
\usepackage{pgffor}
\makeatletter
\AtBeginDocument{\let\LS@rot\@undefined}
\makeatother

\begin{document}

% \onecolumngrid

% \section{Todos}
% \begin{itemize}
%     \item Check if it's true that unital channels break up into Pauli errors followed by a unitary
%     \item At least draw the optimal phase diagram for random rotation errors
%     \item Redo analysis for separate X then Z errors and show it's sampling decoder
% \end{itemize}

% \clearpage
% \newpage
% \twocolumngrid

\title{Optimal recovery for quantum error correction}
\author{Sun Woo P. Kim}
\email{swk34@cantab.ac.uk}
\address{Department of Physics, King's College London, Strand, London WC2R 2LS, United Kingdom}
% However, a much broad set of operations could be used to recover. I'm gonna talk about not cookie cutter 
\begin{abstract}
The calculation of the error threshold of quantum error correcting codes typically proceeds as follows. First, syndromes are measured. Then, a decoder infers the error chain and the corresponding correction is applied. The threshold is then defined as the largest correctable error rate, with the maximum-likelihood decoder corresponding to the ``optimal'' threshold. However, a broader set of operations could be used to recover quantum information. The true optimal threshold should be optimised over all possible recovery schemes, which can be described by quantum channels.  Here, we study such optimal recovery channels and their thresholds $p_\mathrm{th}^\mathrm{opt}$. We introduce an information-theoretic quantity, mutual trace distance, which provides a necessary and sufficient diagnostic for sharply determining $p_\mathrm{th}^\mathrm{opt}$ without explicit optimisation. In contrast, previous works give a lower bound on $p_\mathrm{th}^\mathrm{opt}$ by specifying particular recovery schemes, e.g. Schumacher-Westmoreland (SW) which provides coherent information as a diagnostic to lower bound $p^\mathrm{opt}_\mathrm{th}$. We prove that the Petz and SW recovery schemes are optimal, i.e. their threshold is $p_\mathrm{th}^\mathrm{opt}$. With their optimality established, we explore the structure of optimal and non-optimal recovery schemes and their phase diagrams.
\end{abstract}

\maketitle

\paragraph*{Introduction ---}In most works, calculating the thresholds quantum error correcting stabiliser codes proceeds as follows. First, a noise channel $\mathcal{E}^{(p)}$ of strength $p$ is assumed. Then, the error correction procedure is given by the measurement of syndromes, followed by a scheme where, conditioned on the measurement outcomes, a decoder decoder infers the error chain and the corresponding correction is applied to hopefully revert back to the original state \cite{dennis2002topological,darmawan2017tensor,darmawan2018linear,bombin2012strong}. The threshold $p_\mathrm{th}$ then is where the failure rate goes from zero to a finite value as number of qubits $N \rightarrow \infty$.

For concreteness, consider the toric code affected by Pauli errors, such as bit-flip or phase-flip errors. Here, although there is no proof, it seems obvious that the above set of operations, combined with the maximum-likelihood (ML) decoder, which determines the most likeliy error, is `optimal' for error correction. However, the scheme of measuring syndromes then inferring matchings may not be the optimal procedure for error correction for other noise channels \cite{fletcher2007optimum}. 

Instead, consider a coherent error where each physical qubit is rotated by $U_\theta = e^{i \theta X}$. In this case, the optimal operation would be to simply undo the rotation. One may argue that, a priori, one cannot know by which angle $\theta$ the rotation will be applied to the system. However, without any characterisation of prior knowledge of errors, it is not possible to determine the theoretical value of the thereshold. Indeed, in classical Bayesian inference, one must assume a prior and likelihood distribution in order to compute the theoretical optimum threshold \cite{zdeborova2016statistical,p2025planted}. Similarly, in the quantum setting, one must assume an error channel (this could encode one's ignorance) to discuss the threshold.

A proper account for such coherent rotations would be to model the uncertainty in the rotation angle $\theta$ as a distribution $p(\theta)$. One can intepret this as a channel, which can then be written as a Pauli error channel $\mathcal{E}_\mathrm{Pauli}$ followed by a deterministic rotation $\mathcal{U}$, i.e. $\mathcal{E} = \mathcal{U} \mathcal{E}_\mathrm{Pauli}$~\footnote{This is also true for $U_\phi = e^{i \phi Z}$. Details are in Supplemental Material Sec. VI.}. % \cref{sm:sec:coherent-rotations}
Then, the optimal recovery channel should be to undo the rotation, $\mathcal{U}^{-1}$, followed by usual scheme of measuring the stabilisers and applying the correction as inferred by the ML decoder.

One can always convert the recovery problem into a classical Bayesian inference problem by assuming that measurement of stabilisers always follow the noise channel. Indeed, if this is assumed, the problem maps to a Bayesian inference problem \cite{kim2025measurement,behrends2025statistical}.
% ~\footnote{With the caveat that the inferred posterior distribution may contain the amplitude square of a complex partition function which may be intractable to compute in practice \cite{behrends2025statistical}.}

In this work, we consider the problem in its entirety, viewing the recovery operation, described as a channel $\mathcal{R}$, as part of the optimisation for the threshold. For a given a code $\mathcal{C}$ and noise channel $\mathcal{E}$, $\mathcal{R}$ aims to approximately recover noise-affected states within the code space of $\mathcal{C}$, such that
\begin{align*}
    \mathcal{R} \, \mathcal{E} \rho_{\mathcal{C}} \approx \rho_{\mathcal{C}}.
\end{align*}
We consider $\mathcal{R}^\mathrm{opt}$, the optimal channel for recovery with respect to some loss function $\mathscr{L}$, such as the entanglement infidelity $1 - F_\mathrm{e}$, and its associated threshold $p_\mathrm{th}^\mathrm{opt}$. Previous studies have focused on sufficient conditions for recovery. For example, if coherent information $I_\mathrm{c}$ is preserved, then recovery is possible \cite{schumacher2002approximate}. 
% However, this does not rule out the possibility that even if it is not preserved, recovery may still be possible by $\mathcal{R}^\mathrm{opt}$. 

Here, we provide a necessary and sufficient condition for recovery, by proving an upper bound for the entanglement fidelity of $\mathcal{R}^\mathrm{opt}$ in terms of an information-theoretic diagnostic we call \emph{mutual trace distance} $T_{A:B}$, see \cref{fig:summary}. 

We define recovery `schemes', a family of recovery channels over number of physical qudits $N$ and noise strength $p$. This allows us to simply prove statements regarding thresholds, such as that known recovery schemes --- Petz and Schumacher-Westmoreland (SW) --- are optimal, having the same threshold as $\mathcal{R}^\mathrm{opt}$. With their optimality established, we explore their structure and the operational content. For example, we show that for qubit CSS codes under amplitude damping errors, optimal recovery schemes measure the $Z$-type syndromes and correct them in the usual way, but \emph{not} for the $X$-type syndromes, instead applying a coherent correction operation. In analogue to optimal and non-optimal quantum and classical Bayesian inference problems, we introduce non-optimal quantum recovery problems and provide a constraint on the shape of such phase diagrams. Finally, we close with future research directions.

\begin{figure}
    \centering
    \includegraphics[width=0.95\linewidth]{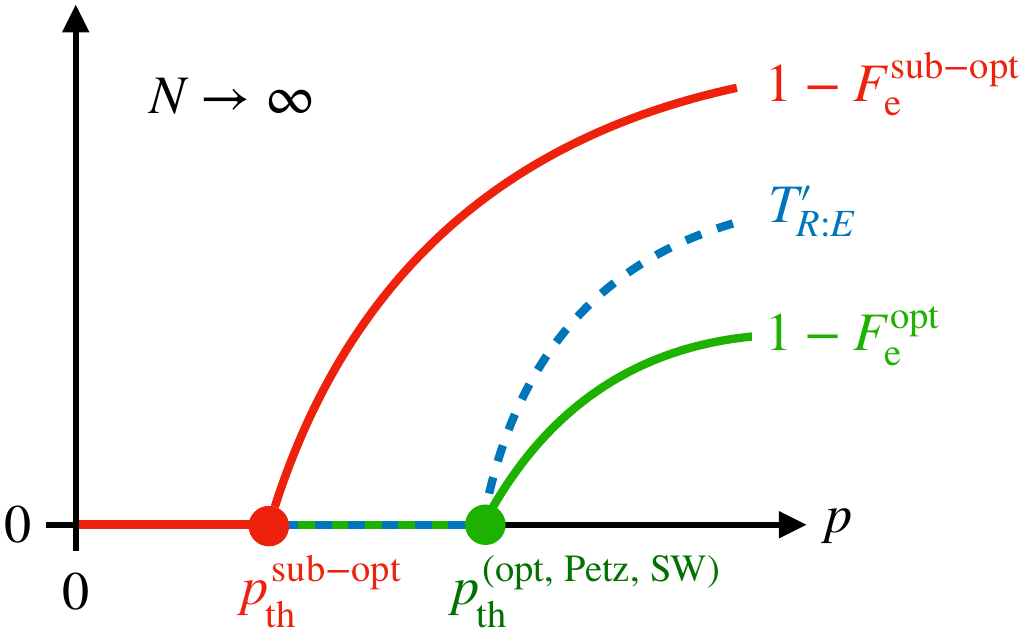}
    \caption{Schematic of some main results. First, we show that the mutual trace distance $T'_{R:E}$ (\cref{def:mutual-trace-distance}) is a necessary and sufficient quantity to detect the optimal threshold over all recovery schemes, $p^\mathrm{opt}_\mathrm{th}$, whose performance is measured by the entanglement infidelity $1 - F_\mathrm{e}$. Next, we show that the Petz and Schumacher-Westmoreland recovery schemes $\mathcal{R}^\mathrm{Petz}$, $\mathcal{R}^\mathrm{SW}$, are optimal in that their threshold exactly coincides with the optimal one, $p^\mathrm{opt}_\mathrm{th} = p^\mathrm{Petz}_\mathrm{th} = p^\mathrm{SW}_\mathrm{th}$. $p$ is the noise strength.}
    \label{fig:summary}
\end{figure}

\vspace{1.5em}
\paragraph{Optimal recovery channels and optimal thresholds ---}In order to discuss optimal recovery channels, one must consider the triple $(\mathcal{C}, \mathcal{E}, \mathscr{L})$. Here, $\mathcal{C}$ is a quantum error-correcting code, which could also be thought of as a preparation channel. $\mathcal{E}$ is an error channel, and $\mathscr{L}$ is a `loss function' used to measure the success ($\mathscr{L}=0$) of error-correction for some recovery channel $\mathcal{R}$. The optimal recovery channel is thus given by
\begin{definition}[Optimal recovery channel]
    \begin{align}
        \mathcal{R}^\mathrm{opt}_\mathscr{L} = \mathrm{argmin}_{\mathcal{R}} \mathscr{L}(\mathcal{R} \, \mathcal{E}; \mathcal{C}).
    \end{align}
\end{definition}

For the rest of this work, we will assume the loss function $\mathscr{L} = 1 - F_\mathrm{e}$, where the entanglement fidelity $F_\mathrm{e}$ is
\begin{align}
    F_\mathrm{e}(\mathcal{R} \mathcal{E}) = F\left[ \mathcal{R}_Q \mathcal{E}_Q \Phi_{QR}, \Phi_{QR} \right],
\end{align}
and $\Phi_{QR} = \ketbra{\Phi_{QR}}{\Phi_{QR}}$ is the density matrix for the pure entangled state $\ket{\Phi_{RQ}} = \sum_{k=0}^{d-1} \sqrt{\lambda_k} \ket{\overline{k}}_R \ket{k}_R$ between encoded qudit $Q$ with logical states $\ket{\bar k}$ and reference $R$. Importantly, the recovery channel can \emph{only act on} $Q$. $F[\rho, \sigma] = \norm{\sqrt{\rho} \sqrt{\sigma}}_1$ is the fidelity, equal to $F[\rho, \psi] = \sqrt{\mel{\psi}{\rho}{\psi}}$ if one of the states is pure. For the rest of the manuscript, we choose $\lambda_k = 1/d$, i.e. a Bell pair, where $d$ is the logical qudit dimension.

For this loss function, a recipe for constructing the optimal recovery channel $\mathcal{R}^\mathrm{opt}$ was shown in Ref. \onlinecite{fletcher2007optimum}. Unfortunately, its analysis is intractable as obtaining it requires semi-definite programming. An alternative recovery channel is the Petz recovery channel $\mathcal{R}^\mathrm{Petz}$, also known as the transpose recovery channel, detailed in the Supplemental Material (SM) Sec. III. % \cref{sm:sec:petz}.
A remarkable feature of this recovery channel is that it is `near-optimal', in the sense that it two-way bounds the entanglement fidelity of the \emph{optimal} recovery channel \cite{barnum2002reversing,zheng2024near},
\begin{theorem*}[Two-way bound of the Petz recovery channel, Theorem 2 of Ref. \onlinecite{barnum2002reversing}]
    \begin{align} \label{eq:petz-opt-two-way-bound}
        (F^\mathrm{opt}_\mathrm{e})^2 \leq F^\mathrm{Petz}_\mathrm{e} \leq F^\mathrm{opt}_\mathrm{e}.
    \end{align}
\end{theorem*}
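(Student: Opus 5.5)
The upper inequality $F^\mathrm{Petz}_\mathrm{e} \leq F^\mathrm{opt}_\mathrm{e}$ holds by definition: $\mathcal{R}^\mathrm{opt}$ maximises $F_\mathrm{e}$ over all channels on $Q$, and $\mathcal{R}^\mathrm{Petz}$ is one such channel. The content of the statement is therefore the lower bound $(F^\mathrm{opt}_\mathrm{e})^2 \leq F^\mathrm{Petz}_\mathrm{e}$. I will prove it with the Kraus-operator argument of Barnum--Knill, working with the squared fidelity $F^2 = \mel{\Phi}{\rho}{\Phi}$, which is linear in the channel.

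Write $\mathcal{E}(\cdot) = \sum_i E_i \cdot E_i^\dagger$, and let $P$ be the code projector, so that the input state on $Q$ is $\rho = P/d$ (the reduced state of the Bell pair). The Petz channel is $\mathcal{R}^\mathrm{Petz}(\cdot) = \sum_i \rho^{1/2} E_i^\dagger \mathcal{E}(\rho)^{-1/2} \cdot \mathcal{E}(\rho)^{-1/2} E_i \rho^{1/2}$, with the inverse taken on the support and the complement padded so the map is trace preserving. For any composite channel with Kraus operators $A_k$, the entanglement fidelity is $F_\mathrm{e}^2 = \sum_k |\tr(\rho A_k)|^2$. Taking an arbitrary recovery $\mathcal{R}(\cdot) = \sum_j R_j \cdot R_j^\dagger$ gives
\begin{align*}
F_\mathrm{e}(\mathcal{R}\mathcal{E})^2 = \sum_{i,j} |\tr(\rho R_j E_i)|^2 .
\end{align*}
Similarly,
\begin{align*}
F_\mathrm{e}^\mathrm{Petz\,2} = \sum_{i,i'} |\tr(\rho^{3/2} E_{i'}^\dagger \mathcal{E}(\rho)^{-1/2} E_i)|^2 .
\end{align*}
The Petz terms are not directly comparable with the $\mathcal{R}$ terms, so I will bound the latter by Cauchy--Schwarz. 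Write
\begin{align*}
\tr(\rho R_j E_i) = \tr\!\left[ \left(\rho^{1/2} R_j \mathcal{E}(\rho)^{1/4}\right)\left(\mathcal{E}(\rho)^{-1/4} E_i \rho^{1/2}\right)\right],
\end{align*}
which is legitimate because $E_i \rho^{1/2}$ lies in the support of $\mathcal{E}(\rho)$. Summing over $j$ and applying Cauchy--Schwarz together with the completeness relation $\sum_j R_j^\dagger R_j = I$, the $R$-dependence collapses into a factor controlled by $\tr[\mathcal{E}(\rho)^{1/2}\cdots]$. The remaining $E_i$-dependent factor should then be recognisable as $\sum_{i,i'}|\tr(\rho^{3/2}E_{i'}^\dagger\mathcal{E}(\rho)^{-1/2}E_i)|$-type quantities, that is, as the Petz expression.

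The main obstacle is choosing the Cauchy--Schwarz split so that the two factors are exactly $F^\mathrm{opt}_\mathrm{e}$-compatible and $F^\mathrm{Petz}_\mathrm{e}$, with exponent $2$ and no dimension factors. The cleanest route I would try first is the one Barnum--Knill use. Express the (squared) entanglement fidelity as $\mathrm{Tr}$ of $X_{\mathcal R}\,Y$ in a bipartite operator formalism: the Choi-type operator $Y = \sum_i \mathrm{vec}(E_i\rho^{1/2})\mathrm{vec}(E_i\rho^{1/2})^\dagger$, and $X_{\mathcal R}$ the Choi operator of $\mathcal{R}$, subject to the constraint $\tr_{\text{out}} X_{\mathcal R} = I$. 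Then show that $\mathcal{R}^\mathrm{Petz}$ corresponds to $X \propto Y^{1/2}$, suitably normalised by $\mathcal{E}(\rho)^{-1/2}$. This reduces the claim to an operator inequality of the form $(\tr X Y)^2 \leq \tr(Y^{1/2} \cdot)\cdot \tr(\cdots)$, to be proved by Cauchy--Schwarz in the Hilbert--Schmidt inner product with weights $Y^{\pm1/4}$. I would then check that the leftover factor is at most $1$ by trace preservation of $\mathcal{R}$, and that $F_\mathrm{e}\le1$, which yields $(F^\mathrm{opt}_\mathrm{e})^2 \le F^\mathrm{Petz}_\mathrm{e}$. Since the paper cites this as Theorem 2 of Ref.~\onlinecite{barnum2002reversing}, the precise bookkeeping can be deferred to that reference.
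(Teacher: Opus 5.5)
Your upper bound is correct. For the lower bound the paper gives no argument of its own: it simply imports Theorem 2 of Barnum--Knill, so deferring to that reference matches the paper. As a proof, however, your sketch has a genuine gap, and both routes you describe fail at the decisive step.

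\emph{The Choi route.} The Petz map does not correspond to $X\propto Y^{1/2}$. Its adjoint Kraus operators are $\sigma^{-1/2}E_k\rho^{1/2}$ with $\sigma=\mathcal{E}(\rho)$. In your vectorisation, its operator is therefore (up to normalisation) the congruence of $Y$ by $\sigma^{-1/2}$ on the output factor, which is linear in $Y$. Cauchy--Schwarz with weights $Y^{\pm1/4}$ produces factors like $\|Y^{1/2}\|_2=\sqrt{\mathrm{tr}\,Y}$, not $F^\mathrm{Petz}_\mathrm{e}$. The correct weights are $\sigma^{\pm1/4}$ on the output factor only, which is your Kraus split.

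\emph{The Kraus route.} Suppose you bound the recovery factor $\sum_{j,j'}|\mathrm{tr}(\rho R_j\sigma^{1/2}R_{j'}^\dagger)|^2$ entrywise and then use $\sum_j R_j^\dagger R_j=I$, as you indicate. This gives only $(\mathrm{tr}[\sigma^{1/2}\mathcal{R}^\dagger(\rho)])^2\le(\mathrm{tr}\,\sigma^{1/2})^2/d^2$, hence $(F^\mathrm{opt}_\mathrm{e})^2\le(\mathrm{tr}\,\sigma^{1/2}/\sqrt d)\,F^\mathrm{Petz}_\mathrm{e}$. That prefactor is unbounded. For a stabiliser code under Pauli noise, $\sigma=\sum_s p(s)\Pi_s/d$, so the prefactor equals $\sum_s\sqrt{p(s)}$, which is exponentially large in $N$.

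The missing idea is to bound the recovery factor as a whole in Hilbert--Schmidt norm. Use the fact that the Stinespring map $V=\sum_j R_j\otimes|j\rangle$ is an isometry (this is where trace preservation enters), together with Hilbert--Schmidt duality $|\mathrm{tr}[(\rho\otimes B)Z]|\le\|\rho\|_2\|B\|_2\|Z\|_2$:
\[
\sum_{j,j'}\big|\mathrm{tr}(\rho R_j\sigma^{1/2}R_{j'}^\dagger)\big|^2=\big\|\mathrm{tr}_Q\big[(\rho\otimes I)\,V\sigma^{1/2}V^\dagger\big]\big\|_2^2\le\|\rho\|_2^2\,\|V\sigma^{1/2}V^\dagger\|_2^2=\mathrm{tr}(\rho^2)\,\mathrm{tr}\,\sigma=\frac{1}{d}.
\]
On the other side, $\rho^{3/2}=\rho/\sqrt d$ for $\rho=P/d$. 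This gives $\sum_{i,i'}|\mathrm{tr}(\rho E_{i'}^\dagger\sigma^{-1/2}E_i)|^2=d\,(F^\mathrm{Petz}_\mathrm{e})^2$. Your split $\mathrm{tr}(\rho R_jE_i)=\mathrm{tr}[(\rho^{1/2}R_j\sigma^{1/4})(\sigma^{-1/4}E_i\rho^{1/2})]$ together with Cauchy--Schwarz then yields $F_\mathrm{e}(\mathcal{R}\mathcal{E})^2\le d^{-1/2}\cdot d^{1/2}\,F^\mathrm{Petz}_\mathrm{e}$ for every $\mathcal{R}$. That is exactly the claim; the dimension factors you were worried about cancel.

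For general $\rho$, split instead as $(\rho^{1/4}R_j\sigma^{1/4})(\sigma^{-1/4}E_i\rho^{3/4})$. Then the two squared factors are bounded by $\|\rho^{1/2}\|_2^2\|\sigma^{1/2}\|_2^2=1$ and equal exactly $(F^\mathrm{Petz}_\mathrm{e})^2$, respectively.
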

Here we used $F^\mathrm{opt}_\mathrm{e} = F_\mathrm{e}(\mathcal{R}^\mathrm{opt} \mathcal{E})$ as a shorthand, and similarly for $F^\mathrm{Petz}_\mathrm{e}$.

To discuss thresholds and optimal thresholds, we consider a family of error-correcting codes $(\mathcal{C}^{(N)})_N$ parameterised by the number of physical qudits $N$, and a family of error channels $(\mathcal{E}^{(N, p)})_{N, p}$ parameterised by both $N$ and strength $p$. For example, such error channels could be an error channel separable on each physical qudit $\mathcal{E}^{(N, p)} = \prod_{i=1}^N \mathcal{E}_i^{(p)}$. Given this, we choose a family of recovery channels $(\mathcal{R}^{(N,p)})_{N, p}$ which we will refer to as a \emph{recovery scheme}. Then, the threshold $p_\mathrm{th}$ is defined as
\begin{definition}[Recovery scheme and threshold]
    Given the family of quadruples $(\mathcal{C}, \mathcal{E}, \mathscr{L}, \mathcal{R})_{N, p}$, where $(\mathcal{R})_{N, p}$ is referred to as a recovery scheme, the threshold defined as
    \begin{align}
        p_\mathrm{th} = \max_p \left\{p : \lim_{N \rightarrow \infty} \mathscr{L}(\mathcal{R}^{(N,p)} \mathcal{E}^{(N,p)}; \mathcal{C}^{(N)}) = 0 \right\}.
    \end{align}
\end{definition}

The optimal threshold then which is the theoretical maximum for any recovery scheme, is given by that of the optimal family of recovery channels $p_\mathrm{th}^\mathrm{opt} = p_\mathrm{th}(\mathcal{C}, \mathcal{E}, \mathscr{L}, \mathcal{R}^\mathrm{opt})_{N, p}$. Here, we assumed that $\mathscr{L}$ is monotonically non-decreasing with $p$~\footnote{ We could easily generalise to more parameters $\vec{p}$ and non-monotonic $\mathscr{L}$ by considering regions $\mathrm{QEC} = \{\vec p : \lim_{N \rightarrow \infty} \mathscr{L}(\vec p) = 0\}$ and $\mathrm{QEC}^c = \{ \vec p : \lim_{N \rightarrow \infty} \mathscr{L}(\vec p) > 0\}$. Then the thresholds $\vec{p}_\mathrm{th}$ would be determined by the boundary between such regions.}.

Note that the thresholds for different choices of loss functions are often the same. For example, for $\mathscr{L} = T$, where $T[\rho, \sigma] = \frac12 \norm{\rho - \sigma}_1$ is the trace distance, this follows from the two-way bounds between trace distance and fidelity [\onlinecite{watrous2018theory}, Theorem 3.33]. More examples are given in SM Sec. V. %\cref{sm:sec:other-loss}.

We will say that a recovery scheme is optimal if it has the same threshold as the optimal recovery scheme.

Then, using the two-way bound of the Petz and optimal channels, \cref{eq:petz-opt-two-way-bound}, the threshold for Petz recovery scheme is optimal:
\begin{corollary}[Optimality of the Petz recovery scheme]
    \begin{align}
        p^\mathrm{Petz}_\mathrm{th} = p^\mathrm{opt}_\mathrm{th}.
    \end{align}
\end{corollary}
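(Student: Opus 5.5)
The plan is to show the two threshold sets coincide, i.e. for every fixed $p$,
\begin{align*}
\lim_{N\to\infty}\bigl(1-F^\mathrm{Petz}_\mathrm{e}\bigr)=0 \quad\Longleftrightarrow\quad \lim_{N\to\infty}\bigl(1-F^\mathrm{opt}_\mathrm{e}\bigr)=0,
\end{align*}
where $F^\mathrm{Petz}_\mathrm{e}$ and $F^\mathrm{opt}_\mathrm{e}$ are evaluated on $(\mathcal{C}^{(N)},\mathcal{E}^{(N,p)})$. The quantity $p_\mathrm{th}$ is defined as the maximum of the set of $p$ for which the loss vanishes in the limit. Equal sets therefore give equal thresholds, $p^\mathrm{Petz}_\mathrm{th}=p^\mathrm{opt}_\mathrm{th}$. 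The only input needed is the two-way bound \cref{eq:petz-opt-two-way-bound}, applied separately at each $N$ and $p$.

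For the direction ``Petz succeeds $\Rightarrow$ optimal succeeds'', use the right inequality $F^\mathrm{Petz}_\mathrm{e}\le F^\mathrm{opt}_\mathrm{e}\le 1$. It gives $0\le 1-F^\mathrm{opt}_\mathrm{e}\le 1-F^\mathrm{Petz}_\mathrm{e}$, and the claim follows by squeezing. This direction is in any case immediate from the definition of $\mathcal{R}^\mathrm{opt}$ as the minimiser of the loss.

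For the converse, use the left inequality. Write $\epsilon_N=1-F^\mathrm{opt}_\mathrm{e}\in[0,1]$. Then
\begin{align*}
0\le 1-F^\mathrm{Petz}_\mathrm{e}\le 1-(1-\epsilon_N)^2 = 2\epsilon_N-\epsilon_N^2\le 2\epsilon_N ,
\end{align*}
so $\epsilon_N\to0$ forces $1-F^\mathrm{Petz}_\mathrm{e}\to0$. It follows that, for each $p$, $p$ lies in the Petz success set exactly when it lies in the optimal success set.

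The only step needing care is passing from ``same success sets'' to ``same $p_\mathrm{th}$''. This step uses the standing assumption that the loss is monotonically non-decreasing in $p$, so that each success set is an interval $[0,p_\mathrm{th})$ or $[0,p_\mathrm{th}]$. The maximum (or supremum, if the endpoint is not attained) is then the same for both schemes. With the parameter-region formulation of the footnote, the $\mathrm{QEC}$ regions coincide and hence so do their boundaries, which handles the non-monotone case. I expect no real obstacle here. The argument is robust because the two-way bound is uniform in $N$ and $p$, so no limit interchange is required.
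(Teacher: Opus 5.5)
Your argument is correct and follows the paper's route exactly: the paper obtains the corollary directly from the two-way bound $(F^\mathrm{opt}_\mathrm{e})^2 \le F^\mathrm{Petz}_\mathrm{e} \le F^\mathrm{opt}_\mathrm{e}$, and your squeeze $0 \le 1-F^\mathrm{Petz}_\mathrm{e} \le 2(1-F^\mathrm{opt}_\mathrm{e})$ just makes the limiting step explicit. One small remark: you do not need the monotonicity assumption at the last step, since identical success sets trivially have the same maximum.
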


There are other recovery channels suggested in the literature. Firstly, there is the recovery channel described in Ref. \onlinecite{schumacher2002approximate}, which we will call the Schumacher-Westmoreland (SW) recovery channel, detailed in SM Sec. IV. % \cref{sm:sec:sw-recovery}.
There are also variations of the Petz recovery channel called the rotated Petz recovery channel and its `average', the twirled recovery channel, which admits information-theoretic bounds~\cite{junge2018universal}. It is not known whether these recovery schemes have the same threshold as that of Petz.

Previous works focused on sufficient conditions for error corrections, and there exist quantum information-theoretic \emph{lower bounds} on the optimal entanglement fidelity. The lower bounds are derived specific recovery channels. For example, we have the lower bound for the SW recovery channel \cite{schumacher2002approximate},
\begin{theorem*}[Lower bounds of entanglement fidelity of SW recovery channel, Ref. \onlinecite{schumacher2002approximate}]
    \begin{align} \label{eq:sw-recoverability-bounds}
        F_\mathrm{e}^\mathrm{SW} & \geq F[\rho_{RE}', \rho'_R \otimes \rho'_E] \geq 1 - T[\rho_{RE}', \rho'_R \otimes \rho'_E] \nonumber \\
        & \geq 1 - \sqrt{\frac{\ln 2}{2} I'_{R:E}}.
    \end{align}
\end{theorem*}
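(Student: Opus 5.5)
The chain has three inequalities, and I would prove them left to right using the complementary-channel (Stinespring) picture. Write $\mathcal{E}_Q(\cdot)=\mathrm{Tr}_E\, V(\cdot)V^\dagger$ with an isometry $V:Q\to Q'E$. Set $\ket{\psi'}_{RQ'E}=(I_R\otimes V)\ket{\Phi_{RQ}}$, so that $\rho'_{RE}$, $\rho'_R=I/d$ and $\rho'_E$ are its marginals.

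\textbf{First inequality.} This is the content of the SW construction, and the argument goes through Uhlmann's theorem. The state $\rho'_R\otimes\rho'_E$ has a purification $\ket{\Phi}_{RQ}\otimes\ket{\phi}_{E\tilde E}$, where $\ket{\phi}$ purifies $\rho'_E$. By Uhlmann's theorem there is an isometry $W:Q'\to Q\tilde E$ such that
\begin{align}
F[\rho'_{RE},\rho'_R\otimes\rho'_E]=\bigl|\bra{\Phi}_{RQ}\bra{\phi}_{E\tilde E}(I\otimes W)\ket{\psi'}_{RQ'E}\bigr|.
\end{align}
I would define $\mathcal{R}^\mathrm{SW}(\cdot)=\mathrm{Tr}_{\tilde E}\, W(\cdot)W^\dagger$, completing $W$ to a channel on all of $Q'$ if needed. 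Monotonicity of fidelity under the partial trace over $E\tilde E$ then gives
\begin{align}
F_\mathrm{e}^\mathrm{SW}=F[\mathrm{Tr}_{E\tilde E}(W\psi' W^\dagger),\Phi_{RQ}]\ge\bigl|\bra{\Phi\otimes\phi}W\ket{\psi'}\bigr|=F[\rho'_{RE},\rho'_R\otimes\rho'_E].
\end{align}
The delicate point is whether $\mathcal{R}^\mathrm{SW}$ depends only on the noise and not on the purification. It does, because $W$ is determined by $\rho'_{RE}$, which is fixed by $\mathcal{E}$ and $\Phi$. Recovery acts only on $Q'$, as required.

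\textbf{Second inequality.} This is the Fuchs--van de Graaf bound $1-F\le T$, quoted earlier via [\onlinecite{watrous2018theory}, Theorem 3.33].

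\textbf{Third inequality.} I would apply Pinsker's inequality,
\begin{align}
T[\rho,\sigma]\le\sqrt{\tfrac{\ln 2}{2}\,D(\rho\|\sigma)},
\end{align}
with the relative entropy $D$ measured in bits, hence the factor $\ln 2$. Taking $\rho=\rho'_{RE}$ and $\sigma=\rho'_R\otimes\rho'_E$ gives $D(\rho'_{RE}\|\rho'_R\otimes\rho'_E)=I'_{R:E}$, the mutual information. The only care needed here is to keep track of the units.

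I expect the main obstacle to be the first step. It requires choosing the purification of $\rho'_R\otimes\rho'_E$ carefully so that the Uhlmann unitary acts only on the channel output $Q'$, and not on $E$ or $R$. This works because $\ket{\psi'}$ already purifies $\rho'_{RE}$ through $Q'$, while $\ket{\Phi\otimes\phi}$ purifies the product state through $Q\tilde E$. Both purifying systems can be embedded in a common space, so the Uhlmann operator acts as an isometry from $Q'$ into $Q\tilde E$, possibly after padding dimensions.
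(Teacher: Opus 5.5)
Your proof is correct and is essentially the Schumacher--Westmoreland construction that the paper cites and sketches in SM Sec.~IV. Your Uhlmann isometry $W$ into $Q\tilde E$, followed by discarding $\tilde E$, realises the same channel as the paper's measurement of $\Pi_l$ followed by the rotation $\ket{\phi^{kl}}\mapsto\ket{\bar k}$, with $\tilde E$ holding the outcome $l$; your last two steps are the same Fuchs--van de Graaf and Pinsker bounds. One small correction to your side remark: $W$ is neither unique nor fixed by $\rho'_{RE}$ alone, since it depends on the chosen dilation and purification. This is harmless, because any maximiser gives a valid channel acting only on the channel output.
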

Here, $I'_{R:E} = S[\rho'_R] + S[\rho'_E] - S[\rho'_{RE}] = I^\mathrm{c}_{R:Q} - I^{\mathrm{c} \prime}_{R:Q}$ is the mutual information, equal to the decrease in coherent information, where the coherent information is $I^\mathrm{c}_{R:Q}[\sigma_{RQ}] = S[\sigma_{Q}] - S[\sigma_{RQ}]$. $E$ is the environment resulted from dilating the noise channel. $\rho'_{RE}$, $\rho'_R$, $\rho'_E$ are partial traces of $\ket{\Psi_{QRE}'} = V_{QE \leftarrow Q} \ket{\Phi_{RQ}}$, where $V_{QE \leftarrow Q}$ is the dilation of the error channel $\mathcal{E}_Q$. Here, the prime ($\, \prime \,$) denotes the state after applying $\mathcal{E}_Q$. These bounds are useful as they allow for explicit lower bounds of optimal thresholds \cite{pkim2026rigorous}.

However, in order for conditions for information-theoretic quantities that are sufficient \emph{and necessary}, we require an upper bound for $F^\mathrm{opt}_\mathrm{e}$. To the best of the author's knowledge, there are yet no such upper bounds. We now present an information-theoretic upper bound for the entanglement fidelity for the \emph{optimal} recovery channel:
\begin{theorem}[Upper bound for optimal entanglement fidelity] \label{thm:upper-bound-for-optimal-fidelity}
    The entanglement fidelity of the optimal recovery channel is upper bounded as
    \begin{align}
        (F^\mathrm{opt}_\mathrm{e})^2 \leq 1 - \frac{1}{4} T[\rho_{RE}', \rho'_R \otimes \rho'_E]^2.
    \end{align}
    Combining with \cref{eq:sw-recoverability-bounds}, this allows us to construct the following sandwich,
    \begin{align} \label{eq:sandwich}
        1 - T_{R:E}' \leq F^\mathrm{SW}_\mathrm{e} \leq F^\mathrm{opt}_\mathrm{e} \leq \sqrt{1 - \frac{1}{4} (T'_{R:E})^2}.
    \end{align}
\end{theorem}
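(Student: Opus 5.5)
We will prove the upper bound by a complementary-channel (decoupling) argument on the purified state $\ket{\Psi'_{QRE}}$. Let $\mathcal{R}^\mathrm{opt}_Q$ act on $Q$ and dilate it by an isometry $W_{QF\leftarrow Q}$ with an ancilla environment $F$. This gives a pure state $\ket{\Omega_{QRFE}} = W \ket{\Psi'_{QRE}}$. Since $W$ acts only on $Q$, the marginal on $RE$ is untouched:
\begin{align}
\Omega_{RE} = \rho'_{RE}, \qquad \Omega_R = \rho'_R = \mathbb{1}/d, \qquad \Omega_E = \rho'_E .
\end{align}
The recovered state is $\sigma_{QR} = \Omega_{QR}$, and by definition $F^\mathrm{opt}_\mathrm{e} = F[\sigma_{QR}, \Phi_{QR}]$.

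The first step converts high fidelity with a pure state into approximate decoupling. By Uhlmann's theorem, since $\Phi_{QR}$ is pure, every purification of it has the form $\ket{\Phi_{QR}}\otimes\ket{\chi_{FE}}$. Hence there is a state $\chi_{FE}$ with
\begin{align}
F^\mathrm{opt}_\mathrm{e} = F\bigl[\Omega_{QRFE}, \Phi_{QR}\otimes\chi_{FE}\bigr].
\end{align}
Tracing out $Q$ and $F$ cannot decrease fidelity, so
\begin{align}
F\bigl[\rho'_{RE}, \tfrac{\mathbb{1}}{d}\otimes\chi_E\bigr] \geq F^\mathrm{opt}_\mathrm{e},
\end{align}
where we used $\mathrm{Tr}_Q\Phi_{QR} = \mathbb{1}/d = \rho'_R$. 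Applying the Fuchs--van de Graaf inequality $T \le \sqrt{1-F^2}$ gives
\begin{align}
T\bigl[\rho'_{RE}, \rho'_R\otimes\chi_E\bigr] \leq \sqrt{1-(F^\mathrm{opt}_\mathrm{e})^2} =: \epsilon .
\end{align}

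The second step, which is the main obstacle, replaces the unknown $\chi_E$ with the true marginal $\rho'_E$. Tracing out $R$ is contractive for trace distance, so $T[\rho'_E,\chi_E]\le\epsilon$. Tensoring both arguments with $\rho'_R$ preserves trace distance, so $T[\rho'_R\otimes\rho'_E, \rho'_R\otimes\chi_E] \le \epsilon$. The triangle inequality then yields
\begin{align}
T'_{R:E} \equiv T[\rho'_{RE}, \rho'_R\otimes\rho'_E] \le 2\epsilon .
\end{align}
Squaring gives $(T'_{R:E})^2 \le 4\bigl(1-(F^\mathrm{opt}_\mathrm{e})^2\bigr)$, which rearranges to exactly the claimed bound
\begin{align}
(F^\mathrm{opt}_\mathrm{e})^2 \le 1-\tfrac14 (T'_{R:E})^2 .
\end{align}
The factor $\tfrac14$ comes precisely from this triangle-inequality step. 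If a sharper constant were needed, one would instead work at the level of fidelities, using the monotonicity and purified-distance triangle inequality, but the stated bound requires only the crude version above.

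The sandwich \cref{eq:sandwich} then assembles from known pieces. The leftmost inequality is the middle line of \cref{eq:sw-recoverability-bounds}. The middle inequality $F^\mathrm{SW}_\mathrm{e}\le F^\mathrm{opt}_\mathrm{e}$ holds by the optimality of $\mathcal{R}^\mathrm{opt}$. The rightmost inequality is the square root of the bound just proved. The remaining care point is to check that the Bell-pair choice makes $\rho'_R$ equal to $\mathrm{Tr}_Q\Phi_{QR}$. This holds because $\mathcal{E}$ acts only on $Q$, so the reference marginal is unchanged and no extra error term appears in the first step.
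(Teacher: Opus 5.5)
Your proof is correct and is essentially the paper's argument. You purify the recovery and use Uhlmann to obtain $F[\rho'_{RE},\rho'_R\otimes\chi_E]\ge F^{\mathrm{opt}}_{\mathrm{e}}$; the paper states this as an equality with a maximum over $\sigma_E$, but only your direction is needed. You then combine Fuchs--van de Graaf, monotonicity on the $E$ marginal, and the triangle inequality to get $T'_{R:E}\le 2\sqrt{1-(F^{\mathrm{opt}}_{\mathrm{e}})^2}$. The only cosmetic difference is the order of two steps: the paper applies data processing to the fidelity before converting to trace distance on $E$, whereas you convert first and then use contractivity of the trace distance, which gives the same constant $\tfrac14$.
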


Here, we defined the mutual trace distance,
\begin{definition}[Mutual trace distance] \label{def:mutual-trace-distance}
    The mutual trace distance between $A$ and $B$ is defined as
    \begin{align}
        T_{A:B}[\rho_{AB}] := T[\rho_{AB}, \rho_A \otimes \rho_B].
    \end{align}
\end{definition}
The proof of \Cref{thm:upper-bound-for-optimal-fidelity} is detailed in SM Sec. I. % \cref{sm:sec:upper-bounds}.
Note that the author is unaware of such two-way bounds via coherent information, whose decrease is given by $I'_{R:E}$. The largest $p$ such that $\lim_{N \rightarrow \infty} I'_{R:E} = 0$ gives a lower bound on $p^\mathrm{opt}_\mathrm{th}$, but when $\lim_{N \rightarrow \infty} I'_{R:E} > 0$, there are no non-vacuous upper bound for $F^\mathrm{opt}_\mathrm{e}$ \emph{in general}. Using \Cref{thm:upper-bound-for-optimal-fidelity} for codes that encode a constant number of logical quantum information, i.e. $d(N) = O(1)$, we can  to construct an analogous sandwich for $I'_{R:E}$:
\begin{corollary}[Two-way bounds for coherent information] \label{cor:zero-rate-coherent-information}
    The decrease in coherent information $I_\mathrm{c} - I'_\mathrm{c} = I'_{R:E}$ two-way bounds the optimal and SW recovery channels as
    \begin{align}
        1 - \sqrt{\frac{\ln 2}{2} I'_{R:E}} \leq F^\mathrm{SW}_\mathrm{e} \leq F^\mathrm{opt}_\mathrm{e} \leq \sqrt{1 - \frac{(I'_{R:E})^3}{4 (2\ln (2d) + 3/e)^3}},
    \end{align}
    where $d$ is the dimension of encoded logical quantum information, e.g. for $K$ encoded qubits, $d = 2^K$.
\end{corollary}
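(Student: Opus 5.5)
The lower bound is already available: it is the last inequality of \cref{eq:sw-recoverability-bounds}, together with $F^\mathrm{SW}_\mathrm{e}\le F^\mathrm{opt}_\mathrm{e}$, which holds by the definition of $\mathcal{R}^\mathrm{opt}$. The work is in the upper bound. My plan is to reduce it to \Cref{thm:upper-bound-for-optimal-fidelity} by lower-bounding $T'_{R:E}$ in terms of $I'_{R:E}$ through a continuity estimate. The intermediate inequality I aim for is
\begin{align*}
I'_{R:E} \le \left(2\ln(2d) + \tfrac{3}{e}\right) (T'_{R:E})^{2/3}.
\end{align*}
Rearranged, this gives $(T'_{R:E})^2 \ge (I'_{R:E})^3/(2\ln(2d)+3/e)^3$. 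Substituting into $(F^\mathrm{opt}_\mathrm{e})^2 \le 1 - \frac14 (T'_{R:E})^2$ then yields the stated bound.

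To get the intermediate inequality, I write the mutual information as a difference of conditional entropies, $I'_{R:E} = S(R)_{\rho'} - S(R|E)_{\rho'_{RE}}$. For the product state $\rho'_R\otimes\rho'_E$ we have $S(R|E) = S(R)_{\rho'}$, and both states have the same $R$-marginal. Hence
\begin{align*}
I'_{R:E} = S(R|E)_{\rho'_R\otimes\rho'_E} - S(R|E)_{\rho'_{RE}}.
\end{align*}
I then apply the Alicki--Fannes--Winter continuity bound for conditional entropy, taking $\epsilon = T'_{R:E}$. This gives
\begin{align*}
I'_{R:E} \le 2\epsilon \ln d + (1+\epsilon)\ln(1+\epsilon) - \epsilon\ln\epsilon .
\end{align*}
The crucial feature is that only the dimension $d$ of $R$ enters, not that of the environment $E$. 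The environment grows with $N$, so a bound depending on it would be useless. This is also exactly why the corollary is restricted to $d(N)=O(1)$.

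The remaining step is elementary analysis on $\epsilon\in[0,1]$. For the first term, convexity of $(1+\epsilon)\ln(1+\epsilon)$ gives $(1+\epsilon)\ln(1+\epsilon) \le 2\epsilon\ln 2$. For the second, I maximise $-\epsilon^{1/3}\ln\epsilon$: the maximum is attained at $\ln\epsilon=-3$, so $-\epsilon\ln\epsilon \le \frac{3}{e}\epsilon^{2/3}$. Combining these, and using $\epsilon\le\epsilon^{2/3}$ for $\epsilon \le 1$, gives
\begin{align*}
I'_{R:E}\le \left(2\ln(2d)+\tfrac{3}{e}\right)\epsilon^{2/3},
\end{align*}
as required.

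I expect the main obstacle to be bookkeeping rather than substance. First, I must use the version of the continuity bound that depends only on $\dim R$, and check its normalisation of $\epsilon$: trace distance with the factor $\tfrac12$ matches $T$ as defined in the paper. Second, the entropy units must be consistent. The upper bound above is derived in nats. In the SW lower bound the factor $\ln 2/2$ indicates $I'$ is measured in bits there, so the two sides need a brief remark reconciling the units, or an absorbed constant.
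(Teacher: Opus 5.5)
Your proof is correct and has the same skeleton as the paper's (Supplemental Material Sec.~II). Both use a continuity estimate $I'_{R:E}\le 2T'_{R:E}\ln d+g(T'_{R:E})$, where $g(\varepsilon)=(1+\varepsilon)\ln(1+\varepsilon)-\varepsilon\ln\varepsilon$, then the same two elementary estimates with exponent $2/3$ (giving $3/e$), then substitution into \Cref{thm:upper-bound-for-optimal-fidelity}.

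You differ in where the continuity estimate comes from. The paper quotes Shirokov's bound for conditional mutual information with $C$ trivial. That bound holds for arbitrary pairs of states and is phrased with $\ln\min\{d,d_E\}$. The paper then assumes $d\le d_E$, which is superfluous since $\min\{d,d_E\}\le d$ anyway. You instead use that $\rho'_{RE}$ and $\rho'_R\otimes\rho'_E$ share the $R$-marginal, rewrite $I'_{R:E}$ as a difference of conditional entropies $S(R|E)$, and apply Alicki--Fannes--Winter, which only sees $\dim R=d$.

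What your route buys is a self-contained justification of the single $g(\varepsilon)$ term that the paper writes. Mutual information is not concave, so the AFW argument for arbitrary pairs of states pays $2g(\varepsilon)$. That is the form in which the general conditional-mutual-information bound is stated, and it would weaken the constant to $2\ln(4d)+6/e$. The shared marginal is exactly what lets the conditional-entropy bound, with one $g(\varepsilon)$, apply, and this yields the stated constant $2\ln(2d)+3/e$.

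Your remark on units is well taken, and the paper does not address it. The lower bound (Pinsker, factor $\ln 2/2$) reads $I'_{R:E}$ in bits, while the upper bound reads it in nats.
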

Tighter upper bounds can be made using Corollary 1 of Ref. \onlinecite{shirokov2017tight}, but is not enough for finite-rate codes, as $d(N)$ increases with number of qudits $N$, making the upper bound vacuous. The proof of \Cref{cor:zero-rate-coherent-information} and further discussion is found in SM Sec. II. %\cref{sm:sec:coherent-information-sandwich}.

The first consequence of \Cref{thm:upper-bound-for-optimal-fidelity} is that it allows us to prove that the SW recovery scheme is optimal:
\begin{theorem}[Optimality of the Schumacher-Westmoreland recovery scheme] \label{thm:sw-is-optimal}
    The Schumacher-Westmoreland recovery scheme is optimal, i.e. it has the same threshold as the optimal recovery scheme,
    \begin{align}
        p^\mathrm{opt}_\mathrm{th} = p^\mathrm{SW}_\mathrm{th}.
    \end{align}
\end{theorem}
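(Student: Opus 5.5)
The plan is to compare thresholds directly using the sandwich \cref{eq:sandwich} from \Cref{thm:upper-bound-for-optimal-fidelity}. That sandwich bounds both $F^\mathrm{SW}_\mathrm{e}$ and $F^\mathrm{opt}_\mathrm{e}$ in terms of the single quantity $T'_{R:E}$, evaluated at each $(N,p)$. Throughout we take $\mathscr{L} = 1 - F_\mathrm{e}$. Since $\mathcal{R}^\mathrm{opt}$ minimises the loss pointwise in $(N,p)$, we always have $F^\mathrm{SW}_\mathrm{e} \leq F^\mathrm{opt}_\mathrm{e}$, and so $p^\mathrm{SW}_\mathrm{th} \leq p^\mathrm{opt}_\mathrm{th}$ immediately. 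The content of the theorem is the reverse inequality.

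For the reverse direction, fix $p$ with $\lim_{N\to\infty}(1 - F^\mathrm{opt}_\mathrm{e}) = 0$, that is, $F^\mathrm{opt}_\mathrm{e} \to 1$. The upper bound in \cref{eq:sandwich} gives
\begin{align*}
    (T'_{R:E})^2 \leq 4\left(1 - (F^\mathrm{opt}_\mathrm{e})^2\right),
\end{align*}
so $T'_{R:E} \to 0$ as $N \to \infty$. The lower bound in \cref{eq:sandwich}, coming from \cref{eq:sw-recoverability-bounds}, then gives $1 - F^\mathrm{SW}_\mathrm{e} \leq T'_{R:E} \to 0$. Hence every $p$ in the recoverable set of the optimal scheme is also in the recoverable set of the SW scheme. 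Taking the maximum (or supremum) over such $p$ yields $p^\mathrm{opt}_\mathrm{th} \leq p^\mathrm{SW}_\mathrm{th}$.

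A cleaner way to package the argument is to observe that, for each of the two schemes, the condition $\lim_N \mathscr{L} = 0$ holds if and only if $\lim_N T'_{R:E} = 0$. This follows because both the upper and lower bounds on each fidelity are continuous functions of $T'_{R:E}$ that equal $1$ exactly when $T'_{R:E} = 0$. The two recoverable regions therefore coincide as sets of $p$, which also handles the multi-parameter and non-monotonic generalisation mentioned in the footnote.

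The main obstacle is not in this argument itself; it lies in \Cref{thm:upper-bound-for-optimal-fidelity}, which we may assume. The only care needed here is to check that the sandwich holds uniformly for every $N$ and $p$, with no dimension-dependent constants. This is true for the trace-distance form of the bounds, unlike the coherent-information version in \Cref{cor:zero-rate-coherent-information}. The argument therefore applies even to finite-rate code families.
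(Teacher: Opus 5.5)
Your proof is correct and follows essentially the same route as the paper. The upper bound of \Cref{thm:upper-bound-for-optimal-fidelity} forces $T'_{R:E}\to 0$ whenever $F^\mathrm{opt}_\mathrm{e}\to 1$, the SW lower bound $1-T'_{R:E}\le F^\mathrm{SW}_\mathrm{e}$ then gives $F^\mathrm{SW}_\mathrm{e}\to 1$, and the converse is simply $F^\mathrm{SW}_\mathrm{e}\le F^\mathrm{opt}_\mathrm{e}$; your remarks on the coincidence of the recoverable regions and the absence of dimension-dependent constants are valid refinements of that same argument.
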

\begin{proof}
    By \Cref{thm:upper-bound-for-optimal-fidelity}, if we have $\lim_{N \rightarrow \infty} F^\mathrm{opt} = 1$, then $\lim_{N \rightarrow \infty} T'_{R:E} = 0$, and therefore $\lim_{N \rightarrow \infty} F^\mathrm{SW} = 1$. Conversely, if $\lim_{N \rightarrow \infty} F^\mathrm{opt}  < 1$, then we also have $\lim_{N \rightarrow \infty} F^\mathrm{SW} < 1$.
\end{proof}

Using a similar line of reasoning as the proof of \Cref{thm:sw-is-optimal}, the sandwich \cref{eq:sandwich} allows us to prove that
\begin{theorem}[Mutual trace distance is necessary and sufficient] \label{thm:necessary-and-sufficient}
    The mutual trace distance between the register and the environment is necessary and sufficient in determining the optimum threshold,
    \begin{align}
        \begin{aligned}
            \lim_{N \rightarrow \infty} T'_{R:E} & = 0 \Longleftrightarrow \lim_{N \rightarrow \infty} F^\mathrm{opt}_\mathrm{e} = 1, \\
            \lim_{N \rightarrow \infty} T'_{R:E} & > 0 \Longleftrightarrow \lim_{N \rightarrow \infty} F^\mathrm{opt}_\mathrm{e} < 1.
        \end{aligned}
    \end{align}
\end{theorem}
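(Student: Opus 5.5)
The plan is to derive both equivalences directly from the sandwich \cref{eq:sandwich} in \Cref{thm:upper-bound-for-optimal-fidelity},
\begin{align*}
1 - T'_{R:E} \leq F^\mathrm{opt}_\mathrm{e} \leq \sqrt{1 - \tfrac14 (T'_{R:E})^2},
\end{align*}
which holds for every $N$ and $p$. The argument is a squeeze, mirroring the proof of \Cref{thm:sw-is-optimal}. Since the limits need not exist a priori, I will phrase everything with $\liminf$ and $\limsup$. I will read ``$\lim T'_{R:E} > 0$'' as ``$\limsup_N T'_{R:E} > 0$'', i.e.\ as ``does not tend to zero'', and ``$\lim F^\mathrm{opt}_\mathrm{e} < 1$'' as ``$\liminf_N F^\mathrm{opt}_\mathrm{e} < 1$''. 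Under this reading the second line is the contrapositive of the first.

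\textbf{Sufficiency.} Suppose $T'_{R:E} \to 0$. The lower bound gives $\liminf_N F^\mathrm{opt}_\mathrm{e} \geq 1$. Since $F^\mathrm{opt}_\mathrm{e} \leq 1$ always holds (fidelity is at most one), we get $F^\mathrm{opt}_\mathrm{e} \to 1$.

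\textbf{Necessity.} Suppose $F^\mathrm{opt}_\mathrm{e} \to 1$. Rearranging the upper bound gives
\begin{align*}
(T'_{R:E})^2 \leq 4\bigl(1 - (F^\mathrm{opt}_\mathrm{e})^2\bigr) \to 0.
\end{align*}
This uses only that $t \mapsto \sqrt{1-t^2/4}$ is strictly decreasing and continuous on $[0,1]$, where the trace distance $T'_{R:E}$ lives.

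\textbf{Second line.} It then follows by negating both sides of the first line. Concretely, if $T'_{R:E} \geq \delta > 0$ along a subsequence, then $F^\mathrm{opt}_\mathrm{e} \leq \sqrt{1-\delta^2/4} < 1$ along that subsequence. Conversely, if $F^\mathrm{opt}_\mathrm{e} \leq 1-\epsilon$ along a subsequence, then $T'_{R:E} \geq \epsilon$ there.

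The only subtle point is this interpretation of the limits. The inequalities themselves are immediate once \Cref{thm:upper-bound-for-optimal-fidelity} is assumed, so I expect no real obstacle beyond stating the $\limsup$/$\liminf$ conventions carefully. Finally, I would note that the two conditions are evaluated at fixed $p$, so they characterise the set of $p$ with $F^\mathrm{opt}_\mathrm{e} \to 1$. Hence $T'_{R:E}$ detects $p^\mathrm{opt}_\mathrm{th}$ exactly as stated.
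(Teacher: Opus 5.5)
Your proposal is correct and follows the paper's own route. The paper proves this theorem with the same squeeze on the sandwich \cref{eq:sandwich} that it uses for \Cref{thm:sw-is-optimal}: the lower bound $1-T'_{R:E}\le F^\mathrm{opt}_\mathrm{e}$ gives sufficiency, the upper bound $F^\mathrm{opt}_\mathrm{e}\le\sqrt{1-\frac{1}{4}(T'_{R:E})^2}$ gives necessity, and the second line is the negation of the first. The paper tacitly assumes the limits exist, so your $\liminf$/$\limsup$ bookkeeping is a useful tightening, but it leaves the argument itself unchanged.
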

It also allows us to order the optimal thresholds of different error channels, again using the similar line of reasoning as that of \Cref{thm:sw-is-optimal}.
\begin{corollary} \label{cor:thershold-hierachy}
    Consider two error channels $\mathcal{E}_\mathrm{a}$ and $\mathcal{E}_\mathrm{b}$ with the mutual trace distance on $R:E$ bounded as
    \begin{align}
        T^{\mathrm{a}\prime}_{R:E} \leq T^{\mathrm{b}\prime}_{R:E}.
    \end{align}
    Then, the optimal thresholds for the two error channels are bounded as
    \begin{align}
        p^{\mathrm{b},\mathrm{opt}}_\mathrm{th} \leq p^{\mathrm{a},\mathrm{opt}}_\mathrm{th}.
    \end{align}

    In the case that the limits of the mutual trace distances are strictly bounded,
    \begin{align}
        \lim_{N \rightarrow \infty} T^{\mathrm{a}\prime}_{R:E} < \lim_{N \rightarrow \infty}  T^{\mathrm{b}\prime}_{R:E},
    \end{align}
    the optimal thresholds can also be strictly bounded as
    \begin{align}
        p^{\mathrm{b},\mathrm{opt}}_\mathrm{th} < p^{\mathrm{a},\mathrm{opt}}_\mathrm{th}.
    \end{align}
\end{corollary}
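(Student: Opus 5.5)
The proof uses only the sandwich \cref{eq:sandwich} and the definition of the threshold, in the same way as the proof of \Cref{thm:sw-is-optimal}. Write $F^{\mathrm{a},\mathrm{opt}}_\mathrm{e}$ and $F^{\mathrm{b},\mathrm{opt}}_\mathrm{e}$ for the optimal entanglement fidelities under the two channels. The hypothesis $T^{\mathrm{a}\prime}_{R:E} \leq T^{\mathrm{b}\prime}_{R:E}$ is read pointwise, i.e. for every $N$ and $p$.

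\emph{Non-strict ordering.} It suffices to show that every $p$ at which scheme b corrects is also correctable for scheme a.
\begin{itemize}
\item Take $p$ with $\lim_{N\rightarrow\infty} F^{\mathrm{b},\mathrm{opt}}_\mathrm{e} = 1$.
\item The upper bound of \cref{eq:sandwich} for channel b gives $(T^{\mathrm{b}\prime}_{R:E})^2 \leq 4\bigl(1-(F^{\mathrm{b},\mathrm{opt}}_\mathrm{e})^2\bigr) \rightarrow 0$. This is the forward direction of \Cref{thm:necessary-and-sufficient}.
\item Since $0 \leq T^{\mathrm{a}\prime}_{R:E} \leq T^{\mathrm{b}\prime}_{R:E}$, we get $T^{\mathrm{a}\prime}_{R:E} \rightarrow 0$.
\item The lower bound $1 - T^{\mathrm{a}\prime}_{R:E} \leq F^{\mathrm{a},\mathrm{opt}}_\mathrm{e}$ then forces $F^{\mathrm{a},\mathrm{opt}}_\mathrm{e} \rightarrow 1$.
\end{itemize}
So the set of $p$ with vanishing loss for b is contained in the set for a. Taking the maximum (supremum) in the threshold definition gives $p^{\mathrm{b},\mathrm{opt}}_\mathrm{th} \leq p^{\mathrm{a},\mathrm{opt}}_\mathrm{th}$.

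\emph{Strict ordering.} I would aim to find a $p$ that is correctable for a but not for b.
\begin{itemize}
\item The strict inequality of the limits gives $\lim_N T^{\mathrm{b}\prime}_{R:E} > \lim_N T^{\mathrm{a}\prime}_{R:E} \geq 0$.
\item At such a $p$, b fails by \Cref{thm:necessary-and-sufficient}.
\item If additionally $\lim_N T^{\mathrm{a}\prime}_{R:E} = 0$ at that $p$, then a succeeds there, and hence $p^{\mathrm{b},\mathrm{opt}}_\mathrm{th} < p \leq p^{\mathrm{a},\mathrm{opt}}_\mathrm{th}$.
\end{itemize}

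This is the main obstacle. Strict inequality of the limits does not by itself place a $p$ strictly between the two thresholds: both limits could be positive and both channels would then fail. I would therefore interpret the hypothesis as holding in a neighbourhood of $p^{\mathrm{b},\mathrm{opt}}_\mathrm{th}$, in particular at $p = p^{\mathrm{b},\mathrm{opt}}_\mathrm{th}$ or just above it.
\begin{itemize}
\item At $p = p^{\mathrm{b},\mathrm{opt}}_\mathrm{th}$, the threshold definition says b succeeds, so $\lim_N T^{\mathrm{b}\prime}_{R:E} = 0$.
\item That would violate the strict inequality $\lim_N T^{\mathrm{b}\prime}_{R:E} > \lim_N T^{\mathrm{a}\prime}_{R:E} \geq 0$, so the thresholds cannot coincide.
\item Combined with the non-strict ordering, this gives $p^{\mathrm{b},\mathrm{opt}}_\mathrm{th} < p^{\mathrm{a},\mathrm{opt}}_\mathrm{th}$.
\end{itemize}
The care needed is to state exactly at which $p$ the strict hypothesis is imposed, and to use monotonicity of $\mathscr{L}$ in $p$ so that the correctable sets are intervals.
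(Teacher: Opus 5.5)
Your non-strict argument is the one the paper intends. It is the proof of \Cref{thm:sw-is-optimal} run across two channels: the upper bound of \cref{eq:sandwich} for b, then the domination $T^{\mathrm{a}\prime}_{R:E}\le T^{\mathrm{b}\prime}_{R:E}$, then the SW lower bound for a. So the correctable set of b lies inside that of a, and that part is fine.

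The gap is in the strict part. Your diagnosis is correct. By \Cref{thm:necessary-and-sufficient} the sandwich only detects whether $\lim_N T'_{R:E}$ vanishes, so an ordering between two positive limits carries no threshold information. Your rescue, however, does not work.

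At $p=p^{\mathrm{b},\mathrm{opt}}_\mathrm{th}$ (if the maximum is attained) you get $\lim_N T^{\mathrm{b}\prime}_{R:E}=0$. This contradicts the strict hypothesis regardless of what channel a does. The assumption $p^{\mathrm{a},\mathrm{opt}}_\mathrm{th}=p^{\mathrm{b},\mathrm{opt}}_\mathrm{th}$ is never used, so ``the thresholds cannot coincide'' is a non sequitur. What you have actually shown is that the hypothesis can never hold at b's threshold, so under that reading the corollary is vacuous rather than proved. Under the ``just above'' reading, b fails there, consistently with the hypothesis, but nothing forces $\lim_N T^{\mathrm{a}\prime}_{R:E}=0$, so a may fail there too.

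In fact no hypothesis of the form ``strict ordering of the limits above b's threshold'' can suffice. Let $\mathcal{E}_\mathrm{b}^{(N,p)}=\mathcal{E}_\mathrm{a}^{(N,p)}$ for $p\le p^{\mathrm{a},\mathrm{opt}}_\mathrm{th}$. For larger $p$, let $\mathcal{E}_\mathrm{b}$ be the replacement channel that moves $Q$ wholesale into $E$. Since $\rho'_{RE}=(\mathrm{id}_R\otimes\mathcal{E}^c)\Phi_{RQ}$ and $\rho'_R\otimes\rho'_E=(\mathrm{id}_R\otimes\mathcal{E}^c)(\rho_R\otimes\rho_Q)$, data processing shows this channel maximises $T'_{R:E}$, with value $1-1/d^2$. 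Consequently:
\begin{itemize}
\item $T^{\mathrm{a}\prime}_{R:E}\le T^{\mathrm{b}\prime}_{R:E}$ holds for all $N,p$;
\item the loss of b is still non-decreasing in $p$, since above threshold it equals $1-1/d$, the largest possible value;
\item the limits are strictly ordered at every $p$ above threshold where $\lim_N T^{\mathrm{a}\prime}_{R:E}<1-1/d^2$;
\item yet the two thresholds coincide.
\end{itemize}

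The missing ingredient is exactly your conditional bullet. If some $p$ has $\lim_N T^{\mathrm{a}\prime}_{R:E}=0<\lim_N T^{\mathrm{b}\prime}_{R:E}$, then \Cref{thm:necessary-and-sufficient} and monotonicity of $\mathscr{L}$ in $p$ give $p^{\mathrm{b},\mathrm{opt}}_\mathrm{th}<p\le p^{\mathrm{a},\mathrm{opt}}_\mathrm{th}$. Conversely, any $p$ strictly between the thresholds is such a point. State and prove the strict part under that hypothesis. The sandwich reasoning the paper points to cannot deliver the strict claim from a strict ordering of the limits alone.
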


For $d = O(1)$, \Cref{cor:zero-rate-coherent-information} allows for similar statements as \Cref{thm:necessary-and-sufficient} and \Cref{cor:thershold-hierachy} for $I'_{R:E}$:
\begin{theorem}
    For codes that encode a constant amount of quantum information, $d(N) = O(1)$, \Cref{thm:necessary-and-sufficient} and \Cref{cor:thershold-hierachy} hold with the replacement of the mutual trace distance with the decrease in coherent information, $T'_{R:E} \rightarrow I'_{R:E} = I_\mathrm{c} - I_\mathrm{c}'$.
\end{theorem}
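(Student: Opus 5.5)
The plan is to rerun the arguments for \Cref{thm:necessary-and-sufficient} and \Cref{cor:thershold-hierachy}, using the sandwich of \Cref{cor:zero-rate-coherent-information} in place of \cref{eq:sandwich}. The point is that when $d(N) = O(1)$ the constant $c_d := 4(2\ln(2d)+3/e)^3$ is bounded uniformly in $N$. We may take it to be eventually constant, or simply bounded above by some $C<\infty$. The sandwich then reads
\begin{align}
    1 - \sqrt{\tfrac{\ln 2}{2} I'_{R:E}} \leq F^\mathrm{opt}_\mathrm{e} \leq \sqrt{1 - (I'_{R:E})^3/C}.
\end{align}
Both bounding functions are continuous and strictly monotone in $I'_{R:E}$, and both equal $1$ exactly at $I'_{R:E}=0$.

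For the analogue of \Cref{thm:necessary-and-sufficient}, the forward direction is immediate. If $I'_{R:E}\to 0$, the left bound tends to $1$, so $F^\mathrm{opt}_\mathrm{e}\to 1$. For the converse, suppose $F^\mathrm{opt}_\mathrm{e}\to 1$. The right inequality gives $(I'_{R:E})^3 \leq C(1-(F^\mathrm{opt}_\mathrm{e})^2)\to 0$, hence $I'_{R:E}\to 0$.

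The strict-positivity statements are the contrapositives. The one subtlety is that the limits may not exist. I will handle this in the same sense as the paper, reading ``$\lim > 0$'' as the negation of ``$\lim = 0$'', i.e. $\limsup>0$. Along a subsequence where $I'_{R:E}\geq \epsilon$, the upper bound gives $F^\mathrm{opt}_\mathrm{e}\leq\sqrt{1-\epsilon^3/C}<1$. Conversely, along a subsequence where $F^\mathrm{opt}_\mathrm{e}\leq 1-\delta$, the lower bound forces $I'_{R:E}\geq \tfrac{2}{\ln 2}\delta^2$.

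For the analogue of \Cref{cor:thershold-hierachy}, suppose $I^{\mathrm{a}\prime}_{R:E}\leq I^{\mathrm{b}\prime}_{R:E}$. For any $p<p^{\mathrm{b},\mathrm{opt}}_\mathrm{th}$ we have $F^{\mathrm{b},\mathrm{opt}}_\mathrm{e}\to 1$, so $I^{\mathrm{b}\prime}_{R:E}\to 0$ by the previous step. Then $I^{\mathrm{a}\prime}_{R:E}\to 0$, and so $F^{\mathrm{a},\mathrm{opt}}_\mathrm{e}\to 1$. This means $p\leq p^{\mathrm{a},\mathrm{opt}}_\mathrm{th}$, which gives the weak ordering.

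For the strict version, the strict inequality of limits at the relevant $p$ forces $\lim I^{\mathrm{b}\prime}_{R:E}>0$. If we evaluate at $p=p^{\mathrm{a},\mathrm{opt}}_\mathrm{th}$, or in a neighbourhood of it, this shows the b-channel is past its threshold while the a-channel, whose $I'$ can still vanish, is not. I expect to reproduce exactly whatever threshold-boundary argument the paper uses for the trace-distance version, since only the monotone two-way bound is used there.

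The main obstacle is bookkeeping rather than analysis. First, we must ensure $d(N)=O(1)$ gives a uniform $C$, which it does because $c_d$ is increasing in $d$. Second, the strict corollary depends on the precise meaning of the hypothesis on limits, i.e. at which $p$ it is assumed. I would first try to copy the reasoning of \Cref{cor:thershold-hierachy} verbatim, substituting the monotone functions above.
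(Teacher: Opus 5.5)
Your proposal is correct and takes the same route as the paper. The paper simply substitutes the sandwich of \Cref{cor:zero-rate-coherent-information} for \cref{eq:sandwich}, using that its constant $4(2\ln(2d)+3/e)^3$ stays bounded when $d(N)=O(1)$, and reruns the reasoning of \Cref{thm:necessary-and-sufficient} and \Cref{cor:thershold-hierachy}; you do this with more care about non-existent limits than the paper, and your reading of the strict ordering (imposing the strict-limit hypothesis at $p=p^{\mathrm{a},\mathrm{opt}}_\mathrm{th}$, where it forces the b-channel to fail) is the only one under which that clause makes sense, in the paper as well.
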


Let us now prove a few statements that feel intuitively true.

\begin{proposition} \label{lem:meas-syndromes-make-it-worse}
    For a given noise channel $\mathcal{E}$, the optimal fidelity for error correction after application of another channel $\mathcal{M}$ is upper bounded by the optimal fidelity without $\mathcal{M}$,
    \begin{align}
        F_\mathrm{e}^\mathrm{opt}(\mathcal{E}) \geq F_\mathrm{e}^\mathrm{opt}(\mathcal{M} \mathcal{E}).
    \end{align}
\end{proposition}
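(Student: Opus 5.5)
The argument is a direct variational one: any recovery that acts after $\mathcal{M}$ gives a valid recovery for $\mathcal{E}$ alone, namely the composition $\mathcal{R}\,\mathcal{M}$. The optimum over a larger set of channels can only be larger.

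Concretely, let $\mathcal{R}^\mathrm{opt}_{\mathcal{M}\mathcal{E}}$ be an optimal recovery channel for the composite noise $\mathcal{M}\mathcal{E}$. Such a channel exists because the set of channels on $Q$ is compact and $F_\mathrm{e}$ is continuous. Otherwise, work with a supremum and an $\epsilon$-optimal channel. Then define the candidate recovery for $\mathcal{E}$ as $\tilde{\mathcal{R}} := \mathcal{R}^\mathrm{opt}_{\mathcal{M}\mathcal{E}}\,\mathcal{M}$. This is a composition of CPTP maps acting only on $Q$, so it is itself a legitimate recovery channel. The point to check is that $\mathcal{M}$ acts on the physical system $Q$ and not on the reference $R$. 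If $\mathcal{M}$ changes the output space (e.g.\ a syndrome measurement writing outcomes to a classical register), the composition still maps $Q$ back to $Q$. We simply allow recovery channels with any intermediate space, which the definition of $\mathcal{R}$ as an arbitrary channel already permits.

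Then, by definition of the optimum,
\begin{align}
F_\mathrm{e}^\mathrm{opt}(\mathcal{E}) = \max_{\mathcal{R}} F\left[\mathcal{R}_Q\mathcal{E}_Q\Phi_{QR},\Phi_{QR}\right] \geq F\left[\tilde{\mathcal{R}}_Q\mathcal{E}_Q\Phi_{QR},\Phi_{QR}\right] = F_\mathrm{e}^\mathrm{opt}(\mathcal{M}\mathcal{E}),
\end{align}
where the last equality is just associativity of composition: $\tilde{\mathcal{R}}\,\mathcal{E} = \mathcal{R}^\mathrm{opt}_{\mathcal{M}\mathcal{E}}(\mathcal{M}\mathcal{E})$.

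There is no serious obstacle. The only subtle point is making precise that the class of admissible recovery channels is closed under precomposition with $\mathcal{M}$, so that $\tilde{\mathcal{R}}$ is admissible. As an alternative check, one could use the data-processing structure of \Cref{thm:upper-bound-for-optimal-fidelity}. However, the direct argument gives the inequality exactly rather than up to the sandwich constants, so I would use it.
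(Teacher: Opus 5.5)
Your proposal is correct and is essentially the paper's own argument. The paper likewise observes that $\tilde{\mathcal{R}} = \mathcal{R}^\mathrm{opt}_{\mathcal{M}\mathcal{E}}\,\mathcal{M}$ is just another admissible recovery channel for $\mathcal{E}$, so its fidelity lower-bounds $F_\mathrm{e}^\mathrm{opt}(\mathcal{E})$; your remarks on existence of the optimum and on $\mathcal{M}$ acting only on $Q$ simply make that step explicit.
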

\begin{proof}
    Consider the optimal recovery channel following measuring the syndromes first, $\mathcal{M}$. Then we have some optimal channel $\mathcal{R}^\mathrm{opt}_{\mathcal{M}\mathcal{E}}$ with fidelity $F_\mathrm{e}(\mathcal{R}^\mathrm{opt}_{\mathcal{M}\mathcal{E}} \mathcal{M} \mathcal{E})$. Now $\tilde{\mathcal{R}} = \mathcal{R}^\mathrm{opt}_\mathrm{meas}\mathcal{M}$ is simply another recovery channel, and therefore lower bounds $F_\mathrm{e}(\mathcal{R}^\mathrm{opt}_{\mathcal{E}} \mathcal{E})$.
\end{proof}
Setting $\mathcal{M}$ as the syndrome measurement channel, this means that the ML decoder threshold, which is a particular recovery scheme applied after measurements of syndromes, is lower bounded by the optimal threshold,
\begin{align}
    p^\mathrm{ML}_\mathrm{th} \leq p^\mathrm{opt}_\mathrm{th}.
\end{align}

\begin{proposition} \label{lem:unitary-dont-change-threshold}
    Given a noise channel $\mathcal{E}$, The optimal fidelity for error correction after applying a unitary $\mathcal{U}$ after the noise channel is the same that without the unitary,
    \begin{align}
        F_\mathrm{e}^\mathrm{opt}(\mathcal{E}) = F_\mathrm{e}^\mathrm{opt}(\mathcal{U} \mathcal{E}).
    \end{align}
\end{proposition}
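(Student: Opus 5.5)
The plan is to prove the two inequalities $F_\mathrm{e}^\mathrm{opt}(\mathcal{U}\mathcal{E}) \leq F_\mathrm{e}^\mathrm{opt}(\mathcal{E})$ and $F_\mathrm{e}^\mathrm{opt}(\mathcal{E}) \leq F_\mathrm{e}^\mathrm{opt}(\mathcal{U}\mathcal{E})$ separately. Both follow from the fact that the optimisation defining $\mathcal{R}^\mathrm{opt}$ ranges over \emph{all} channels acting on $Q$. That set is closed under pre-composition with the unitary channel $\mathcal{U}$ and with its inverse $\mathcal{U}^{-1} = \mathcal{U}^\dagger(\cdot) = U^\dagger (\cdot) U$, which is itself a valid channel on $Q$.

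The first inequality is exactly \Cref{lem:meas-syndromes-make-it-worse} with the choice $\mathcal{M} = \mathcal{U}$. Explicitly, if $\mathcal{R}^\mathrm{opt}_{\mathcal{U}\mathcal{E}}$ is optimal for $\mathcal{U}\mathcal{E}$, then $\tilde{\mathcal{R}} = \mathcal{R}^\mathrm{opt}_{\mathcal{U}\mathcal{E}}\mathcal{U}$ is a channel acting only on $Q$. Hence
\begin{align}
F_\mathrm{e}^\mathrm{opt}(\mathcal{U}\mathcal{E}) = F_\mathrm{e}(\tilde{\mathcal{R}}\mathcal{E}) \leq F_\mathrm{e}^\mathrm{opt}(\mathcal{E}).
\end{align}

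For the reverse inequality, I would apply \Cref{lem:meas-syndromes-make-it-worse} again, this time to the noise channel $\mathcal{E}' = \mathcal{U}\mathcal{E}$ with $\mathcal{M} = \mathcal{U}^{-1}$. Since $\mathcal{U}^{-1}\mathcal{U}\mathcal{E} = \mathcal{E}$, this gives $F_\mathrm{e}^\mathrm{opt}(\mathcal{E}) \leq F_\mathrm{e}^\mathrm{opt}(\mathcal{U}\mathcal{E})$. Equivalently, the recovery $\mathcal{R}^\mathrm{opt}_{\mathcal{E}}\mathcal{U}^{-1}$ is admissible for $\mathcal{U}\mathcal{E}$ and attains $F_\mathrm{e}(\mathcal{R}^\mathrm{opt}_{\mathcal{E}}\mathcal{E})$. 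Combining the two inequalities gives the claimed equality.

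There is no real obstacle here. The only points to check are that $\mathcal{U}^{-1}$ is CPTP and acts only on $Q$ (it does, being conjugation by a unitary on $Q$), and that the optimum is attained, or else that one works with suprema throughout. Attainment holds because the set of channels on a finite-dimensional $Q$ is compact and $F_\mathrm{e}$ is continuous; in any case the argument goes through verbatim with suprema.
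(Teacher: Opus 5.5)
Your proof is correct and matches the paper's argument essentially verbatim. One direction follows from the reasoning of \Cref{lem:meas-syndromes-make-it-worse} with $\mathcal{M}=\mathcal{U}$, and the other from the fact that $\mathcal{R}^\mathrm{opt}_{\mathcal{E}}\mathcal{U}^{-1}$ is an admissible recovery for $\mathcal{U}\mathcal{E}$ attaining $F_\mathrm{e}(\mathcal{R}^\mathrm{opt}_{\mathcal{E}}\mathcal{E})$; your added remark that the optimum is attained (or that suprema suffice) makes explicit a point the paper leaves implicit.
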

\begin{proof}
    First, we have $F_\mathrm{e}(\mathcal{R}^\mathrm{opt}_{\mathcal{E}} \mathcal{E}) \geq F_\mathrm{e}(\mathcal{R}^\mathrm{opt}_{\mathcal{U} \mathcal{E}} \mathcal{U} \mathcal{E})$ by the same reasoning used in \Cref{lem:meas-syndromes-make-it-worse}. Now $F_\mathrm{e}(\mathcal{R}^\mathrm{opt}_{\mathcal{E}} \mathcal{E}) = F_\mathrm{e}(\mathcal{R}^\mathrm{opt}_{\mathcal{E}} \mathcal{U}^{-1} \mathcal{U}\mathcal{E})$, and $\mathcal{R}^\mathrm{opt}_{\mathcal{E}} \mathcal{U}^{-1} = \tilde{\mathcal{R}}$ is just another recovery map for $\mathcal{U} \mathcal{E}$, and therefore $F_\mathrm{e}(\mathcal{R}^\mathrm{opt}_{\mathcal{U} \mathcal{E}} \mathcal{U} \mathcal{E}) \geq F_\mathrm{e}(\mathcal{R}^\mathrm{opt}_{\mathcal{E}} \mathcal{E})$.
\end{proof}

\vspace{1.5em}
\paragraph*{Optimal quantum error correction on qubit CSS codes ---}Now that it is established that both the SW and Petz recovery schemes are optimal, we study their structures. To start, we consider Pauli error channels. In this case, it is intuitive that measuring the syndromes then using a Bayes-optimal decoder is the optimal recovery scheme \cite{dennis2002topological,kim2025measurement}. Here, we prove that the Bayes optimal sampling decoder (i.e. one where constructs the posterior then samples from it) is an optimal recovery scheme for qubit CSS codes, such as the 2D surface code [SM Secs. VII A, C] % [SM \cref{sm:sec:pauli-petz,sm:sec:pauli-sw}]
\begin{proposition} \label{thm:pauli-error-is-bayesian-inference}
    For qubit CSS codes, the Petz and SW recovery channel $\mathcal{R}^\mathrm{Petz}_\mathrm{Pauli}$ for Pauli noise $\mathcal{E}_\mathrm{Pauli} [\cdot] = \sum_P p(P) P (\cdot) P^\dagger$, where $P$ is a Pauli string, is equal to a syndrome measurement channel followed by a sampling decoder. Both recovery schemes are optimal.
    \begin{align}
        \mathcal{R}^{\mathrm{Petz}, \mathrm{SW}}_\mathrm{Pauli} = \mathcal{D}^{\mathrm{Petz}, \mathrm{SW}}_\mathrm{sampl.} \mathcal{M}_\mathrm{syndr.}
    \end{align}
    For the Petz recovery, the sampling decoder samples from the Bayes-optimal posterior distribution
    \begin{align}
        p(\ell \vert s) = \frac{p(\ell, s)}{\sum_{\ell'} p(\ell', s)}.
    \end{align}
    This is in contrast to the SW recovery, which samples from
    \begin{align}
        q(\ell \vert s) = \frac{p(\ell, s)^2}{\sum_{\ell'} p(\ell', s)^2}.
    \end{align}
    Here, $p(s, \ell)$ is the joint probability of syndromes $s$ and logical error $\ell$.
\end{proposition}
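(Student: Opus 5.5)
Optimality of both schemes is already settled: the Petz scheme by the Corollary on optimality of the Petz recovery scheme, and SW by \Cref{thm:sw-is-optimal}. So the work is to compute the two channels explicitly for $\mathcal{E}_\mathrm{Pauli}$ and show that each factorises as a syndrome measurement followed by a sampling decoder.

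\emph{Petz channel.} I will use $\mathcal{R}^\mathrm{Petz}(\cdot) = \Pi \mathcal{E}^\dagger\big(\mathcal{E}(\Pi/2^K)^{-1/2}(\cdot)\,\mathcal{E}(\Pi/2^K)^{-1/2}\big)\Pi$, up to normalisation, with $\Pi$ the code projector.

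1. Decompose every Pauli string as $P = L_\ell E_s S$. Here $E_s$ is a fixed pure-error representative of syndrome $s$, $L_\ell$ is a logical representative, and $S$ is a stabiliser.
2. Since $S\Pi S = \Pi$ and $L_\ell \Pi L_\ell^\dagger = \Pi$, the noisy code projector becomes
\begin{align}
\mathcal{E}(\Pi) = \sum_s p(s)\, \Pi_s, \qquad \Pi_s = E_s \Pi E_s^\dagger,
\end{align}
where $p(s)=\sum_\ell p(\ell,s)$ and $p(\ell,s)$ sums $p(P)$ over the stabiliser coset. The $\Pi_s$ are the orthogonal syndrome projectors.
3. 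This operator is diagonal in the syndrome decomposition, so $\mathcal{E}(\Pi)^{-1/2} = \sum_s p(s)^{-1/2}\Pi_s$. Hence the first step of the Petz map is exactly the syndrome measurement $\mathcal{M}_\mathrm{syndr.}$, with a weight $p(s)^{-1}$ attached to each outcome.
4. The adjoint $\mathcal{E}^\dagger$ sends $\Pi_s \rho\, \Pi_s$ to $\sum_P p(P) P^\dagger \Pi_s\rho\Pi_s P$. Only terms with syndrome $s$ survive the final projection by $\Pi$.
5. Regrouping those terms by logical class gives
\begin{align}
\sum_\ell \frac{p(\ell,s)}{p(s)}\, (L_\ell E_s)^\dagger\, \Pi_s\rho\Pi_s\, (L_\ell E_s).
\end{align}
This is the correction $L_\ell E_s$ applied with probability $p(\ell\vert s)$, i.e. sampling from the Bayes posterior.

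One point needs care: different stabiliser representatives within a coset must give the same action on the code space. This holds because $S\Pi = \Pi$. The overall $2^K$ normalisations cancel.

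\emph{SW channel.} I will follow SM Sec. IV, which builds the channel from the purification $\rho'_{RQE}$. It measures the environment-conditioned subspaces, i.e. the supports of $\rho'_{QE}$ decomposed along the Kraus index, and rotates back.

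1. For Pauli noise the Kraus operators are $\sqrt{p(P)}\,P$. The dilated state is therefore $\sum_P \sqrt{p(P)}\, P\ket{\Phi}\ket{P}_E$.
2. Group the environment states by coset, $\ket{\ell,s}_E \propto \sum_{P\in(\ell,s)} \sqrt{p(P)}\ket{P}$, with norm $\sqrt{p(\ell,s)}$.
3. The key expected feature is that SW involves the square root of the Gram/density operator on the environment rather than its inverse square root. This produces weights $p(\ell,s)^2$ instead of $p(\ell,s)$, after normalising within each syndrome sector.
4. The output structure is then a projection onto $\Pi_s$ followed by the randomly chosen correction $L_\ell E_s$ with probability $q(\ell\vert s)\propto p(\ell,s)^2$.

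The main obstacle is the SW computation. Its definition involves the spectral decomposition of $\rho'_E$, and with many Pauli strings sharing a coset there are degeneracies and cross terms between different $P$ in the same coset. My first attempt is to work in the coset basis $\ket{\ell,s}_E$ from the outset. The cross terms between distinct cosets vanish because distinct syndromes are orthogonal on $Q$, and distinct logicals are orthogonal after tracing against the maximally entangled $R$, given that the Bell state is used. That should reduce $\rho'_E$ to a diagonal matrix with entries $p(\ell,s)$ and make the squaring explicit.
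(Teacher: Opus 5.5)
Your Petz computation is correct and is essentially the paper's argument in SM Sec.~VII A. You phrase it with the operator identities $P=L_\ell E_s S$ and $\Pi_s=E_s\Pi E_s^\dagger$ rather than explicit anyon-number kets. Two small points. First, the sandwich $\mathcal{E}(\Pi)^{-1/2}(\cdot)\,\mathcal{E}(\Pi)^{-1/2}$ alone still has $s\neq s'$ cross terms; only the outer $\Pi P^\dagger\cdots P\Pi$ removes them, as your step 4 says. Second, syndromes with $p(s)=0$ need the generalised inverse of SM Sec.~III. Citing the Petz corollary and \Cref{thm:sw-is-optimal} for optimality is also what the paper does.

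The SW half has a genuine gap. The weights $q(\ell\vert s)\propto p(\ell,s)^2$ are asserted as an ``expected feature'', and the route you propose would not produce them. Your diagonalisation of $\rho'_E$ in the coset basis is right: it is the paper's (S124), with the $\ell_x\neq\ell_x'$ cross terms removed by the uniform $\lambda_k$. But the SW channel is not determined by $\rho'_E$. It measures on $Q$ together with ancillas $R'R''$ prepared in $|\eta\rangle$, not ``along the Kraus index''. The measured basis $|\phi^{r,(\ell,s)}\rangle_{Q'}$ comes from the Uhlmann-optimal unitary that aligns the purification $|\tilde\Psi_{Q'RE}\rangle$ of $\rho'_R\otimes\rho'_E$ with $|\Psi'_{QRE}\rangle|\eta\rangle$. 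The Kraus amplitudes are overlaps of that basis with $|\eta\rangle$.

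The missing step is to compute $M_{Q'}=\mathrm{tr}_{RE}\big[|\tilde\Psi\rangle\langle\Psi'|\langle\eta|\big]$ and its SVD. $M_{Q'}$ is block diagonal in $s$. It also needs a $\mathrm{CNOT}_{R'\to Q_L}$ ($Q_L$ the logical factor of $Q$) and a controlled-$Z$ between $Q_L$ and an extra register $R''$ holding $\ell_x$. That register is needed because $|\overline{\ell_z},s\rangle_Q$ does not resolve phase-type logicals, which enter only through the sign $(-1)^{(k\oplus\ell_z)\cdot\ell_x}$. After these gates, $M_{Q'}$ is rank one on the ancillas, with vector $|w_{\ell_z,s}\rangle\propto\sum_{k,\ell_x}p(\ell_x,\ell_z\oplus k,s)|k\rangle|\ell_x\rangle$ of norm $\mu_s=\big(\sum_{\ell'}p(\ell',s)^2\big)^{1/2}$.

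The remaining steps are:
\begin{itemize}
\item pad this to a unitary;
\item check that the $|\phi^{r,(\ell,s)}\rangle$ are orthonormal;
\item apply measurement plus correction to $\rho\otimes|\eta\rangle\langle\eta|$ and trace out $R'R''$.
\end{itemize}
One then finds the amplitude $\langle r,\ell_x|w_{\ell_z\oplus r,s}\rangle=p(\ell,s)/\mu_s$. It is linear in $p$ because $M_{Q'}$ contains the squared norms $p(\ell,s)$ of your unnormalised coset vectors $\sum_{P\in(\ell,s)}\sqrt{p(P)}|P\rangle_E$, shared by both purifications. Squaring gives $q(\ell\vert s)$, and the same manipulation as your Petz step 5 gives Kraus operators $\sqrt{q(\ell\vert s)}\,(L_\ell E_s)^\dagger\Pi_s$.

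Your heuristic (square root rather than inverse square root of the environment Gram operator) does not replace this computation. Nothing in it fixes the amplitude to be $p(\ell,s)/\mu_s$ rather than, say, the Petz-like $\sqrt{p(\ell,s)/p(s)}$. It also omits the $\ell_x$ register that produces the $\bar Z^{\ell_x}$ part of the correction.
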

The SW recovery scheme has an effect of picking the more likely logical sector more often. Nevertheless, it is optimal and shares the threshold with the Petz recovery scheme. Petz and SW recovery channels for Pauli errors consists of measuring syndromes then using a classifier (decoder) to match the errors. This equates $F_\mathrm{e}$ to the average classification performance of the decoder. The Bayes-optimal classifier theorem \cite{bishop2006pattern,kim2025measurement}, which states that the best classifer is one that takes the $\mathrm{argmax}$ of the Bayes-optimal posterior (i.e. ML classifier), can then be used to show
\begin{corollary}[ML decoder is optimal for Pauli errors]
    For Pauli errors on qubit CSS codes, the maximum likelihood recovery channel $\mathcal{R}^\mathrm{ML} = \mathcal{D}_\mathrm{ML} \mathcal{M}_\mathrm{syndr.}$ performs no worse than Petz and SW recovery and therefore is an optimal recovery scheme,
    \begin{align}
        F_\mathrm{e} (\mathcal{R}^\mathrm{Petz,SW}) \leq F_\mathrm{e} (\mathcal{R}^\mathrm{ML}) \implies p_\mathrm{th}^\mathrm{ML} = p_\mathrm{th}^\mathrm{opt}.
    \end{align}
\end{corollary}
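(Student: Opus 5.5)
The plan is to reduce the entanglement fidelity of any recovery of the form $\mathcal{D}\,\mathcal{M}_\mathrm{syndr.}$ to a classical classification success probability. Then the Bayes-optimal classifier theorem gives $F_\mathrm{e}(\mathcal{R}^\mathrm{Petz,SW}) \leq F_\mathrm{e}(\mathcal{R}^\mathrm{ML})$, and the threshold statement follows from a sandwich. \Cref{thm:pauli-error-is-bayesian-inference} already writes $\mathcal{R}^\mathrm{Petz,SW}_\mathrm{Pauli} = \mathcal{D}_\mathrm{sampl.}\mathcal{M}_\mathrm{syndr.}$. So it suffices to analyse decoders that, given syndrome $s$, apply a fixed correction $C_s$ times a logical Pauli $\bar L_\ell$. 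The logical Pauli is chosen with some (possibly random) rule $r(\ell\vert s)$.

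First I compute $F_\mathrm{e}$ for such a decoder. Fix a canonical representative error $E_{s}$ for each syndrome. Each Pauli error $P$ with syndrome $s$ is then $P = \bar L_{\ell} E_{s} S$, with $S$ a stabiliser, up to phase. Summing over the stabiliser group gives the joint distribution $p(\ell,s)$. On the Bell state $\Phi_{QR}$ with the code space, the channel $\mathcal{D}\mathcal{M}\mathcal{E}_\mathrm{Pauli}$ produces
\begin{align}
\sum_{s,\ell,\ell'} p(\ell,s)\, r(\ell'\vert s)\, \bar L_{\ell'}\bar L_{\ell} \Phi_{QR} \bar L_{\ell}^\dagger \bar L_{\ell'}^\dagger .
\end{align}
Since distinct logical Paulis map $\Phi_{QR}$ to orthogonal Bell states, $F_\mathrm{e}^2 = \sum_s \sum_\ell p(\ell,s)\, r(\ell\vert s)$. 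This is exactly the probability that the decoder outputs the true logical class. Hence $(F_\mathrm{e})^2$ is the average classification accuracy.

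Next, for each $s$, the quantity $\sum_\ell p(\ell,s) r(\ell\vert s)$ is maximised over distributions $r(\cdot\vert s)$ by the point mass at $\mathrm{argmax}_\ell\, p(\ell,s) = \mathrm{argmax}_\ell\, p(\ell\vert s)$. This is the ML decoder, and the maximisation is the Bayes-optimal classifier theorem. Applying it with $r = p(\ell\vert s)$ for Petz and $r = q(\ell\vert s)$ for SW gives $F_\mathrm{e}(\mathcal{R}^\mathrm{Petz,SW}) \leq F_\mathrm{e}(\mathcal{R}^\mathrm{ML})$.

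Finally, I combine this with $F_\mathrm{e}(\mathcal{R}^\mathrm{ML}) \leq F^\mathrm{opt}_\mathrm{e}$, which holds by definition of the optimum. The Petz/SW scheme is optimal by the Corollary and \Cref{thm:sw-is-optimal}. So for $p < p^\mathrm{opt}_\mathrm{th}$, $F^\mathrm{Petz}_\mathrm{e}\to 1$ forces $F^\mathrm{ML}_\mathrm{e}\to1$. For $p > p^\mathrm{opt}_\mathrm{th}$, $F^\mathrm{ML}_\mathrm{e}\le F^\mathrm{opt}_\mathrm{e}$ stays bounded away from $1$. Together these give $p^\mathrm{ML}_\mathrm{th}=p^\mathrm{opt}_\mathrm{th}$.

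The main obstacle is the first step, keeping the bookkeeping correct. This means choosing the representative corrections consistently, so that the decoder's label $\ell$ and the error's logical class refer to the same convention. It also means checking that stabiliser factors act trivially on the code-space Bell state, so that cross terms vanish by orthogonality of $\bar L_\ell\Phi_{QR}\bar L_\ell^\dagger$. The CSS structure is not essential here beyond what \Cref{thm:pauli-error-is-bayesian-inference} already used.
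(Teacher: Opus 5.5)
Your proposal is correct and follows essentially the same route as the paper. You invoke \Cref{thm:pauli-error-is-bayesian-inference} to write Petz and SW as syndrome measurement plus a sampling decoder, identify the entanglement fidelity with the decoder's classification success, apply the Bayes-optimal classifier theorem, and sandwich $F^\mathrm{Petz}_\mathrm{e} \leq F^\mathrm{ML}_\mathrm{e} \leq F^\mathrm{opt}_\mathrm{e}$ with the optimality of the Petz scheme to get $p^\mathrm{ML}_\mathrm{th} = p^\mathrm{opt}_\mathrm{th}$; you are also more precise than the paper's one-line remark in noting that, with $F[\rho,\psi]=\sqrt{\langle\psi|\rho|\psi\rangle}$, it is $F_\mathrm{e}^2$ rather than $F_\mathrm{e}$ that equals $\sum_{s,\ell} p(\ell,s)\, r(\ell\vert s)$, which is harmless for both the inequality and the threshold conclusion.
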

The ML classifier could be framed probabilistically as $p_\mathrm{ML}(\ell, s) = \delta_{\mathrm{argmax}_{\ell'} p(\ell' \vert s), \ell}$, or $\alpha \rightarrow \infty$ for $q_\alpha(\ell \vert s) = p(\ell, s)^\alpha / \sum_{\ell'} p(\ell', s)^\alpha$.

Combining \Cref{thm:pauli-error-is-bayesian-inference} with \Cref{lem:unitary-dont-change-threshold}, we can now prove the intuitive statement of the introduction [SM Sec. VI]: % \cref{sm:sec:coherent-rotations}]:
\begin{proposition}
    Assume that the maximum-likelihood decoder threshold for a qubit CSS code affected by bit-flip errors exists and is equal to $p_\mathrm{c}$. Then the separable noise-channel $\mathcal{E} = \prod_i \mathcal{E}_i$ with
    \begin{align}
        \mathcal{E}_i[\cdot] = \int d \theta p(\theta) U_\theta (\cdot) U^\dag_\theta, \quad U_\theta = e^{i\theta X},
    \end{align}
    can be written as $\mathcal{U}_\phi \mathcal{E}_\mathrm{bit-flip}$,
    where $\mathcal{U}_\phi = \prod_i \mathcal{U}_i^\phi$ is a single-site unitary channel and $\mathcal{E}_\mathrm{bit-flip}$ is a bit-flip error channel with $\tilde p = (1-\sqrt{\Delta^2 + 4t^2})/2$, $\Delta = \langle \cos^2 \theta - \sin^2 \theta \rangle$ and $t = \langle \cos \theta \sin \theta \rangle$. An optimal recovery scheme is $\mathcal{D}_\mathrm{ML} \mathcal{M}_\mathrm{syndr.} \mathcal{U}^{-1}_\phi$, with the optimal threshold $\tilde p = p_\mathrm{c} = p^\mathrm{opt}_\mathrm{th}$.
\end{proposition}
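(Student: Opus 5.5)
The argument has two parts: a single-qubit channel decomposition, then an invocation of \Cref{lem:unitary-dont-change-threshold} and \Cref{thm:pauli-error-is-bayesian-inference}.

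\emph{Single-qubit decomposition.} Since $U_\theta = \cos\theta\, I + i\sin\theta\, X$, each $U_\theta$ lies in the commutative algebra generated by $X$. Expanding
\begin{align*}
U_\theta \rho U_\theta^\dagger = \cos^2\theta\,\rho + \sin^2\theta\, X\rho X + i\cos\theta\sin\theta\,(X\rho - \rho X)
\end{align*}
and averaging over $p(\theta)$ gives
\begin{align*}
\mathcal{E}_i[\rho] = a\,\rho + b\, X\rho X + i t\,[X,\rho],
\end{align*}
with $a = \langle\cos^2\theta\rangle$, $b = \langle \sin^2\theta\rangle$, $a+b=1$, $a-b=\Delta$.

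In the $X$ eigenbasis $\ket{\pm}$, this channel leaves the diagonal entries unchanged. It multiplies the off-diagonal element $\mel{+}{\rho}{-}$ by the complex number $\lambda = (a-b) + 2it = \Delta + 2it$, up to sign and conjugation conventions.

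A bit-flip channel with flip probability $\tilde p$ multiplies this coherence by the real number $1-2\tilde p$. The phase unitary $e^{i\phi X}$ multiplies it by $e^{2i\phi}$. Both operations commute with $X$ and with each other. Matching $|\lambda| = \sqrt{\Delta^2+4t^2} = 1-2\tilde p$ and $e^{2i\phi} = \lambda/|\lambda|$ therefore gives $\mathcal{E}_i = \mathcal{U}^\phi_i\,\mathcal{E}^{(\tilde p)}_{\mathrm{bit\text{-}flip},i}$. We have $|\lambda|\le 1$ by convexity, so $\tilde p\in[0,1/2]$ is a valid probability.

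Taking the tensor product over sites gives $\mathcal{E} = \mathcal{U}_\phi \mathcal{E}_\mathrm{bit-flip}$, with all single-site factors commuting.

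\emph{Recovery and threshold.} By \Cref{lem:unitary-dont-change-threshold}, $F^\mathrm{opt}_\mathrm{e}(\mathcal{U}_\phi\mathcal{E}_\mathrm{bit-flip}) = F^\mathrm{opt}_\mathrm{e}(\mathcal{E}_\mathrm{bit-flip})$ for every $N$ and $p$. Hence the optimal threshold of $\mathcal{E}$, expressed through $\tilde p$, equals that of the bit-flip channel.

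Moreover, for any recovery $\mathcal{R}$ we have the exact identity
\begin{align*}
F_\mathrm{e}(\mathcal{R}\,\mathcal{U}_\phi^{-1}\,\mathcal{U}_\phi\,\mathcal{E}_\mathrm{bit-flip}) = F_\mathrm{e}(\mathcal{R}\,\mathcal{E}_\mathrm{bit-flip}).
\end{align*}
Choosing $\mathcal{R} = \mathcal{D}_\mathrm{ML}\mathcal{M}_\mathrm{syndr.}$, the corollary that the ML decoder is optimal for Pauli errors (built on \Cref{thm:pauli-error-is-bayesian-inference}, since bit-flip is a Pauli channel on a qubit CSS code) shows that this recovery scheme attains the optimal threshold. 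Under the stated assumption that this threshold is $p_\mathrm{c}$, we conclude $p^\mathrm{opt}_\mathrm{th}$ corresponds to $\tilde p = p_\mathrm{c}$.

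\emph{Main obstacle.} The only non-routine step is the decomposition itself. One must check that the complex coherence factor produced by the average of rotations factorises as a real damping times a pure phase. Because the channel is $X$-covariant, it is fully characterised by this single complex number, so I would verify the decomposition directly in the $\ket{\pm}$ basis.

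A secondary subtlety concerns the threshold statement. It is phrased in terms of $\tilde p$, which is a function of the distribution $p(\theta)$. Monotonicity of the loss in the noise parameter should therefore be stated with respect to $\tilde p$, which is the natural parametrisation here.
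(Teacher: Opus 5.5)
Your proof is correct. Its second half matches the paper: \Cref{lem:unitary-dont-change-threshold}, plus the ML-optimality corollary applied to the bit-flip channel, via the identity $F_\mathrm{e}(\mathcal{R}\,\mathcal{U}_\phi^{-1}\mathcal{U}_\phi\mathcal{E}_\mathrm{bit-flip}) = F_\mathrm{e}(\mathcal{R}\,\mathcal{E}_\mathrm{bit-flip})$. Where you differ is in how you obtain the single-site decomposition.

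The paper (SM Sec.~VI) writes $\mathcal{E}_i=\sum_{\mu\nu}M_{\mu\nu}\sigma_\mu(\cdot)\sigma_\nu$. It diagonalises the $2\times2$ process matrix, with diagonal entries $c^2=\langle\cos^2\theta\rangle$, $s^2=\langle\sin^2\theta\rangle$ and off-diagonal entries $\mp it$. This gives canonical Kraus operators $E_\pm=\sqrt{\lambda_\pm}(w_0^\pm I+iw_1^\pm X)$ with $\lambda_\pm=(1\pm\sqrt{\Delta^2+4t^2})/2$. It then solves for one angle $\phi$ making $e^{i\phi X}E_\pm$ proportional to $I$ and $X$, which requires checking a compatibility condition between the two eigenvectors.

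You instead use that every $U_\theta$ is diagonal in the $\ket{\pm}$ basis. The channel is therefore a Schur multiplier fixed by the single number $\langle e^{2i\theta}\rangle=\Delta+2it$, and its polar decomposition yields $\tilde p$ and $\phi$ at once. Your route buys several things:
\begin{itemize}
\item it is shorter;
\item it gives $\tilde p\in[0,1/2]$ immediately from $|\langle e^{2i\theta}\rangle|\le1$, where the paper needs $t^2\le c^2s^2$ for $\lambda_-\ge0$;
\item it gives $2\phi=\arg\langle e^{2i\theta}\rangle$ in closed form;
\item it leaves no ambiguity that the flip probability is the smaller eigenvalue $\lambda_-$, with the identity component carrying weight $1-\tilde p$.
\end{itemize}
The paper's process-matrix set-up, by contrast, is written for arbitrary rotation axes $\vec n$. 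There the channel is no longer diagonal in a fixed basis, and your one-number argument has no direct analogue.

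A minor wording point: the property you actually use is that all Kraus operators are diagonal in the $\ket{\pm}$ basis. This is stronger than $X$-covariance, since a covariant channel may still transfer population between $\ket{+}$ and $\ket{-}$. You verify directly that the diagonal entries are preserved, so this is not a gap.
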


A more interesting case is the recovery for the channel consisting of single-site amplitude damping. Its dilated channel on a single qubit can be written as
\begin{align}
    & V_{QE \leftarrow Q}^\mathrm{ad} \ket{\psi}_Q \nonumber \\
    & = \left(\frac{1 + \sqrt{1-p}}{2} I + \frac{1-\sqrt{1-p}}{2} Z \right) \ket{\psi}_Q \otimes \ket{0}_E \nonumber \\
    & + \frac{\sqrt{p}}{2} (X + ZX) \ket{\psi}_Q \otimes \ket{1}_E.
\end{align}
This can be interpreted in the following way. The $X$ error is incoherent (once tracing over the environment $E$). However, when the bit-flip error occurs, it also creates a coherent superposition of $Z$ errors. Hence, one may expect that the optimal recovery scheme is to measure the $Z$-syndromes (which finds $X$-type errors) but perform a coherent operation for $X$-errors. This is what we find upon explicit construction of optimal recovery schemes [SM Secs VII D, E]: % \cref{sm:sec:ad-petz,sm:sec:ad-sw}]:
\begin{proposition}
    For the ampltiude damping channel and a qubit CSS code, both Petz and SW recovery channels consist of measuring the $Z$-type syndromes,
    \begin{align}
        \mathcal{R}^\mathrm{Petz, SW}_\mathrm{AD} = \tilde{\mathcal{R}}^\mathrm{Petz, SW}_\mathrm{AD} \mathcal{M}_{Z\mathrm{\, syndr.}}.
    \end{align}
\end{proposition}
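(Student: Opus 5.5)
The plan is to build both recovery channels explicitly from the Kraus operators of the $N$-fold amplitude damping channel. We then show that each Kraus operator of the recovery commutes with, or is block diagonal with respect to, the projectors $\Pi_{s_Z}$ onto the joint eigenspaces of the $Z$-type stabilisers. Once every Kraus operator $R_k$ of $\mathcal{R}$ satisfies $R_k = \sum_{s_Z} R_k \Pi_{s_Z}$ with orthogonal ranges, or more generally $\mathcal{R} = \mathcal{R}\,\mathcal{M}_{Z\,\mathrm{syndr.}}$ where $\mathcal{M}_{Z\,\mathrm{syndr.}}[\cdot]=\sum_{s_Z}\Pi_{s_Z}(\cdot)\Pi_{s_Z}$, we can set $\tilde{\mathcal{R}} = \mathcal{R}$ restricted appropriately and the factorisation follows. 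So the target identity is $\mathcal{R}^{\mathrm{Petz,SW}}_\mathrm{AD}\,\mathcal{M}_{Z\,\mathrm{syndr.}} = \mathcal{R}^{\mathrm{Petz,SW}}_\mathrm{AD}$.

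First, expand the single-site dilation given above into Kraus operators $A_0 = \tfrac{1+\sqrt{1-p}}{2} I + \tfrac{1-\sqrt{1-p}}{2} Z$ and $A_1 = \tfrac{\sqrt p}{2}(X+ZX)$. Take tensor products to get $K_{\vec e} = \bigotimes_i A_{e_i}$, with $\vec e \in \{0,1\}^N$. Each $K_{\vec e}$ is the $X$-string $X^{\vec e}$ times a polynomial in single-site $Z$'s. On a state $\Pi_{\mathcal{C}}\ket{\psi}$, the $Z$-parts act as superpositions of $Z$-type Pauli strings. Those strings commute with all $Z$-stabilisers, so the $Z$-syndrome of $K_{\vec e}\Pi_\mathcal{C}$ is fixed entirely by $X^{\vec e}$, namely $s_Z(\vec e) = H_Z \vec e$. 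Hence $\Pi_{s_Z} K_{\vec e}\Pi_\mathcal{C} = \delta_{s_Z, H_Z\vec e}\,K_{\vec e}\Pi_\mathcal{C}$. This is the key structural fact.

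Second, for Petz, use $\mathcal{R}^\mathrm{Petz}[\cdot] = \sum_{\vec e} \Pi_\mathcal{C} K_{\vec e}^\dagger\, \mathcal{E}(\Pi_\mathcal{C})^{-1/2} (\cdot)\, \mathcal{E}(\Pi_\mathcal{C})^{-1/2} K_{\vec e}\Pi_\mathcal{C}$. By the key fact, $\mathcal{E}(\Pi_\mathcal{C}) = \sum_{\vec e} K_{\vec e}\Pi_\mathcal{C}K^\dagger_{\vec e}$ is block diagonal in the $Z$-syndrome sectors. Its (pseudo)inverse square root is therefore also block diagonal, so it commutes with every $\Pi_{s_Z}$. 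Each Kraus operator $\Pi_\mathcal{C}K_{\vec e}^\dagger \mathcal{E}(\Pi_\mathcal{C})^{-1/2}$ then equals itself times $\Pi_{H_Z \vec e}$, since $K_{\vec e}^\dagger$ maps only the sector $H_Z\vec e$ back into the code space. This gives $\mathcal{R}^\mathrm{Petz} = \mathcal{R}^\mathrm{Petz}\mathcal{M}_{Z\,\mathrm{syndr.}}$.

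Third, for SW, recall (SM Sec. IV) that the construction decomposes the output space using the Schmidt/spectral structure of $\rho'_{QE}$, or of $\rho'_Q$ together with the environment states. It then applies a unitary/isometry on each block that maps the support back to the code space. The obstacle is to show that the eigenspaces of $\rho'_Q = \mathcal{E}(\Pi_\mathcal{C}/d)$ and the relevant operators in the SW construction respect the $Z$-syndrome decomposition. The first thing to try is to show that every operator the SW recovery uses is a function of $\rho'_Q$ and of the $K_{\vec e}\Pi_\mathcal{C}$; block diagonality of $\rho'_Q$ then passes to its spectral projectors, because we may choose eigenbases within each $\Pi_{s_Z}$ sector (degeneracies across sectors can be split consistently). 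I expect the delicate point to be this degeneracy and basis-choice issue in SW. The fallback is to show that any valid choice can be made sector-compatible, and that the fidelity is unaffected by the choice.
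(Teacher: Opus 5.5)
Your Petz argument is correct and uses the same mechanism as the paper (SM Sec.~VII\,D). There $\mathcal{E}(\Pi)$ is shown to be block diagonal in $s_z$ (also in the logical label $k$, which the paper needs only for the explicit $\hat w$ form), so $R_\sigma=\Pi K_\sigma^\dagger\mathcal{E}(\Pi)^{-1/2}$ ends in $\Pi_{s_z(\sigma)}$. Your identity $\Pi K_{\vec e}^\dagger=\Pi K_{\vec e}^\dagger\Pi_{H_Z\vec e}$, combined with the commutation of $\Pi_{H_Z\vec e}$ with the block-diagonal inverse square root, reaches this more directly.

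The SW half is a genuine gap: it is a plan rather than a proof, and the mechanism you name (block diagonality of $\rho'_Q$ passing to its spectral projectors) is not what drives the result.

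\textbf{How SW is actually built.} It does not use spectral data of $\rho'_Q$ or $\rho'_{QE}$. One purifies $\rho'_R\otimes\rho'_E$ on $Q'RE$ with $Q'=QR'R''$. One then rotates by the Uhlmann-optimal unitary $V_{Q'}$, obtained from the SVD of $M_{Q'}=\mathrm{tr}_{RE}\bigl[|\Omega\rangle\langle\Psi'|\langle\eta|\bigr]$. The result is written as $\sum_{kl}\sqrt{\mu_k\nu_l}\,|k\rangle_R|\phi^{kl}\rangle_{Q'}|l\rangle_E$ with $|l\rangle_E$ eigenvectors of $\rho'_E$, and one measures $\Pi_l=\sum_k|\phi^{kl}\rangle\langle\phi^{kl}|$ on $Q'$.

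\textbf{Why block diagonality of $\rho'_Q$ is insufficient.} Take a unitary error $|\bar k\rangle\mapsto|\bar k,s_k\rangle$ with $s_0\neq s_1$. It gives a block-diagonal $\rho'_Q$, a trivial environment, and $T'_{R:E}=0$. So SW must recover perfectly, which no channel of the form $\tilde{\mathcal{R}}\,\mathcal{M}_s$ can do, since measuring $s$ collapses $(|\bar 0,s_0\rangle+|\bar 1,s_1\rangle)/\sqrt{2}$.

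\textbf{What is missing.} You need the stronger fact you already used for Petz (each $K_{\vec e}\Pi$ lies in a single sector), transported to the purified picture. Because the environment label $\sigma$ is the $X$-error pattern,
$(\Pi_{s_z}\otimes I_{RE})|\Psi'\rangle=(I_{QR}\otimes P^E_{s_z})|\Psi'\rangle$, where $P^E_{s_z}$ projects onto $\{|\sigma\rangle_E:H_Z\sigma=s_z\}$. It follows that:
\begin{enumerate}
\item $\rho'_E$ and $\rho'_{RE}$ are block diagonal in these environment sectors;
\item the purification of $\rho'_R\otimes\rho'_E$ can be taken with matching $Q$-labels (the paper uses $|\bar\ell_z,s_z,s_x\rangle_Q|\ell_x\rangle_{R''}$);
\item $M_{Q'}$ is then block diagonal in $s_z$ after a CNOT, as in Eq.~(S192).
\end{enumerate}
Only this structure lets you choose the SVD of $M_{Q'}$, the padding of its rank-deficient part, and the eigenbasis $\{|l\rangle_E\}$ blockwise. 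With that choice every $|\phi^{kl}\rangle$ has definite $s_z$ (cf.\ Eq.~(S199)), so every SW Kraus operator ends in $\Pi_{s_z}$.

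Your fallback, that the fidelity is unaffected by the choice, does not address this. The proposition is structural, so what must be shown is that a sector-compatible choice exists, and that is exactly what the block structure of $M_{Q'}$ and $\rho'_E$ provides.
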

Determining the exact form of $\tilde{\mathcal{R}}^\mathrm{Petz, SW}_\mathrm{AD}$ is not tractable as one must diagonalise the matrix $\mathcal{E}(\mathbb{\Pi})$ to compute $\mathcal{E}(\mathbb{\Pi})^{-1/2}$, where $\mathbb{\Pi}$ is the code-space projector. However, for Petz we show that the recovery channel can be written in the form $\mathcal{R}^\mathrm{Petz}_\mathrm{AD}[\cdot] = \sum_{\ell_z s_z b} R_{\ell_z s_z b} (\cdot) R^\dagger_{\ell_z s_z b}$, where the Kraus operator is
\begin{align}
    R_{\ell_z s_z b} = \bar{X}^{\ell_z} X^{s_z} \left[\sum_{\ell_x, s_x} {\hat w}^{\ell_z s_z b}_{\ell_x s_x} Z^{s_x} \bar{Z}^{\ell_x} \mathbb{\Pi}_{s_x}\right] \mathbb{\Pi}_{s_z}.
\end{align}
Here, $\ell_x$ and $s_x$ indexes $X$ logical error and syndromes, respectively, and similar for $\ell_z, s_z$. On the right, we have $\mathbb{\Pi}_{s_z}$ which is the projective measurement onto the $Z$-syndromes. Due to the summation over $\ell_x, s_x$ within a single Kraus operator, the term in square brackets is \emph{not} a simple measurement of $s_x$-syndromes but rather a coherent recovery operation, that superposes $X$-syndrome and logical corrections weighted and phased appropriately by ${\hat w}^{\ell_z s_z b}_{\ell_x s_x}$ with outcomes $b$. In theory, a polar decomposition of the bracketed term into a form $p_{\ell_z s_z b} \, U^b \, \mathbb{\Pi}_b$ would reveal the basis of measurement and recovery operation.

\vspace{1.5em}
\paragraph*{Optimal and non-optimal quantum recovery problems ---}Just as the phase diagram of classical or quantum Bayesian inference or error-correction problems (with the recovery scheme of matching after measurement of syndromes) can be expanded to non-optimality \cite{p2025planted,kim2025measurement,zdeborova2016statistical}, now the same can be done for quantum recovery problems. We will choose the Petz recovery scheme as a concrete example, since it is the quantum analogue of classical Bayesian inference [\onlinecite{petz1986sufficient}, \onlinecite{bai2025quantum}].
Then non-optimal quantum recovery problems can be perturbed from optimal quantum recovery problems in the following way. Consider a `teacher' who applies a noise channel $\mathcal{E}_*^Q$ with parameter $p_*$ on an entangled state $\ket{\Psi_{RQ}}$. The `student' does not know which error channel the teacher used and constructs an optimal recovery scheme assuming the noise channel $\mathcal{E}_\mathrm{s}^Q$. Let us specialise to the case where the student knows functional form of the initial state and the noise channel, but may assume the wrong parameters $p_\mathrm{s}$. Under this setting, the recovery phase diagram can be expanded in $(1/p_\mathrm{s}, 1/p_*)$ parameter space from the optimal setting $p_* = p_\mathrm{s}$ or more generally $\mathcal{E}^Q_* = \mathcal{E}_\mathrm{s}^Q$ if we allow for more than one parameter in the error channel. In fact, for Pauli errors on a qubit CSS code, using \Cref{thm:pauli-error-is-bayesian-inference}, this exactly maps to a quantum inference problem \cite{kim2025measurement,dennis2002topological}.

Just as in classical or quantum Bayesian inference problems, exploiting the optimality of the Petz recovery scheme, we can make the following statement about the shape of the phase diagram:
\begin{proposition}[Shape of phase diagram of non-optimal quantum recovery problems] \label{lem:non-optimal-recovery}
    If a given optimal point $(1/p_\mathrm{s} = 1/p_*, 1/p_*)$ is in the perfect recovery impossible phase, then so are all points on the horizontal line passing through it. A corollary is that the phase boundary cannot pass through values of $1/p_*$ below the threshold $1/p^\mathrm{opt}_\mathrm{th}$ on the Bayes optimal line. This is illustrated in \cref{fig:non-optimal-phase-diagram}.
\end{proposition}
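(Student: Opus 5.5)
The key observation is that a horizontal line in the $(1/p_\mathrm{s}, 1/p_*)$ plane is a line of fixed teacher noise $p_*$ along which only the student's assumed parameter $p_\mathrm{s}$ varies. Thus every point on it corresponds to the same triple $(\mathcal{C}, \mathcal{E}_*^{(N,p_*)}, \mathscr{L})$ with a different recovery channel $\mathcal{R}^{\mathrm{Petz}}_{\mathcal{E}_\mathrm{s}}$, built from the wrong noise model. My plan is to compare each such recovery channel with the optimal one for the true noise channel, using only the definition of $\mathcal{R}^\mathrm{opt}$ as an argmin over all channels. This is the same reasoning as in the proofs of \Cref{lem:meas-syndromes-make-it-worse} and \Cref{lem:unitary-dont-change-threshold}.

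First, for any $p_\mathrm{s}$ and every $N$, the Petz channel built for $\mathcal{E}_\mathrm{s}$ is a valid recovery channel acting only on $Q$. Applied to the true noise it therefore satisfies
\begin{align}
F_\mathrm{e}\!\left(\mathcal{R}^{\mathrm{Petz}}_{\mathcal{E}_\mathrm{s}}\mathcal{E}_*\right) \leq F_\mathrm{e}\!\left(\mathcal{R}^\mathrm{opt}_{\mathcal{E}_*}\mathcal{E}_*\right) = F^\mathrm{opt}_\mathrm{e}(\mathcal{E}_*).
\end{align}
Second, by the Corollary on optimality of the Petz scheme (via \cref{eq:petz-opt-two-way-bound}), the Bayes-optimal point $p_\mathrm{s}=p_*$ lies in the perfect-recovery-impossible phase exactly when $\lim_{N\to\infty} F^\mathrm{opt}_\mathrm{e}(\mathcal{E}_*)<1$. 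More carefully, the hypothesis says $\lim_{N} F_\mathrm{e}^\mathrm{Petz}(\mathcal{E}_*)<1$, and the upper bound $(F^\mathrm{opt}_\mathrm{e})^2\le F^\mathrm{Petz}_\mathrm{e}$ gives $\lim_N F^\mathrm{opt}_\mathrm{e}<1$; equivalently, \Cref{thm:necessary-and-sufficient} gives $\lim_N T'_{R:E}>0$. Combining this with the first inequality shows that $\lim_N F_\mathrm{e}(\mathcal{R}^{\mathrm{Petz}}_{\mathcal{E}_\mathrm{s}}\mathcal{E}_*)<1$ for every $p_\mathrm{s}$. Hence the whole horizontal line lies in the impossible phase.

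For the corollary, I take $p_* > p^\mathrm{opt}_\mathrm{th}$, i.e.\ $1/p_*$ below $1/p^\mathrm{opt}_\mathrm{th}$. By definition of the threshold and monotonicity of $\mathscr{L}$ in $p$, the optimal point on the Bayes line is then in the impossible phase. The first part then places the entire horizontal line there, so no point of the perfect-recovery phase has such a $1/p_*$, and the phase boundary cannot enter that region.

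The main obstacle is bookkeeping rather than mathematics. One must make sure the limits exist, or else phrase the argument with $\limsup$ and $\liminf$: I would state "impossible phase" as $\limsup_N F_\mathrm{e}<1$ and carry the inequalities through $\limsup$. One must also note that the mismatched student channel is still trace-preserving on the relevant support. If the Petz map built from $\mathcal{E}_\mathrm{s}$ is only trace-non-increasing, for instance when the support of $\mathcal{E}_\mathrm{s}(\mathbb{\Pi})$ is smaller than that of $\mathcal{E}_*(\mathbb{\Pi})$, I would complete it to a channel in the standard way by adding a Kraus operator onto the complement. This completion can only increase the fidelity, so the bound still goes through.
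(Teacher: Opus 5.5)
Your proof is correct and follows the paper's own reasoning. The paper also argues from the optimality of the Petz scheme that every point on a horizontal line (fixed $p_*$) is just another recovery channel for the same noise $\mathcal{E}_*$, and so can do no better than the optimum. Routing the comparison through $F^{\mathrm{opt}}_\mathrm{e}(\mathcal{E}_*)$ and $(F^{\mathrm{opt}}_\mathrm{e})^2 \le F^{\mathrm{Petz}}_\mathrm{e}$ is slightly more careful than the paper's caption ("must recover as equally or worse" than the matched Petz point). The matched Petz channel is optimal only at the level of thresholds, not pointwise in $N$, so your version closes that small imprecision.
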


\begin{figure}
    \centering
    \includegraphics[width=0.7\linewidth]{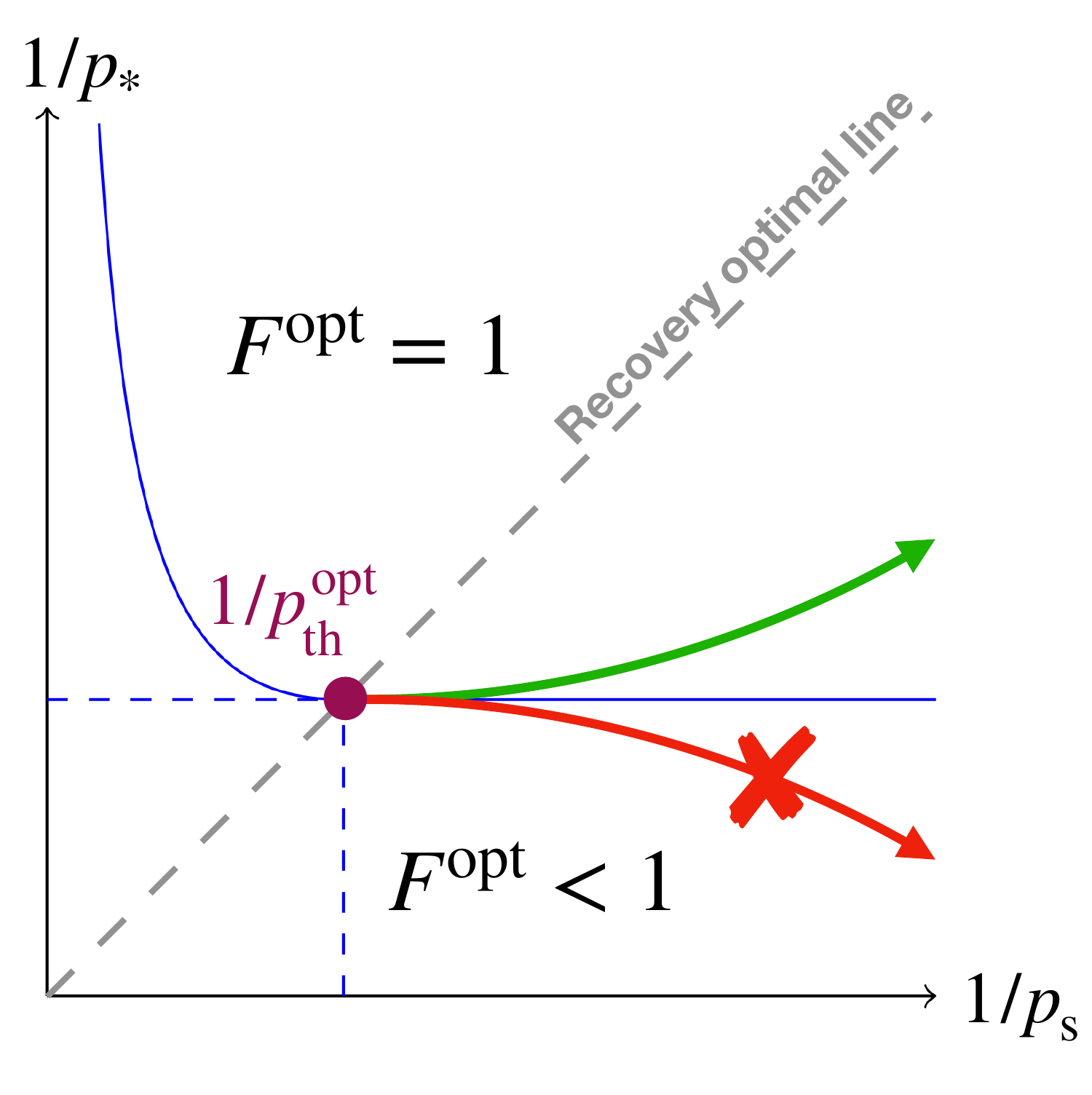}
    \caption{Illustration of \Cref{lem:non-optimal-recovery} for a schematic recoverability phase diagram. Since Petz recovery maps with $p_* = p_\mathrm{s}$ is the optimal, any parameter on the horizontal of the recovery optimal line must recover as equally or worse.}
    \label{fig:non-optimal-phase-diagram}
\end{figure}

\vspace{1.5em}
\paragraph*{Discussion and Outlook ---}
% \section{Outlook}
There are a number of future research directions. First is to use the newly introduced mutual trace distance $T'_{R:E}$ to determine absolutely optimal thresholds for codes without a constant amount of encoded quantum information, e.g. finite-rate codes. Such programme has been executed for coherent information for zero-rate codes \cite{colmenarez2024accurate}. Another direction would be to further explore the structure of the two recovery schemes proved to be optimal here, namely the Petz and Schumacher-Westmoreland recovery schemes, and see if there are any methods to implement some of their features in a practical recovery scheme. It would also be interesting to know if other known recovery schemes such as the rotated Petz, twirled Petz are also optimal or not~\cite{junge2018universal}. Also would be to know if one can construct a rigorous bound between the optimal thresholds of syndrome measurement after errors $\mathcal{M} \mathcal{E}$ and that of Pauli-twirled error channels $\tilde{\mathcal{E}}$, which is numerically observed in Ref. \onlinecite{behrends2025statistical}. 

Note that in the present work, we put no restrictions on the structure of $\mathcal{R}$. Depending on the purpose, one may add additional restrictions to the recovery scheme:
\begin{enumerate}
    \item[(A)] If there is no restriction on the recovery channel, then non-local quantum as well as expensive classical processing is allowed.
    \item[(B)] For practical purposes, one may only allow for local quantum operations, as non-local (e.g. high Pauli weight operators for stabiliser codes) measurements and unitaries are often unfeasible. Meanwhile, classical processing can still be non-local.
    \item[(C)] The recovery channel could be forced to follow stabiliser measurements, while allowing non-local classical processing. This is equivalent to replacing the noise channel $\mathcal{E}$ with $\mathcal{M}\mathcal{E}$, where $\mathcal{M}$ is the syndrome measurement channel.
    \item[(D)] The classical processing (i.e. the `decoder') could be limited to those that require memory or timescales that scale polynomially with the number of physical qudits $N$. For example, for the toric code, the timescale for minimal-weight perfect matching decoder is $\mathrm{poly}(N)$ while that of the maximum-likelihood decoder is $\mathrm{exp}(N)$ \cite{bravyi2014efficient}.
    \item[(E)] For physicists, the recovery channel could be restricted to those where both quantum and classical processing is constant-depth local. This coincides with the typical definition of the noisy state being in the same open phase of matter as the clean state, since the two states would be related by local channels\footnote{Here, the classical resources would not necessarily be restricted. We have also assumed that the noise channel is local.} \cite{sang2024mixed}
    $$\mathcal{E} \rho \mathop{\longrightarrow}_{\mathcal{R}_\mathrm{loc}} \rho, \quad \rho \mathop{\longrightarrow}_{\mathcal{E}} \mathcal{E} \rho.$$
\end{enumerate}

The optimal recovery scheme $\mathcal{R}^\mathrm{opt}$ its threshold $p^\mathrm{opt}_\mathrm{th}$ studied in this work is within the context of (A). This should be an upper bound for any best thresholds within (B-E). It would be interesting to explore formal statements about optimal recovery schemes under some of the above restrictions. A formal statement within (E) in similar fashion to the current work, combined with existing results for sufficiency \cite{sang2024mixed,sang2025stability} may provide a necessary and sufficient information-theoretic quantity to diagnose mixed-state quantum phases.

\vspace{1.5em}
\paragraph*{Acknowledgments ---}
The author is supported by UKRI Engineering and Physical Sciences Research Council (EPSRC) DTP International Studentship Grant Ref.\ No.\ EP/W524475/1. I would like to thank Max McGinley for drawing my attention to Ref. \onlinecite{shirokov2017tight}, Mircea Bejan for pointing out an algebraic error, Curt von Keyserlingk and Austen Lamacraft for reading and giving comments to earlier drafts. I thank all above for useful discussions.

\bibliography{bibliography}



\section*{Supplementary material (transcribed from pages 8-28 of the arXiv PDF: suppmat)}

# Supplemental Material for "Optimal recovery for quantum error correction"

Sun Woo P. Kim*

*Department of Physics, King's College London, Strand, London WC2R 2LS, United Kingdom*

**CONTENTS**

* swk34@cantab.ac.uk

# I. PROOF OF THEOREM 1: UPPER BOUNDS ON OPTIMAL ENTANGLEMENT FIDELITY

By purification of quantum channels and Uhlmann's theorem,

$$F_{\text{e}}^{\text{opt}} = \max_{\mathcal{R}} F[\mathcal{R}_Q \rho'_{QR}, \Phi_{QR}] = \max_{\mathcal{R}} F[\mathcal{R}_Q \mathcal{E}_Q \Phi_{QR}, \Phi_{QR}] \tag{S1}$$

$$= \max_{W_{EE'}, V_{QE'}} F\left[V_{QE'} U_{QE} \ket{\Phi}_{QR} \ket{0}_E \ket{0}_{E'}, W_{EE'} \ket{\Phi_{QR}} \ket{0}_E \ket{0}_{E'}\right]. \tag{S2}$$

Here, we purified the system into $E$ and $E'$. $V_{QE'}$ is the purification of recovery map $\mathcal{R}_Q$, $U_{QE}$ is the purification of $\mathcal{E}_Q$, and $W_{EE'}$ is some optimal unitary to be equal to $F_{\text{e}}^{\text{opt}}$. We have chosen to purify $\ket{\Phi}_{QR}$ in the same way as $\mathcal{R}_Q \rho'_{QR}$; this does not matter as $W_{EE'}$ will be chosen to maximise the fidelity. Now we trace over $Q$ and $E'$. This removes the recovery map as we only look at the subsystem $(QE')^c = RE$. Denote some channel $\mathcal{W}_E[\cdot] = \text{tr}_{E'}[W_{EE'} \cdot W_{EE'}^\dagger]$. Then we have

$$F_{\text{e}}^{\text{opt}} = \max_{\mathcal{W}_E} F\left[\mathcal{E}_{E \leftarrow Q}^c[\Phi_{QR}], \rho'_R \otimes \mathcal{W}_E[\ket{0}\bra{0}_E]\right] = \max_{\sigma_E} F\left[\rho'_{RE}, \rho'_R \otimes \sigma_E\right]. \tag{S3}$$

Here, we used the fact that $\rho'_R = \rho_R$, since no channel is applied on $R$. Now fix the optimal $\sigma_E^* = \text{argmax}_{\sigma_E} F[\rho'_{RE}, \rho'_R \otimes \sigma_E]$. Denote

$$F_{\text{e}}^{\text{opt}} = F[\rho'_{RE}, \rho'_R \otimes \sigma_E^*] = F_{R:E}. \tag{S4}$$

By data processing inequality [S1], we have

$$F_{R:E} = F[\rho'_{RE}, \rho'_R \otimes \sigma_E^*] \leq F[\rho'_E, \sigma_E^*] =: F_E. \tag{S5}$$

Then we have

$$F_{R:E}^2 \leq F_E^2 \implies 1 - F_E^2 \leq 1 - F_{R:E}^2 \implies \sqrt{1 - F_E^2} \leq \sqrt{1 - F_{R:E}^2}. \tag{S6}$$

Now look at

$$\|\rho'_{RE} - \rho'_R \otimes \rho'_E\|_1 = \|(\rho'_{RE} - \rho'_R \otimes \sigma_E^*) + (\rho'_R \otimes \sigma_E^* - \rho'_R \otimes \rho'_E)\|_1 \tag{S7}$$

$$\leq \|\rho'_{RE} - \rho'_R \otimes \sigma_E^*\|_1 + \|\rho'_R \otimes \sigma_E^* - \rho'_R \otimes \rho'_E\|_1. \tag{S8}$$

Note that $\|\rho'_R \otimes \sigma_E^* - \rho'_R \otimes \rho'_E\|_1 = \|\rho'_R - \rho'_R\|_1 + \|\sigma_E^* - \rho'_E\|_1 = 0 + \|\sigma_E^* - \rho'_E\|_1$. Denote $\tilde{T}_{R:E} = \frac{1}{2}\|\rho'_{RE} - \rho'_R \otimes \rho'_E\|_1$, $T_E = \frac{1}{2}\|\sigma_E^* - \rho'_E\|_1$, and $T_{R:E} = \|\rho'_{RE} - \rho'_R \otimes \sigma_E^*\|_1$. Then we have

$$\tilde{T}_{R:E} \leq T_{R:E} + T_E. \tag{S9}$$

Now using the Fuchs-van de Graaf inequalities [S2, Theorem 3.33], we have $T_E \leq \sqrt{1 - F_E^2} \leq \sqrt{1 - F_{R:E}^2}$, so we have

$$\tilde{T}_{R:E} \leq T_{R:E} + \sqrt{1 - F_{R:E}^2}. \tag{S10}$$

We also have $T_{R:E} \leq \sqrt{1 - F_{R:E}^2}$, so we have

$$\tilde{T}_{R:E} \leq 2\sqrt{1 - F_{R:E}^2}. \tag{S11}$$

Rearranging, we have

$$(F_{\text{e}}^{\text{opt}})^2 = F_{R:E}^2 \leq 1 - \frac{1}{4}\tilde{T}_{R:E}^2, \tag{S12}$$

as required.

## II. PROOF OF COROLLARY 2: TWO-WAY BOUNDS FOR COHERENT INFORMATION

From Corollary 1 of Ref. S3, set $A = R$, $B = E$, and $C$ to be the trivial Hilbert space. let $\rho = \rho'_{RE}$ and $\sigma = \rho'_R \otimes \rho'_E$. Since $I_{R:E}[\rho'_R \otimes \rho'_E] = 0$, this gives the upper bound

$$I'_{R:E}[\rho'_{R:E}] \leq 2T'_{R:E} \ln(\min\{d, d_E\}) + (1 + T'_{R:E}) \ln(1 + T'_{R:E}) - T'_{R:E} \ln(T'_{R:E}). \tag{S13}$$

Now we have the upper bound $(1 + T) \ln(1 + T) \leq 2\ln(2)T \forall T \in [0, 1]$. We also have the upper bound $-T'_{R:E} \ln(T'_{R:E}) \leq \frac{1}{e(1-\alpha)} T^\alpha \forall \alpha < 1, T \in [0, 1]$. We assume that $d \leq d_E$; this is a sensible assumption since, for example, error channels acting on single physical qudits that have at least as many Kraus operators as the dimension of the physical qudits, $d_E \geq d_{\text{physical}}^N$, where $d_{\text{physical}}$ is the dimension of individual physical qudit degrees of freedom, and we strictly have $d < d_{\text{physical}}^N$ (i.e. one cannot encode more information than that of the physical qudits.). So we have

$$I'_{R:E} \leq 2\ln(2d) T'_{R:E} + \frac{1}{e(1-\alpha)} (T'_{R:E})^\alpha. \tag{S14}$$

For cleanliness we choose $\alpha = 2/3$. Then, using the fact that $T \leq T^\alpha \forall \alpha < 1, T \in [0, 1]$, we retrieve the upper bounded stated in the main text.

Let us now consider when $d(N) > O(1)$. The mutual information is upper bounded as $I'_{R:E} \leq 2 \ln(d(N))$. Note that for the bound Equation (S13) to be a useful bound, It is not enough for it to tend to a constant value, but in fact to grow at least as fast as $I'_{R:E} \propto \ln d(N)$ in order to provide a non-vacuous upper bound for the optimal entanglement fidelity $F^{\text{opt}}$.

## III. PETZ RECOVERY CHANNEL

The Petz recovery channel to recover state $\sigma$ is defined as

$$\mathcal{R}^{\text{Petz}}_\sigma[\cdot] := \sigma^{1/2} \mathcal{E}^\dagger[\mathcal{E}(\sigma)^{-1/2}(\cdot)\mathcal{E}(\sigma)^{-1/2}]\sigma^{1/2}. \tag{S15}$$

For our purposes (and all bounds on thresholds, etc. stated in the main text), we set $\sigma = \Pi/d$, where $\Pi$ is the code space projector, obtaining

$$\mathcal{R}^{\text{Petz}}[\cdot] := \Pi \mathcal{E}^\dagger[\mathcal{E}(\Pi)^{-1/2}(\cdot)\mathcal{E}(\Pi)^{-1/2}]\Pi. \tag{S16}$$

In order for $\mathcal{E}(\Pi)$ to have a proper inverse, it must have support on the entire Hilbert space of $Q$. In the case that it does not (for example, consider a bit-flip error on the toric code, which does not create any $X$-type syndromes), we define $\mathcal{E}(\Pi)^{-1/2}$ as the square root of the inverse defined on the support of $\mathcal{E}(\Pi)$ and arbitrarily completed on the rest of the Hilbert space.

## IV. SCHUMACHER-WESTMORELAND RECOVERY CHANNEL

The Schumacher-Westmoreland recovery channel is defined as follows [S1]. First, an entangled state between the logical qudit $Q$ with code space dimension $d$ and register $R$ is fixed,

$$|\Phi_{QR}\rangle = \sum_{k=0}^{d-1} \sqrt{\lambda_k} |\overline{k}\rangle_Q |k\rangle_R. \tag{S17}$$

Then we consider the joint purified state after the application of the noise channel $\mathcal{E}_Q$,

$$|\Psi'_{QRE}\rangle = \frac{1}{\sqrt{d}} \sum_k U_{QE} |\overline{k}\rangle_Q |k\rangle_R |\chi\rangle_E, \tag{S18}$$

such that $\text{tr}_E[U_{QE}(\cdot \otimes |\chi\rangle\langle\chi|_E)U_{QE}^\dagger] = \mathcal{E}[\cdot]$.

Since there are no operations applied on $R$, we have

$$\rho'_R = \rho_R = \sum_k \lambda_k |k\rangle\langle k|_R, \quad \rho'_E = \sum_k \frac{1}{d} \mathcal{E}^c_{E \leftarrow Q} \left[ |\bar{k}\rangle\langle \bar{k}|_Q \right]. \tag{S19}$$

Now we purify $\rho'_R \otimes \rho'_E$. The naive purification leads to a pure state in $Q'RE$, where $Q' = R'R''Q$, although often times, purification to a smaller space is sufficient. Then we have the purification

$$|\Omega_{Q'RE}\rangle = \sum_k \sqrt{\lambda_k} |k\rangle_R |k\rangle_{R''} \sum_{k'} U_{QE} |k'\rangle_{R'} |\bar{k'}\rangle_Q |\chi\rangle_E. \tag{S20}$$

Now consider any extension of $|\Psi'_{QRE}\rangle$ into $Q'RE$, $|\Psi'_{Q'RE}\rangle := |\Psi'_{Q'RE}\rangle |\eta\rangle_{R'R''}$. We consider the fidelity

$$F[\rho'_{RE}, \rho'_R \otimes \rho'_E] = \max_{V_{Q'}} |\langle \Psi'_{Q'RE}|V_{Q'}|\Omega_{Q'RE}\rangle| = \max_{V_{Q'}} |\text{tr}\left[ V_{Q'} |\Omega_{Q'RE}\rangle\langle \Psi'_{Q'RE}| \right]| \tag{S21}$$

$$= \max_{V_{Q'}} |\text{tr}_{Q'} [V_{Q'} M_{Q'}]|, \tag{S22}$$

where

$$M_{Q'} = \text{tr}_{RE} \left[ |\Omega_{Q'RE}\rangle\langle \Psi'_{Q'RE}| \right]. \tag{S23}$$

Now, the optimal $V_{Q'}^{\text{argmax}} := \text{argmax}_{V_{Q'}} |\text{tr}_{Q'} [V_{Q'} M_{Q'}]|$ can be determined in the following. Consider the SVD decomposition of $M_{Q'} = U_{Q'} \Lambda_{Q'} W_{Q'}^\dagger$. Since $\Lambda_{Q'}$ is a positive definite matrix, the choice $V_{Q'}^{\text{argmax}} = W_{Q'} U_{Q'}^\dagger$ maximises the trace. Then, we write $V_{Q'}^{\text{argmax}} |\Omega_{Q'RE}\rangle$ in the form

$$V_{Q'}^{\text{argmax}} |\Omega_{Q'RE}\rangle = \sum_{kl} \sqrt{\mu_k \nu_l} |k\rangle_R |\phi^{kl}\rangle_Q |l\rangle_E, \tag{S24}$$

where $\{|\phi^{kl}\rangle_Q\}_{kl}$ forms an orthonormal basis. This is guaranteed by the separability of $\rho'_R \otimes \rho'_E$.

Then the recovrey scheme is as follows. We first make a POVM

$$\Pi_l = \sum_a |\phi^{kl}\rangle\langle \phi^{kl}|, \tag{S25}$$

followed by a unitary that maps $|\phi^{kl}\rangle_Q$ back to the original state $|\bar{k}\rangle_Q$.

## V. TWO-WAY BOUNDS AND OPTIMAL THRESHOLDS FOR OTHER LOSS FUNCTIONS

We will denote the 'average fidelity' defined in Ref. S4 as the Root-Mean-Squared Haar (RMSH) fidelity, where

$$F_{\text{RMSH}}(\mathcal{E}) := \sqrt{\int d\mu_{\text{Haar}}(\psi) F\left[\mathcal{E}(\psi), \psi\right]^2}, \tag{S26}$$

where $\psi = |\psi\rangle\langle\psi|$ is over pure states in code space. It is proven in Ref. S5 that

$$F_{\text{RMSH}}(\mathcal{E}) = \sqrt{\frac{F_e(\mathcal{E})^2 + 1/d}{1 + 1/d}}. \tag{S27}$$

Let the worst-case or minimum pure state fidelity, be defined as

$$F_{\text{wp}}(\mathcal{E}) = \min_\psi F[\mathcal{E}\psi, \psi], \tag{S28}$$

where the minimum is taken over pure states in the code space. As the minimum is always less than the mean, we have $F_{\text{wp}}(\mathcal{E}) \leq F_{\text{RMSH}}(\mathcal{E})$.

One can also define the worst-case or minimum mixed state fidelity as [S6]

$$F_{\text{wm}}(\mathcal{E}) = \min_{\rho} F[\mathcal{E}\rho, \rho], \tag{S29}$$

where the minimum is taken over mixed states in the code space. By using the joint concavity of fidelity [S2, Theorem 3.25], one can show that $F_{\text{wp}}(\mathcal{E}) = F_{\text{wm}}(\mathcal{E})$.

Let $\rho_A$ be an arbitrary density matrix in a Hilbert space whose dimension equal to that of the code space. Then for any such $\rho_A$, we have

$$T[\mathcal{R}_Q \mathcal{E}_Q \mathcal{C}_{Q \leftarrow A} \rho_A, \mathcal{C}_{Q \leftarrow A} \rho_A] \leq \frac{1}{2} \|\mathcal{R}_Q \mathcal{E}_Q \mathcal{C}_{Q \leftarrow A} - \mathcal{C}_{Q \leftarrow A}\|_{\diamond} \|\rho_A\|_1, \tag{S30}$$

where we have interpreted $\mathcal{C}_{Q \leftarrow A}$ as a code preparation / encoding channel. $\|\cdot\|_{\diamond}$ is the diamond norm, also known as the completely bounded trace norm [S2, Definition 3.43]. Now we use the Choi-representation and its equality to the diamond norm [S2, Eq. (3.414)], where for any linear map on density matrices $\mathcal{M}_{Q \leftarrow A}$,

$$\|\mathcal{M}_{Q \leftarrow A}\|_{\diamond} = \|J(\mathcal{M}_{Q \leftarrow A})\|_1 := d_A \|\mathcal{M}_{Q \leftarrow A} \Phi_{AA'}\|_1, \tag{S31}$$

where $\Phi_{AA'}$ is a Bell state between $A$ and an auxillary space $A'$ with the same dimension. Applying to our case and setting $\rho_A$ as the worst case mixed density matrix, we have

$$1 - F_{\text{wm}}(\mathcal{E}) \leq d\sqrt{1 - F_e(\mathcal{E})^2} \implies 1 - d\sqrt{1 - F_e(\mathcal{E})^2} \leq F_{\text{wm}}(\mathcal{E}). \tag{S32}$$

Therefore we have the sandwich

$$1 - d\sqrt{1 - F_e(\mathcal{E})^2} \leq F_{\text{wp}}(\mathcal{E}) = F_{\text{wm}}(\mathcal{E}) \leq F_{\text{RMSH}}(\mathcal{E}) = \sqrt{\frac{F_e(\mathcal{E})^2 + 1/d}{1 + 1/d}}. \tag{S33}$$

The equality for $F_{\text{HMRS}}$ immediately gives that $p_{\text{th}}^{\text{opt}}(F_{\text{RMSH}}) = p_{\text{th}}^{\text{opt}}(F_e)$. For worst-case fidelity, in the $d(N) = O(1)$ case, we can show that $p_{\text{th}}^{\text{opt}}(F_{\text{wm,p}}) = p_{\text{th}}^{\text{opt}}(F_e)$ in the following.

Consider the $N \to \infty$ limit for the $d(N) = O(1)$ case:

- Let $\mathcal{E} = \mathcal{R}_Q^{\text{opt(e)}} \mathcal{E}_Q$, where $\mathcal{R}_Q^{\text{opt(e)}}$ is the optimal recovery channel for the loss function $\mathscr{L} = 1 - F_e$. This defines a possibly non-optimal recovery channel for the other loss functions. If $F_e^{\text{opt(e)}} = 1$, then we must have $F_{\text{wm,p}}^{\text{opt(e)}} = 1$ due to the lower bound. This means that we must have $p_{\text{th}}^{\text{opt}}(F_{\text{wm,p}}) \geq p_{\text{th}}^{\text{opt}}(F_e)$, since wherever $F_e^{\text{opt(e)}} = 1$, so are some (possibly suboptimal) recovery maps for worst-case fidelities.

- Conversely, choose the optimal recovery channel with respect to the worst-case fidelity. If $F_{\text{wm}}^{\text{opt(wm,p)}} = 1$, then we must have $F_e^{\text{opt(wm,p)}} = F_e^{\text{opt(wm,p)}} = 1$ due to the upper bound. Therefore we have $p_{\text{th}}^{\text{opt}}(F_{\text{wm,p}}) \leq p_{\text{th}}^{\text{opt}}(F_e)$.

- Together, we have $p_{\text{th}}^{\text{opt}}(F_{\text{wm,p}}) = p_{\text{th}}^{\text{opt}}(F_e)$.

In the $d(N) > O(1)$ case, we can show that $p_{\text{th}}^{\text{opt}}(F_{\text{wm,p}}) \geq p_{\text{th}}^{\text{opt}}(F_e)$ but must assume convergence properties if they are to be equal:

- Pick any of the optimal recovery channel with respect to worst-case fidelity. When $F_{\text{wm,p}}^{\text{opt(wm,p)}} = 1$, then we must have $F_e^{\text{opt(wm,p)}} = 1$ due to the upper bound. Therefore we have $p_{\text{th}}^{\text{opt}}(F_{\text{wm,p}}) \geq p_{\text{th}}^{\text{opt}}(F_e)$.

- However, if we now pick an optimal recovery channel with respect to e, even if $F_e^{\text{opt(e)}} \to 1$, we cannot guarantee a non-vacuous lower bound for $F_{\text{wm,p}}^{\text{opt(e)}}$, unless we assume that $1 - F_e^{\text{opt(e)}}$ decays faster than $1/d^2$.

The author is unaware if there exist better upper and lower bounds for the different fidelity metrics than the ones stated here. A sharper lower bound would be helpful for the $d(N) > O(1)$ case.

## VI. COHERENT ROTATION CHANNELS WITH UNCERTAINTY AND PROOF OF PROPOSITION 4

Consider the single-qubit noise channel

$$\mathcal{E}[\cdot] = \int da d\vec{n} p(a, \vec{n}) e^{ia\vec{n}\cdot\vec{\sigma}} \cdot e^{-ia\vec{n}\cdot\vec{\sigma}}. \tag{S34}$$

Now using the identity $e^{ia\vec{n}\cdot\vec{\sigma}} = \sum_{\mu=0}^{3} c_\mu \sigma_\mu$, where $c_0 = \cos(a)$, $c_i = i\sin(a) n_i$, we have

$$\mathcal{E}[\cdot] = \int da d\vec{n} p(a, \vec{n}) \sum_{\mu\nu} c_\mu c_\nu^* \sigma_\mu \cdot \sigma_\nu = \sum_{\mu\nu} \langle c_\mu c_\nu^* \rangle \sigma_\mu \cdot \sigma_\nu^\dagger. \tag{S35}$$

The matrix $M_{\mu\nu} := \langle c_\mu c_\nu^* \rangle$ is a Hermitian matrix given by

$$M_{\mu\nu} = \begin{pmatrix} \langle \cos^2 a \rangle & -i\langle \cos a \sin a n_x \rangle & -i\langle \cos a \sin a n_y \rangle & -i\langle \cos a \sin a n_z \rangle \\ i\langle \cos a \sin a n_x \rangle & \langle \sin^2 a n_x^2 \rangle & \langle \sin^2 a n_x n_y \rangle & \langle \sin^2 a n_x n_z \rangle \\ i\langle \cos a \sin a n_y \rangle & \langle \sin^2 a n_y n_x \rangle & \langle \sin^2 a n_y^2 \rangle & \langle \sin^2 a n_y n_z \rangle \\ i\langle \cos a \sin a n_z \rangle & \langle \sin^2 a n_z n_x \rangle & \langle \sin^2 a n_z n_y \rangle & \langle \sin^2 a n_z^2 \rangle \end{pmatrix}. \tag{S36}$$

Diagonalising the matrix as $O\Lambda O^\dagger$, we have

$$\mathcal{E}[\cdot] = \sum_{\mu\nu\alpha} O_{\mu,\alpha} \lambda_\alpha O^\dagger_{\alpha,\nu} \sigma_\mu \sigma_\nu^\dagger \tag{S37}$$

$$= \sum_\alpha \lambda_\alpha \left( \sum_\mu O_{\mu,\alpha} \sigma_\mu \right) \cdot \left( \sum_\nu O^\dagger_{\alpha,\nu} \sigma_\nu^\dagger \right) \tag{S38}$$

$$= \sum_{\alpha=0}^{3} \lambda_\alpha K_\alpha \cdot K_\alpha^\dagger. \tag{S39}$$

Consider the case $n_y = n_z = 0$, i.e. rotations only in the $X$-axis. In this case, the matrix simplifies to

$$M_{\mu\nu} = \begin{pmatrix} \langle \cos^2 a \rangle & -i\langle \cos a \sin a \rangle & 0 & 0 \\ i\langle \cos a \sin a \rangle & \langle \sin^2 a \rangle & 0 & 0 \\ 0 & 0 & 0 & 0 \\ 0 & 0 & 0 & 0 \end{pmatrix}_{\mu\nu}. \tag{S40}$$

Let $c^2 := \langle \cos^2 a \rangle$, $s^2 := \langle \sin^2 a \rangle$, and $t := \langle \cos a \sin a \rangle$. We have the inequality $t^2 \leq c^2 s^2$. So we just have to diagonalise the $2 \times 2$ matrix

$$m = \begin{pmatrix} c^2 & -it \\ +it & s^2 \end{pmatrix}. \tag{S41}$$

Define $\Delta = c^2 - s^2$. Then the eigenvalues are given by

$$\lambda_\pm = \frac{1 \pm \sqrt{\Delta^2 + 4t^2}}{2}, \tag{S42}$$

and the unnormalised eigenvectors are

$$v^\pm = \begin{pmatrix} \frac{\Delta \pm \sqrt{\Delta^2 + 4t^2}}{2t} \\ i \end{pmatrix}. \tag{S43}$$

Let $w_0^\pm = v_0^\pm / \|v^\pm\|$ and $w_1 = 1/\|v^\pm\|$.

This results in two non-zero eigenvalues $\lambda_\pm$ and the corresponding eigenvectors $v^\pm$, so that

$$\mathcal{E}[\cdot] = \sum_\pm E_\pm \cdot E_\pm^\dagger, \tag{S44}$$

where $E_\pm = \sqrt{\lambda_\pm}(w_0^\pm I + iw_1^\pm X)$. Now we can construct a unitary $U$ such that $UE_\pm \propto P_\pm$, i.e. a Pauli. Writing $U = c_\phi + is_\phi Z$, we have

$$UE_\pm = \sqrt{\lambda_\pm}(c_\phi + is_\phi X)(w_0^\pm I + iw_1^\pm X) = \sqrt{\lambda_\pm}(c_\phi w_0^\pm - s_\phi w_1^\pm)I + i\sqrt{\lambda_\pm}(s_\phi w_0^\pm + c_\phi w_1^\pm)X = \sqrt{\lambda_\pm}\alpha_0^\pm + i\sqrt{\lambda_\pm}\alpha_1^\pm X. \quad (S45)$$

Now let us choose $\phi$ such that we have both $\alpha_1^- = 0$ and $\alpha_0^+ = 0$. Then we need

$$s_\phi v_0^+ + c_\phi = 0, \quad c_\phi v_0^- - s_\phi = 0 \quad (S46)$$
$$\implies t_\phi v_0^+ + 1 = 0, \quad v_0^- - t_\phi = 0. \quad (S47)$$

First, let's check that there can be a solution. Rearranging for $t_\phi$, we need the requirement $v_0^- + 1/v_0^+ = 0$, which can be checked to hold. We can then find

$$\phi = \arctan(v_0^-) = \arctan\left(\frac{\Delta + \sqrt{\Delta^2 + 4t^2}}{2t}\right). \quad (S48)$$

Which leads to

$$UE_- = \sqrt{\lambda_-} I, \quad UE_+ = i\sqrt{\lambda_+} X. \quad (S49)$$

We can then interpret $UE_- = (1 - \tilde{p})I, UE_+ = i\tilde{p}X$, where

$$\tilde{p} = \sqrt{\lambda_+} = \frac{1 - \sqrt{\Delta^2 + 4t^2}}{2}, \quad (S50)$$

i.e. a bit-flip error channel.

## VII. OPTIMAL RECOVERY SCHEMES FOR QUBIT CSS CODES

In the following subsections, we will study optimal recovery schemes for qubit CSS codes. Such codes are stabilised by $X$-type and $Z$-type stabilisers, and its code space projectors can be written in the form $\Pi = \mathbb{A}_V \mathbb{B}_P$. Here $V = \{v\}_v$ denote the set of all 'vertices' and $P = \{p\}_p$ denote the set of 'plaquettes', which are indeed the vertices and plaquettes for the case of the 2D $\mathbb{Z}_2$ toric or surface code. They are composed of vertex and plaquette projectors $\mathbb{A}_V = \prod_{v \in V} \mathbb{A}_v, \mathbb{B}_P = \prod_{p \in P} \mathbb{B}_p$, where $\mathbb{A}_v = (1 + X_v)/2, \mathbb{B}_p = (1 + Z_p)/2$, where $X_v = \prod_{i \in \text{nbh}(v)} X_i, Z_p = \prod_{i \in \text{nbh}(p)} Z_i$ where $\text{nbh}(\cdot)$ denotes the neighbourhood and $i$ is an 'edge'. The 'neighbourhood' can be non-local depending on the code, but we adopt the language of the toric code for simplicity [S7]. Let the code be comprised of $N$ physical qubits, storing $K$ logical qubits. $\ell = (\ell^{(1)}, \ell^{(2)}, \ldots, \ell^{(K)})$ denote the logical computational basis states. For example, in the 2D $\mathbb{Z}_2$ surface code with smooth and rough boundaries, $K = 1$ and $\ell$ is just one qubit. In the 2D $\mathbb{Z}_2$ toric code with doubly periodic boundary conditions, $K = 2$ and $\ell = (\ell^{(1)}, \ell^{(2)})$.

Such codes can be spanned by the 'anyon number basis' [S8]:

$$\left|\overline{\ell}, (\tilde{s}_x, \tilde{s}_z)\right\rangle := Z^{\gamma(\tilde{s}_x)} X^{\gamma(\tilde{s}_z)} \left|\overline{\ell}, (\vec{0}, \vec{0})\right\rangle. \quad (S51)$$

Here, $|\overline{\ell}, (\vec{0}, \vec{0})\rangle$ is a logical computation basis state (i.e. an eigenstate for $\bar{Z}^{(j)}$, the logical Pauli $Z$ operator on the $j$th logical sector) We put a tilde ($\sim$) to remind ourselves that $\tilde{s}_{z,(x)}$ is an equivalence class of $X(Z)$-type Pauli strings that give rise to the syndromes $s_{z,(x)}$ that are related by $X(Z)$-type stabilisers. $\gamma(\bar{s}_{z,(x)})$ is a representative of one such class, which we make a choice. For the case of the toric code (let us choose odd system size for simplicity), we can make the intuitive choice that they correspond to a string with the shortest path between the syndrome locations. If two Pauli strings give the same syndromes but are not related by stabilisers, then they differ by a logical operator. This is illustrated for the toric code below for an $X$-type error (and therefore $Z$-type stabiliser measurement).

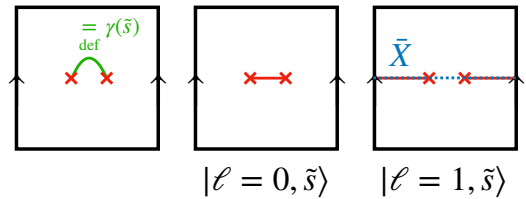

*General Pauli error.* We will consider two kinds of errors. First is a very general Pauli error,

$$\mathcal{E}_{\text{Pauli}}[\cdot] = \sum_P p(P) P(\cdot) P^\dagger. \tag{S52}$$

Here, no locality properties have been assumed. In the case where it is seperable into single-site channels, $\mathcal{E}_{\text{Pauli}} = \prod_i \mathcal{E}_i^{\text{Pauli}}$, each can be written as

$$\mathcal{E}_i^{\text{Pauli}}[\cdot] = p_I(\cdot) + p_X X(\cdot)X + p_Y Y(\cdot)Y + p_Z Z(\cdot)Z. \tag{S53}$$

In the case where each single-site error can be separated into bit-flip error followed by a phase-flip error, $\mathcal{E}_i^{\text{Pauli}} = \mathcal{E}_i^{\text{phase-flip}} \mathcal{E}_i^{\text{bit-flip}}$, where

$$\mathcal{E}_i^{\text{bit-flip}}[\cdot] = (1-q_X)(\cdot) + q_X X(\cdot)X, \tag{S54}$$

and similarly for phase-flip errors, we have $p_I = (1-q_X)(1-q_Z)$, $p_X = q_X(1-q_Z)$, $p_Y = q_X q_Z$, and $p_Z = (1-q_X)q_Z$.

We now state some useful observations for the Pauli error channel. After Pauli error is applied onto the Bell state and the error channel is dilated, the joint $QRE$ system can be written as

$$\left|\Psi'_{QRE}\right\rangle = \frac{1}{\sqrt{d}} \sum_k U_{QE} \left|\bar{k}\right\rangle_Q \left|k\right\rangle_R \left|\chi\right\rangle_E = \frac{1}{\sqrt{d}} \sum_k \sum_P \sqrt{p(P)} P_Q \left|\bar{k}\right\rangle_Q \left|k\right\rangle_R \left|P\right\rangle_E, \tag{S55}$$

where $\sigma = (\sigma_{\ell,i})_{\ell,i}$ are two boolean variables defined on every site, and we can decompose the Pauli string as $P = \prod_i Z_i^{\sigma_{i,z}} X_i^{\sigma_{i,x}}$, and $p(P)$ is the probability for a particular Pauli string error to occur.

Let us switch to the anyon number basis. $P_Q \left|\bar{k}\right\rangle_Q$ can be decomposed as

$$(-1)^{f(P)} Z^{\tilde{s}_x(P)} X^{\tilde{s}_z(P)} \bar{Z}^{\ell_x(P)} \bar{X}^{\ell_z(P)} \left|\bar{k}\right\rangle_Q = (-1)^{f(P)}(-1)^{k \oplus \ell_x(P) \oplus \ell_z(P)} \left|\overline{k \oplus \ell_z}, \tilde{s}\right\rangle. \tag{S56}$$

where $f(P) \in \{0,1\}$ is a phase term accumulated from commutation relations between $X$ and $Z$ Pauli operators. Therefore the state can be written as

$$\left|\Psi'_{RQE}\right\rangle = \frac{1}{\sqrt{d}} \sum_k \left|k\right\rangle_R \sum_{\tilde{s},\ell} (-1)^{\ell_x(k \oplus \ell_z)} \sum_{P \in \tilde{s},\ell} (-1)^{f(P)} \left|\overline{k \oplus \ell_z}, \tilde{s}\right\rangle_Q \sqrt{p(P)} \left|P\right\rangle_E. \tag{S57}$$

*Amplitude damping error.* The second is the amplitude damping channel. We will consider those that can be decomposed into single-site channels $\mathcal{E} = \prod_i \mathcal{E}_i$, where the single-site channel acts as

$$\mathcal{E}_i^{\text{ad}}[\cdot] = K_0(\cdot)K_0^\dagger + K_1(\cdot)K_1^\dagger, \tag{S58}$$

where

$$K_0 = \begin{pmatrix} 1 & 0 \\ 0 & \sqrt{1-p} \end{pmatrix}, \quad K_1 = \begin{pmatrix} 0 & \sqrt{p} \\ 0 & 0 \end{pmatrix}. \tag{S59}$$

When dilated, it acts as

$$V_{QE\leftarrow Q}^i \left|\psi\right\rangle_Q = \left(\frac{1+\sqrt{1-p}}{2} I + \frac{1-\sqrt{1-p}}{2} Z\right) \left|\psi\right\rangle_Q \otimes \left|0\right\rangle_E + \frac{\sqrt{p}}{2}(X + ZX) \left|\psi\right\rangle_Q \otimes \left|1\right\rangle_E$$
$$=: U_{QE} \left|\psi\right\rangle_Q \left|\chi\right\rangle_E. \tag{S60}$$

Now, consider Ising variables living on each site, $\sigma = (\sigma_i)_i$ and $\tau = (\tau_i)_i$. The action of the isometry on the whole system is then

$$V_{QE\leftarrow Q} \left|\psi\right\rangle_Q = \sum_\sigma \sum_\tau \sqrt{p(\sigma,\tau)} Z^\tau X^\sigma \left|\psi\right\rangle_Q \otimes \left|\sigma\right\rangle_E. \tag{S61}$$

$\sqrt{p(\sigma,\tau)}$ are non-negative real numbers that sum to $\sum_{\sigma,\tau} p(\sigma,\tau) = 1$. As in the Pauli case, $Z^\tau X^\sigma$ is some Pauli operator, and therefore its action on a logical computational basis state is

$$Z^\tau X^\sigma \left|\bar{k}\right\rangle = (-1)^{f(\sigma,\tau) + \ell_x(\tau)(k \oplus \ell_z(\sigma))} \left|\overline{k \oplus l_z(\sigma)}, \tilde{s}_x(\tau), \tilde{s}_z(\sigma)\right\rangle. \tag{S62}$$

The Kraus operator for the whole channel $\mathcal{E}$ can be written as

$$K_\sigma = \sum_\tau \sqrt{p_{\sigma\tau}} Z^\tau X^\sigma. \tag{S63}$$

## A. Pauli error and the Petz recovery channel

Let us first calculate

$$\mathcal{E}(\Pi) = \sum_{k,P} p(P) P \ket{\bar{k}}\bra{\bar{k}} P^\dagger = \sum_{k,P} p(P) \ket{\overline{k \oplus \ell_z}, \tilde{s}}\bra{\overline{k \oplus \ell_z}, \tilde{s}} \tag{S64}$$

$$= \sum_k \sum_{\ell, \tilde{s}} p(\ell, \tilde{s}) \ket{\overline{k \oplus \ell_z}, \tilde{s}}\bra{\overline{k \oplus \ell_z}, \tilde{s}} \tag{S65}$$

$$= \sum_{\ell_x} \sum_{k, \ell_z, \tilde{s}} p(\ell_z, \tilde{s}) \ket{\overline{k \oplus \ell_z}, \tilde{s}}\bra{\overline{k \oplus \ell_z}, \tilde{s}} \stackrel{k \to k \oplus \ell_z}{=} \sum_{\tilde{s}} p(\tilde{s}) \ket{\bar{k}, \tilde{s}}\bra{\bar{k}, \tilde{s}} \tag{S66}$$

$$\implies \mathcal{E}(\Pi)^{-1/2} = \sum_{\tilde{s}} \frac{1}{\sqrt{p(\tilde{s})}} \ket{\bar{k}, \tilde{s}}\bra{\bar{k}, \tilde{s}}. \tag{S67}$$

Now construct recovery channel Kraus operator $R_P$ for $\mathcal{R}[\cdot] = \sum_P R_P (\cdot) R_P^\dagger$:

$$R_P \equiv \Pi \sqrt{p(P)} P^\dagger \mathcal{E}(\Pi)^{-1/2} \tag{S68}$$

$$= \sqrt{p(P)} \sum_k \ket{\bar{k}}\bra{\overline{k \oplus \ell_z(P)}, \tilde{s}(P)} (-1)^{f(P)} (-1)^{k \oplus \ell_z(P) \oplus \ell_x(P)} \tag{S69}$$

$$\times \sum_{\tilde{s}} \frac{1}{\sqrt{p(\tilde{s})}} \sum_{\ell_z} \ket{\overline{\ell_z}, \tilde{s}}\bra{\overline{\ell_z}, \tilde{s}}. \tag{S70}$$

Inner products give $\tilde{s} = \tilde{s}(P)$, $k \oplus \ell_z(P) = \ell_z$. Then we have:

$$R_P = \sqrt{p(P)} \sum_k \ket{\bar{k}}\bra{\overline{k \oplus \ell_z(P)}, \tilde{s}(P)} \frac{1}{\sqrt{p(\tilde{s}(P))}} (-1)^{f(P)} (-1)^{k \oplus \ell_z(P) \oplus \ell_x(P)}. \tag{S71}$$

Then

$$\mathcal{R}[\cdot] = \sum_P \frac{p(P)}{p(\tilde{s}(P))} \sum_k \ket{\bar{k}}\bra{\overline{k \oplus \ell_z(P)}, \tilde{s}(P)} (-1)^{k \oplus \ell_z(P) \oplus \ell_x(P)} \tag{S72}$$

$$\times (\cdot) \sum_{k'} \ket{\overline{k' \oplus \ell_z(P)}, \tilde{s}(P)}\bra{\bar{k'}} (-1)^{k' \oplus \ell_z(P) \oplus \ell_x(P)} \tag{S73}$$

$$= \sum_{\ell^\circ, \tilde{s}^\circ} \frac{p(\ell^\circ, \tilde{s}^\circ)}{p(\tilde{s}^\circ)} \sum_k \ket{\bar{k}}\bra{\overline{k \oplus \ell_z^\circ}, \tilde{s}^\circ} (-1)^{k \oplus \ell_z^\circ \oplus \ell_x^\circ} (\cdot) \sum_{k'} \ket{\overline{k' \oplus \ell_z^\circ}, \tilde{s}^\circ}\bra{\bar{k'}} (-1)^{k' \oplus \ell_z^\circ \oplus \ell_x^\circ}. \tag{S74}$$

Now we can write

$$\sum_k \ket{\bar{k}}\bra{\overline{k \oplus \ell_z^\circ}, \tilde{s}^\circ} (-1)^{k \oplus \ell_z^\circ \oplus \ell_x^\circ} \stackrel{k \to k \oplus \ell_z^\circ}{=} \sum_k \ket{\overline{k \oplus \ell_z^\circ}}\bra{\bar{k}, \tilde{s}^\circ} (-1)^{k \oplus \ell_x^\circ}$$

$$= \bar{X}^{\ell_z^\circ} \left( \sum_k \ket{\bar{k}}\bra{\bar{k}} (-1)^{k \oplus \ell_x^\circ} \right) (XZ)^{\tilde{s}^\circ}$$

$$= \bar{X}^{\ell_z^\circ} \left( \sum_k \ket{\bar{k}}\bra{\bar{k}} (-1)^{k \oplus \ell_x^\circ} \right) \left( \sum_n \ket{\bar{n}}\bra{\bar{n}} \right) (XZ)^{\tilde{s}^\circ} \tag{S75}$$

$$= \bar{X}^{\ell_z^\circ} \bar{Z}^{\ell_x^\circ} \Pi (XZ)^{\tilde{s}^\circ}$$

$$= \bar{X}^{\ell_z^\circ} \bar{Z}^{\ell_x^\circ} (XZ)^{\tilde{s}^\circ} \underbrace{(ZX)^{\tilde{s}^\circ} \Pi (XZ)^{\tilde{s}^\circ}}_{=\Pi_{s^\circ}}$$

$$= \bar{X}^{\ell_z^\circ} \bar{Z}^{\ell_x^\circ} (XZ)^{\tilde{s}^\circ} \Pi_{s^\circ}.$$

An important fact to note here is that although we have chosen a specific representative $\gamma(\tilde{s})$ to define $\ket{\bar{\ell}, \tilde{s}}$, the projector $\Pi_{s^\circ} = (ZX)^{\tilde{s}^\circ} \Pi (XZ)^{\tilde{s}^\circ}$ is agnostic of this choice. This is because all the stabilisers in $\Pi$ commute apart from the endpoints of the string, which simply conjugates $Z_i \frac{1}{2}(1 + X_v) Z_i = \frac{1}{2}(1 - X_v)$ for $i \in \text{nbh}(v)$.

So we have:
$$\mathcal{R}[\cdot] = \sum_{\ell,\tilde{s}} p(\ell|\tilde{s}) \bar{X}^{\ell_z} \bar{Z}^{\ell_x} (XZ)^{\tilde{s}} \Pi_s(\cdot) \Pi_s (ZX)^{\tilde{s}} \bar{Z}^{\ell_x} \bar{X}^{\ell_z}. \tag{S76}$$

This can be interpreted in the following way. First the syndromes are measured ($\Pi_s(\cdot)\Pi_s$). The canonical $(XZ)^{\tilde{s}}$ is used to remove syndromes. Then, a probabilistic decoder is used to infer whether a logical operator should be used to correct a logical error or not.

### B. Bit-flip error and the SW recovery channel

As a warm-up, we first consider bit-flip errors. The state after error is
$$|\Psi'_{QRE}\rangle = \sum_k \sqrt{\lambda_k} U_{QE} |\bar{k}\rangle_Q |k\rangle_R |\chi\rangle_E = \sum_k \sqrt{\lambda_k} \sum_\sigma \sqrt{p(\sigma)} X_Q^\sigma |\bar{k}\rangle_Q |k\rangle_R |\sigma\rangle_E, \tag{S77}$$

where $\sigma = (\sigma_i)_i$, where the iteration is over all edges, and $X^\sigma = \prod_i X_i^{\sigma_i}$. Each $\sigma \in \{0,1\}$. Separate $X^\sigma$ into a syndrome and logical component, to get
$$|\Psi'_{QRE}\rangle = \sum_k \sqrt{\lambda_k} |k\rangle_R \sum_{s,\ell} |\overline{k \oplus \ell}, s\rangle_Q \sum_{\sigma \in s,\ell} \sqrt{p(\sigma)} |\sigma\rangle_E. \tag{S78}$$

Each of the logical-syndrome pair is orthogonal to each other.

Then the density matrix on subsystem $E$ is
$$\rho'_E = \sum_{\ell,\tilde{s}} \left( \sum_{\sigma \in \ell,\tilde{s}} \sqrt{p(\sigma)} |\sigma\rangle \right) \left( \sum_{\sigma' \in \ell,\tilde{s}} \sqrt{p(\sigma')} \langle\sigma| \right), \tag{S79}$$

which can be purified as
$$|\tilde{\psi}'_{QE}\rangle = \sum_{\ell,\tilde{s}} |\bar{\ell},\tilde{s}\rangle_Q \sum_{\sigma \in \ell,\tilde{s}} \sqrt{p(\sigma)} |\sigma\rangle. \tag{S80}$$

The density matrix for $\rho'_R = \sum_k \lambda_k |k\rangle\langle k|_R$ can be purified as
$$|\tilde{\psi}'_{R'}\rangle = \sum_k \sqrt{\lambda_k} |k\rangle_R |k\rangle_{R'}. \tag{S81}$$

Therefore the purification of $\rho'_R \otimes \rho'_E$ can be contained in $Q'RE$, $|\tilde{\Psi}_{Q'RE}\rangle = |\tilde{\psi}'_{R'}\rangle \otimes |\tilde{\psi}'_{QE}\rangle$, where $Q' = R'Q$. To match the Hilbert space, we add a dimension to $|\Psi_{Q'RE}\rangle := |\eta_{R'}\rangle \otimes |\Psi_{QRE}\rangle$, where $|\eta_{R'}\rangle$ is an arbitrary ancilla on $R'$.

Now the optimal purification is then determined by
$$F[\rho'_{RE}, \rho'_R \otimes \rho'_E] = \left| \langle\eta|_{R'} \langle\Psi'_{QRE}| U_{Q'}^{\text{argmax}} |\tilde{\Psi}\rangle_{Q'RE} \right|, \tag{S82}$$

where $U_{Q'}^{\text{argmax}} := \text{argmax}_{U_{Q'}} \left| \langle\eta|_{R'R''} \langle\Psi'_{QRE}| U_{Q'} |\tilde{\Psi}\rangle_{Q'RE} \right|$.

To determine $U_{Q'}^{\text{argmax}}$, notice that
$$F[\rho'_{RE}, \rho'_R \otimes \rho'_E] = \max_{U_{Q'}} \left| \text{tr}_{Q'RE} \left[ U_{Q'} |\tilde{\Psi}\rangle_{Q'RE} \langle\Psi'_{QRE}| \langle\eta|_{R'} \right] \right| \tag{S83}$$
$$= \max_{U_{Q'}} \left| \text{tr}_{Q'} \left[ U_{Q'} \, \text{tr}_{RE} \left\{ |\tilde{\Psi}\rangle_{Q'RE} \langle\Psi'_{QRE}| \langle\eta|_{R'} \right\} \right] \right|. \tag{S84}$$

Define the matrix $M_{Q'} = \text{tr}_{RE} \left\{ |\tilde{\Psi}\rangle_{Q'RE} \langle\Psi'_{QRE}| \langle\eta|_{R'} \right\}$. If the SVD of this matrix can be found, $M_{Q'} = W_Q \Sigma_Q V_{Q'}^\dagger$, then the choice $U_{Q'}^{\text{argmax}} = V_{Q'} W_{Q'}^\dagger$ will give the optimal fidelity, since $\Sigma_{Q'}$ is a positive-definite matrix. Now compute
$$M_{Q'} = \text{tr}_{RE} \left\{ |\tilde{\Psi}\rangle_{Q'RE} \langle\Psi'_{QRE}| \langle\eta|_{R'} \right\} \tag{S85}$$
$$= \text{tr}_{RE} \left\{ \sum_r \sqrt{\lambda_r} |r\rangle_R |r\rangle_{R'} \sum_{\ell,\tilde{s}} |\bar{\ell},\tilde{s}\rangle_Q \sum_{\sigma \in \ell,\tilde{s}} \sqrt{p(\sigma)} |\sigma\rangle_E \sum_k \sqrt{\lambda_k} \sum_{\ell',\tilde{s}'} \langle \overline{k \oplus \ell'}, \tilde{s}'|_Q \langle k|_R \sum_{\sigma' \in \ell',\tilde{s}'} \sqrt{p(\sigma')} \langle\sigma'|_E \langle\eta|_{R'} \right\}. \tag{S86}$$

Now the trace over $R$ sets $k = r$. The trace over $E$ gives $\sigma = \sigma'$, which is only possible if $s = s'$ and $\ell = \ell'$. Now $\sum_{\sigma \in \ell, \tilde{s}} p(\sigma) = p(\ell, \tilde{s})$. Therefore we would have

$$M_{Q'} = \sum_k \lambda_k \ket{k}_{R'} \sum_{\ell, \tilde{s}} p(\ell, \tilde{s}) \ket{\bar{\ell}, \tilde{s}}_Q \bra{k \oplus \ell, \tilde{s}}_Q \bra{\eta}_{R'}. \tag{S87}$$

At this point, how do we do the SVD of this object? First thing we can do is note that it is block diagonal in $s$. Separate $Q = Q_S Q_L$ (syndrome and logical space) to get

$$M_{Q'} = \sum_{\tilde{s}} \ket{\tilde{s}}\bra{\tilde{s}}_{Q_S} \sum_{k,\ell} \lambda_k \ket{k}_{R'} p(\ell, \tilde{s}) \ket{\bar{\ell}}_{Q_L} \bra{k \oplus \ell}_{Q_L} \bra{\eta}_{R'}. \tag{S88}$$

Now shift $\ell \to \ell \oplus k$ to find

$$M_{Q'} = \sum_{\tilde{s}} \ket{\tilde{s}}\bra{\tilde{s}}_{Q_S} \sum_{\ell} \left( \sum_k \lambda_k \ket{k}_{R'} p(\ell \oplus k, s) \ket{\overline{\ell \oplus k}}_{Q_L} \right) \bra{\bar{\ell}}_{Q_L} \bra{\eta}_{R'}. \tag{S89}$$

Now note that

$$\sum_k \lambda_k \ket{k}_{R'} p(\ell \oplus k, \tilde{s}) \ket{\overline{\ell \oplus k}}_{Q_L} = \text{CNOT}_{R' \to Q_L} \underbrace{\left( \sum_k \lambda_k p(\ell \oplus k, \tilde{s}) \ket{k}_{R'} \right)}_{:= \ket{\tilde{w}_{\ell,s}}_{R'}} \ket{\bar{\ell}}_{Q_L}. \tag{S90}$$

We have

$$\braket{\tilde{w}_{\ell,s} | \tilde{w}_{\ell',s}} = \sum_k \lambda_k^2 p(\ell \oplus k, s)^2. \tag{S91}$$

So, define $\mu_{\ell,s} = \sqrt{\sum_k \lambda_k^2 p(\ell \oplus k, s)^2}$ and $\ket{w_{\ell,s}}_{R'} = \mu_{\ell,s}^{-1} \ket{\tilde{w}_{\ell,s}}_{R'}$. So we have

$$M_{Q'} = \text{CNOT}_{R' \to Q_L} \sum_{\ell, \tilde{s}} \beta_{\ell, \tilde{s}} \ket{\tilde{s}}\bra{\tilde{s}}_{Q_S} \ket{w_{\ell,\tilde{s}}}\bra{\eta}_{R'} \ket{\bar{\ell}}\bra{\bar{\ell}}_{Q_L}, \tag{S92}$$

i.e. disregarding the CNOT, the matrix is entirely diagonal on $Q$. Now this is not a full rank matrix, since the rank on $R'$ is only 1. We have absorbed normalisation into the singular value $\beta_{l,\tilde{s}}$. Therefore we must extend the matrix to the space orthogonal to $\ket{w_{\ell,s}}$ and $\ket{\eta}$. Then we need the additional term

$$M_{Q'}^{\text{full rank}} = \text{CNOT}_{R' \to Q_L} \sum_{\ell, \tilde{s}} \sigma_{\ell, \tilde{s}} \ket{\tilde{s}}\bra{\tilde{s}}_{Q_S} \ket{w_{\ell,\tilde{s}}}\bra{\eta}_{R'} \ket{\bar{\ell}}\bra{\bar{\ell}}_{Q_L} + \text{CNOT}_{R' \to Q_L} \sum_{\ell, \tilde{s}} \ket{\tilde{s}}\bra{\tilde{s}}_{Q_S} \ket{w_{\ell,\tilde{s}}^\perp}\bra{\eta^\perp}_{R'} \ket{\bar{\ell}}\bra{\bar{\ell}}_{Q_L}. \tag{S93}$$

Then define the local unitary $V_{R'}^{(\ell, \tilde{s})} = \ket{\eta}\bra{w_{\ell,\tilde{s}}}_{R'} + \ket{\eta^\perp}\bra{w_{\ell,\tilde{s}}^\perp}_{R'}$. Finally, the optimal unitary is given by

$$U_{Q'}^{\text{argmax}} = \sum_{\ell, \tilde{s}} \ket{\bar{\ell}, \tilde{s}}\bra{\bar{\ell}, \tilde{s}}_Q \otimes V_{R'}^{(\ell, s)} \text{CNOT}_{R' \to Q_L}. \tag{S94}$$

Now let us apply this operator onto $\ket{\tilde{\Psi}_{Q'RE}}$:

$$U_{Q'}^{\text{argmax}} \ket{\tilde{\Psi}_{Q'RE}} = \sum_{\ell', \tilde{s}'} \ket{\bar{\ell}', \tilde{s}'}\bra{\bar{\ell}', \tilde{s}'}_Q \otimes V_{R'}^{(\ell', \tilde{s}')} \text{CNOT}_{R' \to Q_L} \sum_r \sqrt{\lambda_r} \ket{r}_R \ket{r}_{R'} \sum_{\ell, \tilde{s}} \ket{\bar{\ell}, \tilde{s}}_Q \sum_{\sigma \in \ell, \tilde{s}} \sqrt{p(\sigma)} \ket{\sigma}_E \tag{S95}$$

$$= \sum_r V_{R'}^{(\ell \oplus r, \tilde{s})} \sqrt{\lambda_r} \ket{r}_R \ket{r}_{R'} \sum_{\ell, \tilde{s}} \ket{\overline{\ell \oplus r}, \tilde{s}}_Q \sum_{\sigma \in \tilde{s}, \ell} \sqrt{p(\sigma)} \ket{\sigma}_E \tag{S96}$$

$$= \sum_{r, \tilde{s}, \ell} \sqrt{\lambda_r p(\tilde{s}, \ell)} \ket{r}_R \left( V_{R'}^{(\ell \oplus r, \tilde{s})} \ket{r}_{R'} \ket{\overline{\ell \oplus r}, \tilde{s}}_Q \right) \sum_{\sigma \in \tilde{s}, \ell} \sqrt{\frac{p(\sigma)}{p(\tilde{s}, \ell)}} \ket{\sigma}_E. \tag{S97}$$

Now let's check that $|\phi_{Q'}^{r,(\tilde{s},\ell)}\rangle := V_{R'}^{(\ell\oplus r,\tilde{s})}|r\rangle_{R'}|\overline{\ell\oplus r},\tilde{s}\rangle_Q$ is orthogonal:

$$\langle\phi_{Q'}^{r,(\tilde{s},\ell)}|\phi_{Q'}^{r',(\tilde{s}',\ell')}\rangle = \langle r|_{R'}V_{R'}^{(\ell\oplus r,\tilde{s})\dagger}\langle\overline{\ell\oplus r},\tilde{s}|_Q V_{R'}^{(\ell'\oplus r',\tilde{s}')}|r'\rangle_{R'}|\overline{\ell'\oplus r'},\tilde{s}'\rangle_Q \tag{S98}$$

$$= \delta_{\tilde{s},\tilde{s}'}\delta_{\ell\oplus r,\ell'\oplus r'}\langle r|_{R'}V_{R'}^{(\ell\oplus r,\tilde{s})\dagger}V_{R'}^{(\ell'\oplus r',\tilde{s}')}|r'\rangle_{R'} \tag{S99}$$

$$= \delta_{\tilde{s},\tilde{s}'}\delta_{\ell\oplus r,\ell'\oplus r'}\langle r|_{R'}V_{R'}^{(\ell\oplus r,\tilde{s})\dagger}V_{R'}^{(\ell\oplus r,\tilde{s})}|r'\rangle_{R'} \tag{S100}$$

$$= \delta_{\tilde{s},\tilde{s}'}\delta_{\ell\oplus r,\ell'\oplus r'}\delta_{rr'} = \delta_{\tilde{s},\tilde{s}'}\delta_{\ell,\ell'}\delta_{r,r'}. \tag{S101}$$

Importantly, $\{|k_R\rangle\}_k$, $\{|\phi_{Q'}^{k\ell}\rangle\}$ and $\{|l_E\rangle\}$ now independently form an orthonormal set. So the state is now written in the form

$$|\tilde{\Psi}_{Q'RE}\rangle = \sum_{k\ell}\sqrt{\mu_k\nu_l}|k_R\rangle\otimes|\phi_{Q'}^{k\ell}\rangle\otimes|l_E\rangle. \tag{S102}$$

For us, $l\leftrightarrow(\ell,\tilde{s})$ and $k\leftrightarrow r$. Then, the measurement basis is given by $\mathbb{\Pi}_{kl} = |\phi_{Q'}^{kl}\rangle\langle\phi_{Q'}^{kl}|$, which for us is

$$\mathbb{\Pi}_{\tilde{s},\ell} = \sum_r V_{R'}^{(\ell\oplus r,\tilde{s})}|r\rangle_{R'}|\overline{\ell\oplus r},\tilde{s}\rangle_Q\langle r|_{R'}V_{R'}^{(\ell\oplus r,\tilde{s})\dagger}\langle\overline{\ell\oplus r},\tilde{s}|_Q = \mathbb{\Pi}_{\ell|\tilde{s}}\mathbb{\Pi}_s. \tag{S103}$$

Note how, if we trace over $R'$ in the projector above, Now we must come up with a conditional unitary recovery channel that restores $Q$:

$$U^{(s,\ell)}|\phi_{Q'}^{r,(s,\ell)}\rangle = U^{(s,\ell)}V_{R'}^{(\ell\oplus r,s)}|r\rangle_{R'}|\overline{\ell\oplus r},s\rangle_Q\langle r|_{R'} = |\bar{r}\rangle_Q|\eta\rangle_{R'}. \tag{S104}$$

The clear answer is $U^{(s,\ell)} = \bar{X}^{\tilde{s}}\bar{X}^\ell$. But we also need to erase the entanglement with $R'$, and therefore apply the following controlled unitary channel following $U^{(s,\ell)}$:

$$\sum_r|\bar{r}\rangle\langle\bar{r}|_Q\otimes\left(U_{\eta\leftarrow r}V_{R'}^{(\ell\oplus r,s)\dagger}\right), \tag{S105}$$

where $U_{\eta\leftarrow r}$ is any unitary that rotates $r$ to $\eta$ (the initially prepared state on $R'$). Therefore the recovery unitary is

$$U_{Q'}^{(\ell,s)} = \sum_r|\bar{r}\rangle\langle\bar{r}|_Q\otimes\left(U_{\eta\leftarrow r}V_{R'}^{(\ell\oplus r,s)\dagger}\right)X_Q^{\tilde{s}}\bar{X}_Q^\ell. \tag{S106}$$

So, the Kraus operator for the recovery channel can be written as

$$U_{Q'}^{(\ell,s)}\mathbb{\Pi}_{\ell,\tilde{s}} = \sum_r|\eta\rangle_{R'}|\bar{r},0\rangle_Q\langle r|_{R'}V_{R'}^{(\ell\oplus r,s)\dagger}\langle\overline{\ell\oplus r},s|_Q \tag{S107}$$

$$\tag{S108}$$

Therefore the channel on $R'Q$ is

$$\mathcal{R}_{R'Q}[\cdot] = \sum_{\ell,s}U_{Q'}^{(\ell,s)}\mathbb{\Pi}_{\ell,\tilde{s}}(\cdot)\mathbb{\Pi}_{\ell,\tilde{s}}U_{Q'}^{(\ell,s)\dagger} \tag{S109}$$

To make this channel be exclusively on $Q$, we should prepare an initial state $|\eta\rangle\langle\eta|_{R'}$ on $R'$ (as that is what we did for $|\Psi'_{RQE}\rangle$) and trace it out:

$$\mathcal{R}_Q[\cdot] = \text{tr}_{R'}\sum_{\ell,s}U_{Q'}^{(\ell,s)}\mathbb{\Pi}_{\ell,\tilde{s}}(\cdot\otimes|\eta\rangle\langle\eta|_{R'})\mathbb{\Pi}_{\ell,\tilde{s}}U_{Q'}^{(\ell,s)\dagger}. \tag{S110}$$

Now consider

$$U_{Q'}^{(\ell,s)}\mathbb{\Pi}_{\ell,\tilde{s}}|\eta\rangle_{R'} = \sum_r|\eta\rangle_{R'}|\bar{r},0\rangle_Q\langle r|_{R'}V_{R'}^{(\ell\oplus r,s)\dagger}|\eta\rangle_{R'}\langle\overline{\ell\oplus r},\tilde{s}|_Q \tag{S111}$$

$$= \sum_r|\eta\rangle_{R'}|\bar{r},0\rangle_Q\langle r|w_{\ell\oplus r,\tilde{s}}\rangle_{R'}\langle\overline{\ell\oplus r},\tilde{s}|_Q \tag{S112}$$

$$= \sum_r\frac{\lambda_{\ell\oplus r}p(\ell,\tilde{s})}{\mu_{\ell\oplus r,\tilde{s}}}|\eta\rangle_{R'}|\bar{r},0\rangle_Q\langle\overline{\ell\oplus r},\tilde{s}|_Q \tag{S113}$$

$$\underset{r\to r\oplus\ell}{=}\sum_r\frac{\lambda_r p(\ell,\tilde{s})}{\mu_{r,\tilde{s}}}|\eta\rangle_{R'}|\overline{r\oplus\ell},0\rangle_Q\langle\bar{r},\tilde{s}|_Q. \tag{S114}$$

Then $\mathcal{R}_Q$ is given by

$$\mathcal{R}_Q[\cdot] = \sum_{\ell,s} \left( \sum_r \frac{\lambda_r p(\ell,\tilde{s})}{\mu_{r,\tilde{s}}} \left|\overline{r \oplus \ell}, 0\right\rangle\!\left\langle\overline{r}, \tilde{s}\right|_Q \right) \cdot \left( \sum_{r'} \frac{\lambda_{r'} p(\ell,\tilde{s})}{\mu_{r',\tilde{s}}} \left|\overline{r'}, \tilde{s}\right\rangle\!\left\langle\overline{r' \oplus \ell}, 0\right|_Q \right) \tag{S115}$$

$$= \sum_{\ell,s} \bar{X}^\ell \left( \sum_r \frac{\lambda_r p(\ell,\tilde{s})}{\mu_{r,\tilde{s}}} \left|\overline{r}, 0\right\rangle\!\left\langle\overline{r}, \tilde{s}\right|_Q \right) \cdot \left( \sum_{r'} \frac{\lambda_{r'} p(\ell,\tilde{s})}{\mu_{r',\tilde{s}}} \left|\overline{r'}, \tilde{s}\right\rangle\!\left\langle\overline{r'}, 0\right|_Q \right) \bar{X}^\ell. \tag{S116}$$

At this point, we specialise to $\lambda_r = 1/d$. Then $\mu_{\ell,\tilde{s}} = \sqrt{\sum_k p(\ell \oplus k, \tilde{s})^2}/d = \sqrt{\sum_k p(k, \tilde{s})^2}/d$, which is independent of $\ell$. Therefore we get

$$\mathcal{R}_Q[\cdot] = \sum_{\ell,s} \frac{p(\ell,\tilde{s})^2}{\sum_{\ell'} p(\ell',\tilde{s})^2} \bar{X}^\ell \left( \sum_r \left|\overline{r}, 0\right\rangle\!\left\langle\overline{r}, \tilde{s}\right|_Q \right) \cdot \left( \sum_{r'} \left|\overline{r'}, \tilde{s}\right\rangle\!\left\langle\overline{r'}, 0\right|_Q \right) \bar{X}^\ell. \tag{S117}$$

Using the same trick as Eq. (S75), we find:

$$\mathcal{R}_Q[\cdot] = \sum_{\ell,\tilde{s}} \frac{p(\ell,\tilde{s})^2}{\sum_{\ell'} p(\ell',\tilde{s})^2} \bar{X}^\ell X^{\tilde{s}} \Pi_s(\cdot) \Pi_s X^{\tilde{s}} \bar{X}^\ell, \tag{S118}$$

which can be written as

$$\mathcal{R}_Q[\cdot] = \sum_{\ell,\tilde{s}} q(\ell|\tilde{s}) \bar{X}^\ell X^{\tilde{s}} \Pi_s(\cdot) \Pi_s X^{\tilde{s}} \bar{X}^\ell, \tag{S119}$$

where

$$q(\ell|\tilde{s}) = \frac{p(\ell,\tilde{s})^2}{\sum_{\ell'} p(\ell',\tilde{s})^2}. \tag{S120}$$

### C. Pauli error and the SW recovery channel

We now consider general Pauli errors. We assume a uniform prior logical state $\lambda_k = 1/d$. The state after error is

$$\left|\Psi'_{QRE}\right\rangle = \frac{1}{\sqrt{d}} \sum_k U_{QE} \left|\bar{k}\right\rangle_Q \left|k\right\rangle_R \left|0\right\rangle_E = \frac{1}{\sqrt{d}} \sum_k \sum_P \sqrt{p(P)} P_Q \left|\bar{k}\right\rangle_Q \left|k\right\rangle_R \left|P\right\rangle_E. \tag{S121}$$

Here $d = 2^K$, where $K$ is the number of logical qubits. The action of the Pauli string $P$ on the code space is $P\left|\bar{k}, 0\right\rangle_Q = (-1)^{f(P)}(-1)^{(k\oplus \ell_z)\cdot \ell_x} \left|\overline{k \oplus \ell_z}, \tilde{s}\right\rangle_Q$.

Separating $P$ into these components, we get

$$\left|\Psi'_{QRE}\right\rangle = \frac{1}{\sqrt{d}} \sum_k \left|k\right\rangle_R \sum_{\ell,\tilde{s}} (-1)^{(k\oplus \ell_z)\cdot \ell_x} \left|\overline{k \oplus \ell_z}, \tilde{s}\right\rangle_Q \sum_{P \in \ell,\tilde{s}} (-1)^{f(P)} \sqrt{p(P)} \left|P\right\rangle_E. \tag{S122}$$

Each of the logical-syndrome pairs is orthogonal to each other.

Then the density matrix on subsystem $E$ is found by tracing out $R$ and $Q$. Let the indices on the ket be primed. Then the trace on $R$ sets $k = k'$, and the one on $Q$ sets $\ell_z = \ell'_z$ and $\tilde{s} = \tilde{s}'$. The remaining sum on $k$ gives

$$\frac{1}{d} \sum_k (-1)^{(\ell_x + \ell'_x)(k \oplus \ell_z)} = \delta_{\ell_x, \ell'_x}. \tag{S123}$$

Therefore we get

$$\rho'_E = \sum_{\ell,\tilde{s}} \underbrace{\left( \sum_{P \in \ell,\tilde{s}} (-1)^{f(P)} \sqrt{p(P)} \left|P\right\rangle \right)}_{:=\left|e^{\ell s}\right\rangle_E} \left(\text{h.c.}\right). \tag{S124}$$

Then the purification of $\rho'_R \otimes \rho'_E$ is

$$\left|\tilde{\Psi}'_{Q'RE}\right\rangle = \frac{1}{\sqrt{d}} \sum_r |r\rangle_R |r\rangle_{R'} \sum_{\ell s} |\overline{\ell_z}, s\rangle_Q |\ell_x\rangle_{R'} |e^{\ell s}\rangle_E. \tag{S125}$$

Let $|\Psi'_{Q'RE}\rangle := |\Psi'_{QRE}\rangle |\eta\rangle_{R'R''}$. Then as before, consider the matrix

$$M_{Q'} = \text{tr}_{RE}\left\{\left|\tilde{\Psi}'_{Q'RE}\right\rangle\!\!\left\langle\Psi'_{Q'RE}\right|\right\} \tag{S126}$$

$$= \frac{1}{d}\text{tr}_{RE}\left\{\sum_r |r\rangle_R |r\rangle_{R'} \sum_{\tilde{s},\ell_x,\ell_z} |\ell_x\rangle_{Q_X} |\overline{\ell_z},\tilde{s}\rangle_Q \sum_{P \in \tilde{s},\ell_x,\ell_z} \sqrt{p(P)} |P\rangle_E \right.$$

$$\left.\times \sum_k \sum_{\tilde{s}',\ell'_x,\ell'_z} \langle\overline{k \oplus \ell'_z},\tilde{s}'|_Q \langle k|_R \sum_{P' \in \tilde{s}',\ell'_x,\ell'_z} \sqrt{p(P')} (-1)^{(k \oplus \ell'_z) \cdot \ell'_x} \langle P'|_E \langle\eta|_{R'R''}\right\}. \tag{S127}$$

The trace over $R$ sets $k = r$. The trace over $E$ gives $P = P'$, which is only possible if $\tilde{s} = \tilde{s}'$, $\ell_x = \ell'_x$, and $\ell_z = \ell'_z$. Now $\sum_{P \in \tilde{s},\ell_x,\ell_z} p(P) = p(\ell_x,\ell_z,\tilde{s})$. Therefore we have

$$M_{Q'} = \frac{1}{d}\sum_k |k\rangle_{R'} \sum_{\tilde{s},\ell_x,\ell_z} p(\ell_x,\ell_z,\tilde{s})(-1)^{(k \oplus \ell_z) \cdot \ell_x} |\ell_x\rangle_{Q_X} |\overline{\ell_z},\tilde{s}\rangle_Q \langle\overline{k \oplus \ell_z},\tilde{s}|_Q \langle\eta,\eta_x|_{R'R''}. \tag{S128}$$

We note that it is block diagonal in $\tilde{s}$. Separate $Q = Q_S Q_L$ to get

$$M_{Q'} = \frac{1}{d}\sum_{\tilde{s}} |\tilde{s}\rangle\langle\tilde{s}|_{Q_S} \sum_{\ell_z}\left(\sum_{k,\ell_x} |k\rangle_{R'} |\ell_x\rangle_{Q_X} p(\ell_x,\ell_z,\tilde{s})(-1)^{(k \oplus \ell_z) \cdot \ell_x} |\overline{\ell_z}\rangle_{Q_L}\right) \langle\overline{k \oplus \ell_z}|_{Q_L} \langle\eta,\eta_x|_{R'R''}. \tag{S129}$$

Now shift $\ell_z \to \ell_z \oplus k$ to find

$$M_{Q'} = \frac{1}{d}\sum_{\tilde{s}} |\tilde{s}\rangle\langle\tilde{s}|_{Q_S} \sum_{\ell_z}\left(\sum_{k,\ell_x} |k\rangle_{R'} |\ell_x\rangle_{Q_X} p(\ell_x,\ell_z \oplus k,\tilde{s})(-1)^{\ell_z \cdot \ell_x} |\overline{\ell_z \oplus k}\rangle_{Q_L}\right) \langle\overline{\ell_z}|_{Q_L} \langle\eta,\eta_x|_{R'R''}. \tag{S130}$$

Now note that we can factor out the bit-flip with a CNOT, and the phase-flip with a CZ:

$$\sum_{k,\ell_x} p(\ell_x,\ell_z \oplus k,\tilde{s})(-1)^{\ell_z \cdot \ell_x} |k\rangle_{R'} |\ell_x\rangle_{Q_X} |\overline{\ell_z \oplus k}\rangle_{Q_L}$$

$$= \text{CNOT}_{R' \to Q_L} \text{CZ}_{Q_L,Q_X} \underbrace{\left(\sum_{k,\ell_x} p(\ell_x,\ell_z \oplus k,\tilde{s}) |k\rangle_{R'} |\ell_x\rangle_{Q_X}\right)}_{:=|\tilde{w}_{\ell_z,\tilde{s}}\rangle_{R'R''}} |\overline{\ell_z}\rangle_{Q_L}. \tag{S131}$$

We evaluate the norm:

$$\langle\tilde{w}_{\ell_z,\tilde{s}}|\tilde{w}_{\ell_z,\tilde{s}}\rangle = \sum_{k,\ell_x} p(\ell_x,\ell_z \oplus k,\tilde{s})^2 = \sum_{k',\ell_x} p(\ell_x,k',\tilde{s})^2 =: \mu_{\tilde{s}}^2. \tag{S132}$$

Define $|w_{\ell_z,\tilde{s}}\rangle_{R'Q_X} = \mu_{\tilde{s}}^{-1} |\tilde{w}_{\ell_z,\tilde{s}}\rangle_{R'Q_X}$. So we have

$$M_{Q'} = \text{CNOT}_{R' \to Q_L} CZ_{Q_L,R''} \sum_{\ell_z,\tilde{s}} \beta_{\tilde{s}} |\tilde{s}\rangle\langle\tilde{s}|_{Q_S} |w_{\ell_z,\tilde{s}}\rangle\langle\eta|_{R'R''} |\overline{\ell_z}\rangle\langle\overline{\ell_z}|_{Q_L}, \tag{S133}$$

where we have absorbed normalizations into $\beta_{\tilde{s}}$. Extending to a full rank unitary exactly as before via the orthogonal padding $V_{R'R''}^{(\ell_z,\tilde{s})} = |\eta,\eta_x\rangle\langle w_{\ell_z,\tilde{s}}|_{R'Q_X} + \ldots$, the optimal unitary is given by

$$U_{Q'}^{\text{argmax}} = \sum_{\ell_z,\tilde{s}} |\overline{\ell_z},\tilde{s}\rangle\langle\overline{\ell_z},\tilde{s}|_Q \otimes V_{R'R''}^{(\ell_z,\tilde{s})} CZ_{Q_L,R''} \text{CNOT}_{R' \to Q_L}. \tag{S134}$$

Now let us apply this operator onto $|\tilde{\Psi}_{Q'RE}\rangle$:

$$U_{Q'}^{\text{argmax}}|\tilde{\Psi}_{Q'RE}\rangle = \sum_{\ell'_z,\tilde{s}'}|\bar{\ell}'_z,\tilde{s}'\rangle\langle\bar{\ell}'_z,\tilde{s}'|_Q \otimes V_{R'Q_X}^{(\ell'_z,\tilde{s}')}\text{CNOT}_{R'\to Q_L}\text{CZ}_{Q_L,Q_X} \tag{S135}$$

$$\times \sum_r \frac{1}{\sqrt{d}}|r\rangle_R|r\rangle_{R'}\sum_{\ell_x,\tilde{s}}|\ell_x\rangle_{Q_X}\sum_{\ell_z}|\overline{\ell_z},\tilde{s}\rangle_Q \sum_{P\in\tilde{s},\ell_x,\ell_z}\sqrt{p(P)}|P\rangle_E$$

$$= \frac{1}{\sqrt{d}}\sum_{r\ell\tilde{s}}\underbrace{(-1)^{(\ell_z\oplus r)\ell_x}V_{R'R''}^{(\ell_z\oplus r,\tilde{s})}|r\rangle_{R'}|\ell_x\rangle_{R''}|\overline{\ell_z\oplus r},\tilde{s}\rangle_Q}_{=:|\phi^{r,(\ell,\tilde{s})}\rangle_{Q'}}|e^{\ell\tilde{s}}\rangle_E. \tag{S136}$$

Now let's check that $|\phi_{Q'}^{r,(\ell,\tilde{s})}\rangle$ is orthogonal:

$$\left\langle\phi_{Q'}^{r,(\ell,\tilde{s})}\Big|\phi_{Q'}^{r',(\ell',\tilde{s}')}\right\rangle. \tag{S137}$$

Let the indices on the ket be primed. The inner product on $Q$ sets $s = s'$ and $\ell_z \oplus r = \ell'_z \oplus r'$. Then unitaries cancel and gives $r = r'$ and $\ell_x = \ell'_x$ on $R'R''$. Therefore they are orthonormal. Therefore the measurement is with respect to these states:

$$\mathbb{\Pi}_{\ell s} = \sum_r \left|\phi^{r,(\ell,\tilde{s})}\right\rangle\!\!\left\langle\phi^{r,(\ell,\tilde{s})}\right|_{Q'}. \tag{S138}$$

Now construct a recovery unitary for $|\phi^{r,(\ell,\tilde{s})}\rangle = (-1)^{(\ell_z\oplus r)\ell_x}V_{R'R''}^{(\ell_z\oplus r,\tilde{s})}|r\rangle_{R'}|\ell_x\rangle_{R''}|\overline{\ell_z\oplus r},\tilde{s}\rangle_Q$. First, we should undo the syndromes with $(XZ)^{\tilde{s}}$. Then, we undo the phase with $\bar{Z}^{\ell_x}$ and the bit-flip $\bar{X}^{\ell_z}$. This results in

$$V_{R'R''}^{(\ell_z\oplus r,\tilde{s})}|r\rangle_{R'}|\ell_x\rangle_{R''}|\bar{r}\rangle_Q. \tag{S139}$$

Follow this up with a controlled unitary:

$$\sum_r |\bar{r}\rangle\!\langle\bar{r}|_Q \otimes U_{R'R''}^{\eta\leftarrow r\ell_x}V_{R'R''}^{(\ell_z\oplus r,\tilde{s})\dagger}. \tag{S140}$$

So the recovery unitary is

$$U^{\ell\tilde{s}} = \sum_r |\bar{r}\rangle\!\langle\bar{r}|_Q \otimes U_{R'R''}^{\eta\leftarrow r\ell_x}V_{R'R''}^{(\ell_z\oplus r,\tilde{s})\dagger}\bar{X}^{\ell_z}\bar{Z}^{\ell_x}(XZ)^{\tilde{s}}. \tag{S141}$$

Now we trace out $R'R''$ to define a channel only on $Q$. We have:

$$\mathcal{R}_{Q'}[\cdot] = \sum_{\ell\tilde{s}} U^{\ell\tilde{s}}\mathbb{\Pi}_{\ell\tilde{s}}(\cdot)\mathbb{\Pi}_{\ell\tilde{s}}U^{\ell\tilde{s}\dagger} \tag{S142}$$

Then the channel on $Q$ is

$$\mathcal{R}_Q[\cdot] = \sum_{\ell\tilde{s}} U^{\ell\tilde{s}}\mathbb{\Pi}_{\ell\tilde{s}}(\cdot \otimes |\eta\rangle\!\langle\eta|_{R'R''})\mathbb{\Pi}_{\ell\tilde{s}}U^{\ell\tilde{s}\dagger}. \tag{S143}$$

Let us look at $U^{\ell\tilde{s}}\mathbb{\Pi}_{\ell\tilde{s}}|\eta\rangle_{R'R''}$. We have

$$\Pi_{ls} = \sum_r |\overline{\ell_z\oplus r},\tilde{s}\rangle\langle\overline{\ell_z\oplus r},\tilde{s}|_Q V_{R'R''}^{(\ell_z\oplus r,\tilde{s})}|r,\ell_x\rangle\langle r,\ell_x|_{R'R''}V_{R'R''}^{(\ell_z\oplus r,\tilde{s})\dagger},$$

And

$$\langle r,\ell_x|V_{R'R''}^{(\ell_z\oplus r,\tilde{s})\dagger}|\eta\rangle_{R'R''} = \langle r,\ell_x|w_{\ell_z\oplus r,\tilde{s}}\rangle = \frac{p(\ell_x,\ell_z,\tilde{s})}{\mu_s}.$$

So

$$\mathbb{\Pi}_{\ell\tilde{s}}|\eta\rangle_{R'R''} = p(\ell,\tilde{s})\sum_r |\overline{\ell_z\oplus r},\tilde{s}\rangle\langle\overline{\ell_z\oplus r},\tilde{s}|_Q V_{R'R''}^{(\ell_z\oplus r,\tilde{s})}|r,\ell_x\rangle_{R'R''}.$$

Now consider

$$U_{\text{rec}}^{ls}\Pi_{ls}\ket{\eta}_{R'R''} = \sum_r \ket{\bar{r}}\bra{\bar{r}}_Q \otimes U_{R'R''}^{r \leftarrow r \oplus \ell_x} V_{R'R''}^{(\ell_z \oplus r, \tilde{s})\dagger} \bar{X}^{\ell_z} \bar{Z}^{\ell_x}(XZ)^{\tilde{s}}$$

$$\times p(\ell, \tilde{s}) \sum_{r'} \ket{\overline{\ell_z \oplus r'}, \tilde{s}} \bra{\overline{\ell_z \oplus r'}, \tilde{s}}_Q V_{R'R''}^{(\ell_z \oplus r', \tilde{s})} \ket{r', \ell_x}_{R'R''}$$

$$= \sum_r \ket{\bar{r}}\bra{\overline{r \oplus \ell_z}, \tilde{s}} (-1)^{(r\oplus\ell_z)\ell_x} \cancel{U_{R'R''}^{r \leftarrow r \oplus \ell_x} V_{R'R''}^{(\ell_z \oplus r, \tilde{s})\dagger}}$$

$$p(\ell, \tilde{s}) \sum_{r'} \ket{\overline{\ell_z \oplus r'}, \tilde{s}}\bra{\overline{\ell_z \oplus r'}, \tilde{s}}_Q \cancel{V_{R'R''}^{(\ell_z \oplus r', \tilde{s})} \ket{r', \ell_x}_{R'R''}}$$

$$\cancel{\ket{\eta}_{R'R''}}$$

Also have $r = r'$:

$$= \frac{p(\ell, \tilde{s})}{\mu_s} \sum_r \ket{\bar{r}}\bra{\overline{r \oplus \ell_z}, \tilde{s}}_Q \ket{\eta}_{R'R''} (-1)^{(r\oplus\ell_z)\ell_x}$$

So after tracing over $R'R''$, we have:

$$\mathcal{R}_Q = \sum_{\ell\tilde{s}} \left( \sqrt{q(\ell|\tilde{s})} \sum_r \ket{\bar{r}}\bra{\overline{r \oplus \ell_z}, \tilde{s}} (-1)^{(r\oplus\ell_z)\ell_x} \right) (\cdot)(\text{h.c.}),$$

where

$$q(\ell|\tilde{s}) := \frac{p(\ell, \tilde{s})^2}{\sum_{\ell'} p(\ell, \tilde{s})^2}. \tag{S144}$$

Finally, on

$$\sum_r \ket{\bar{r}}\bra{\overline{r \oplus \ell_z}, \tilde{s}} (-1)^{(r\oplus\ell_z)\ell_x}, \tag{S145}$$

We can shift $r \to r \oplus \ell_z$ and use the same trick as the Pauli errors and Petz case to write it as $\bar{X}^{\ell_z}\bar{Z}^{\ell_x}(ZX)^{\tilde{s}}\Pi_s$, to get

$$\mathcal{R}_Q = \sum_{\ell\tilde{s}} \left( \sqrt{q(\ell|\tilde{s})} \bar{X}^{\ell_z}\bar{Z}^{\ell_x}(ZX)^{\tilde{s}}\Pi_s \right)(\cdot)(\text{h.c.}),$$

which is a measurement of syndromes followed by a correction step from a sampling decoder with probabilities $q(\ell|\tilde{s})$.

### D. Amplitude damping and the Petz recovery channel

The Kraus operator for the entire channel of amplitude damping channel can be written as

$$K_\sigma = \sum_\tau \sqrt{p_{\sigma\tau}} Z^\tau X^\sigma. \tag{S146}$$

The Pauli string acts as

$$Z^\tau X^\sigma \ket{\bar{k}} = (-1)^{f(\sigma,\tau)+\ell_x(\tau)(k\oplus\ell_z(\sigma))} \ket{\overline{k \oplus l_z(\sigma)}, \tilde{s}_x(\tau), \tilde{s}_z(\sigma)}. \tag{S147}$$

For ease of notation, from below, we no longer label the syndrome basis with tildes. Then we have

$$\mathcal{E}[\Pi] = \sum_{k,\ell_z,s_z} \sum_{\sigma \in \ell_z, s_z} \sum_{\ell_x s_x} \sum_{\tau \in \ell_x, s_x} \sqrt{p_{\sigma\tau}}(-1)^{f(\sigma,\tau)+\ell_x(k\oplus\ell_z)} \ket{\overline{k \oplus \ell_z}, s_x, s_z} \sum_{\ell'_x s'_x} \sum_{\tau' \in \ell'_x, s'_x} \bra{\overline{k \oplus \ell_z}, s'_x, s_z} \sqrt{p_{\sigma\tau'}}(-1)^{f(\sigma,\tau')+\ell'_x(k\oplus\ell_z)}. \tag{S148}$$

Let $\alpha_{\ell_x s_x \sigma} := \sum_{\tau \in \ell_x s_x} \sqrt{p_{\sigma\tau}} (-1)^{f(\sigma,\tau)}$:

$$\mathcal{E}[\Pi] = \sum_{k,\ell_z,s_z} \sum_{\sigma \in \ell_z,s_z} \sum_{\ell_x s_x} \alpha_{\ell_x s_x \sigma} (-1)^{\ell_x(k\oplus\ell_z)} \left|\overline{k\oplus\ell_z}, s_x, s_z\right\rangle \sum_{\ell'_x s'_x} \left\langle\overline{k\oplus\ell_z}, s'_x, s_z\right| \alpha_{\ell'_x s'_x \sigma} (-1)^{\ell'_x(k\oplus\ell_z)}. \tag{S149}$$

Summing over $\sigma$, define $A^{\ell_z s_z}_{\ell_x s_x, \ell'_x s'_x} = \sum_{\sigma \in \ell_z s_z} \alpha_{\ell_x s_x \sigma} \alpha_{\ell'_x s'_x \sigma}$:

$$\mathcal{E}[\Pi] = \sum_{k,\ell_z,s_z} \sum_{\ell_x s_x} \sum_{\ell'_x s'_x} (-1)^{\ell_x(k\oplus\ell_z)} \left|\overline{k\oplus\ell_z}, s_x, s_z\right\rangle A^{\ell_z s_z}_{\ell_x s_x, \ell'_x s'_x} \left\langle\overline{k\oplus\ell_z}, s'_x, s_z\right| (-1)^{\ell'_x(k\oplus\ell_z)}. \tag{S150}$$

Note that we can shift the sum $k \to k \oplus \ell_z$, which gives

$$\mathcal{E}[\Pi] = \sum_{k,\ell_z,s_z} \sum_{\ell_x s_x} \sum_{\ell'_x s'_x} (-1)^{\ell_x k} \left|\overline{k}, s_x, s_z\right\rangle A^{\ell_z s_z}_{\ell_x s_x, \ell'_x s'_x} \left\langle\overline{k}, s'_x, s_z\right| (-1)^{\ell'_x k}. \tag{S151}$$

We may now sum over $\ell_z$, leaving $A^{s_z}_{\ell_x s_x, \ell'_x s'_x} := \sum_{\ell_z} A^{\ell_z s_z}_{\ell_x s_x, \ell'_x s'_x}$:

$$\mathcal{E}[\Pi] = \sum_{k,s_z} \sum_{\ell_x s_x} \sum_{\ell'_x s'_x} (-1)^{\ell_x k} \left|\overline{k}, s_x, s_z\right\rangle A^{s_z}_{\ell_x s_x, \ell'_x s'_x} \left\langle\overline{k}, s'_x, s_z\right| (-1)^{\ell'_x k}. \tag{S152}$$

Let $A^{k s_z}_{s_x, s'_x} := \sum_{\ell_x \ell'_x} (-1)^{\ell_x k} A^{s_z}_{\ell_x s_x, \ell'_x s'_x} (-1)^{\ell'_x k}$:

$$\mathcal{E}[\Pi] = \sum_{k,s_z} \sum_{s_x s'_x} \left|\overline{k}, s_x, s_z\right\rangle A^{k s_z}_{s_x, s'_x} \left\langle\overline{k}, s'_x, s_z\right|. \tag{S153}$$

Now formally diagonalise this matrix: $A^{k s_z}_{s_x, s'_x}$. It is a real, symmetric matrix and therefore has real eigenvalues. Let us write the eigenvectors as $\left|v^{k s_z a}\right\rangle = \sum_{s_x} v^{k s_z a}_{s_x} \left|s_x\right\rangle$. Then we have

$$\mathcal{E}[\Pi] = \sum_{k,s_z,a} \lambda_{k s_z a} \left|\overline{k}, v^{k s_z a}, s_z\right\rangle\!\left\langle\overline{k}, v^{k s_z a}, s_z\right|. \tag{S154}$$

Now let us consider the recovery Kraus operator

$$R_\sigma = \Pi K^\dagger_\sigma \mathcal{E}(\Pi)^{-1/2} = \sum_k \left|\overline{k}\right\rangle\!\left\langle\overline{k}\right| \sum_\tau X^\sigma Z^\tau \sqrt{p_{\sigma\tau}} \sum_{k',s_z,a} \lambda^{-1/2}_{k' s_z a} \left|\overline{k'}, v^{k' s_z a}, s_z\right\rangle\!\left\langle\overline{k'}, v^{k' s_z a}, s_z\right| \tag{S155}$$

$$= \sum_k \sum_{\ell_x,s_x} \left|\overline{k}\right\rangle\!\left\langle\overline{k\oplus\ell_z(\sigma)}, s_x, s_z(\sigma)\right| \sum_{\tau \in \ell_x,s_x} (-1)^{f(\tau,\sigma)} (-1)^{\ell_x(k\oplus\ell_z(\sigma))} \sqrt{p_{\sigma\tau}} \sum_{k',s_z,a} \lambda^{-1/2}_{k' s_z a} \left|\overline{k'}, v^{k' s_z a}, s_z\right\rangle\!\left\langle\overline{k'}, v^{k' s_z a}, s_z\right| \tag{S156}$$

$$= \sum_k \sum_{\ell_x,s_x} \alpha_{\ell_x s_x \sigma} (-1)^{\ell_x(k\oplus\ell_z(\sigma))} \left|\overline{k}\right\rangle\!\left\langle\overline{k\oplus\ell_z(\sigma)}, s_x, s_z(\sigma)\right| \sum_{k',s_z,a} \lambda^{-1/2}_{k' s_z a} \left|\overline{k'}, v^{k' s_z a}, s_z\right\rangle\!\left\langle\overline{k'}, v^{k' s_z a}, s_z\right| \tag{S157}$$

$$= \sum_k \sum_{\ell_x,s_x} \alpha_{\ell_x s_x \sigma} \sum_{k',s_z,a} \lambda^{-1/2}_{k' s_z a} (-1)^{\ell_x(k\oplus\ell_z(\sigma))} \left|\overline{k}\right\rangle\!\left\langle\overline{k'}, v^{k' s_z a}, s_z\right| \delta_{k\oplus\ell_z(\sigma), k'} \delta_{s_z(\sigma), s_z} \left\langle s_x | v^{k' s_z a}\right\rangle \tag{S158}$$

$$= \sum_k \sum_{\ell_x,s_x} \alpha_{\ell_x s_x \sigma} \sum_a \lambda^{-1/2}_{k\oplus\ell_z(\sigma),s_z(\sigma),a} (-1)^{\ell_x(k\oplus\ell_z(\sigma))} v^{k\oplus\ell_z(\sigma),s_z(\sigma),a}_{s_x} \left|\overline{k}\right\rangle\!\left\langle\overline{k\oplus\ell_z(\sigma)}, v^{k\oplus\ell_z(\sigma),s_z(\sigma),a}, s_z(\sigma)\right|. \tag{S159}$$

Shift $k \to k \oplus \ell_z(\sigma)$ to find

$$R_\sigma = \sum_{k,\ell_x,s_x,a} \alpha_{\ell_x s_x \sigma} \lambda^{-1/2}_{k,s_z(\sigma),a} (-1)^{\ell_x k} v^{k,s_z(\sigma),a}_{s_x} \left|\overline{k\oplus\ell_z(\sigma)}\right\rangle\!\left\langle\overline{k}, v^{k,s_z(\sigma),a}, s_z(\sigma)\right|. \tag{S160}$$

Now consider

$$\mathcal{R}[\cdot] = \sum_{\ell_z s_z} \sum_{\sigma \in \ell_z s_z} R_\sigma \cdot R^\dagger_\sigma. \tag{S161}$$

Performing the sum over $\sigma$, we get $A^{\ell_z s_z}_{\ell_x s_x, \ell'_x s'_x}$:

$$\mathcal{R}[\cdot] = \sum_{\ell_z s_z} \sum_{k, \ell_x, s_x, a} \lambda^{-1/2}_{k, s_z, a} (-1)^{\ell_x k} v^{k, s_z, a}_{s_x} \left| \overline{k \oplus \ell_z} \middle\rangle\!\!\middle\langle \overline{k}, v^{k, s_z, a}, s_z \right| (\cdot) \sum_{k', \ell'_x, s'_x, a'} A^{\ell_z s_z}_{\ell_x s_x, \ell'_x s'_x} (\prime). \tag{S162}$$

To make the structure more transparent, we diagonalise $A^{\ell_z s_z}_{\ell_x s_x, \ell'_x s'_x} = \sum_b \mu_{\ell_z s_z b} u^{\ell_z s_z b}_{\ell_x s_x} u^{\ell_z s_z b *}_{\ell'_x s'_x}$. There are as many $b$ indices as there is $(_x, s_x)$. Next, perform the sum over $\ell_x, s_x$ and gather up various terms:

$$\beta^{\ell_z s_z b}_{ka} := \sum_{\ell_x s_x} \sqrt{\mu_{\ell_z s_z b}} u^{\ell_z s_z b}_{\ell_x s_x} \lambda^{-1/2}_{k, s_z, a} (-1)^{\ell_x k} v^{k, s_z, a}_{s_x}. \tag{S163}$$

Then we can finally write

$$\mathcal{R}[\cdot] = \sum_{\ell_z s_z b} \sum_{k, a} \beta^{\ell_z s_z b}_{ka} \left| \overline{k \oplus \ell_z} \middle\rangle\!\!\middle\langle \overline{k}, v^{k, s_z, a}, s_z \right| (\cdot) \sum_{k' a'} (\prime). \tag{S164}$$

Now expand $\left| v^{k s_z a} \right\rangle$:

$$\mathcal{R}[\cdot] = \sum_{\ell_z s_z b} \sum_{k, a, s_x} \beta^{\ell_z s_z b}_{ka} v^{k, s_z, a}_{s_x} \left| \overline{k \oplus \ell_z} \middle\rangle\!\!\middle\langle \overline{k}, s_x, s_z \right| (\cdot) \sum_{k' a'} (\prime). \tag{S165}$$

Now look at

$$\left| \overline{k \oplus \ell_z} \middle\rangle\!\!\middle\langle \overline{k}, s_x, s_z \right| = \bar{X}^{\ell_z} \left| \overline{k} \middle\rangle\!\!\middle\langle \overline{k} \right| Z^{s_x} X^{s_z} = \bar{X}^{\ell_z} \left| \overline{k} \middle\rangle\!\!\middle\langle \overline{k} \right| Z^{s_x} X^{s_z} X^{s_z} Z^{s_x} \Pi Z^{s_x} X^{s_z} \tag{S166}$$

$$= \bar{X}^{\ell_z} \left| \overline{k} \middle\rangle\!\!\middle\langle \overline{k} \right| Z^{s_x} X^{s_z} \Pi_{s_x, s_z}. \tag{S167}$$

Now $\Pi_{s_x, s_z} = \Pi_{s_x} \Pi_{s_z}$, since the vertex and plaquette projectors (with appropriate signs, i.e. $\mathbb{A}^{s^v_x}_v = \frac{1}{2}(1 + s^v_x X_v)$, etc.) commute. The projector $\Pi_{s_x}$ can be absorbed to the bra on the left. So we have

$$\left| \overline{k \oplus \ell_z} \middle\rangle\!\!\middle\langle \overline{k}, s_x, s_z \right| = \bar{X}^{\ell_z} Z^{s_x} X^{s_z} X^{s_z} Z^{s_x} \left| \overline{k} \middle\rangle\!\!\middle\langle \overline{k} \right| Z^{s_x} X^{s_z} \Pi_{s_x} \Pi_{s_z} = \bar{X}^{\ell_z} Z^{s_x} X^{s_z} \Pi_{k, s_x, s_z} \Pi_{s_z}. \tag{S168}$$

So we have

$$\mathcal{R}[\cdot] = \sum_{\ell_z s_z b} \sum_{k, a, s_x} \beta^{\ell_z s_z b}_{ka} v^{k, s_z, a}_{s_x} \bar{X}^{\ell_z} Z^{s_x} X^{s_z} \Pi_{k, s_x, s_z} \Pi_{s_z} (\cdot) \sum_{k' a'} (\prime) \tag{S169}$$

$$= \sum_{\ell_z s_z b} \bar{X}^{\ell_z} X^{s_z} \left[ \sum_{k, a, s_x} \beta^{\ell_z s_z b}_{ka} v^{k, s_z, a}_{s_x} (-1)^{g(s_x, s_z)} Z^{s_x} \Pi_{k, s_x, s_z} \right] \Pi_{s_z} (\cdot) \sum_{k' a'} (\prime). \tag{S170}$$

We perform the sum over $a$: $w^{\ell_z s_z b}_{k s_x} := \sum_a \beta^{\ell_z s_z b}_{ka} v^{k, s_z, a}_{s_x} (-1)^{g(s_x, s_z)}$. So we have

$$\mathcal{R}[\cdot] = \sum_{\ell_z s_z b} \bar{X}^{\ell_z} X^{s_z} \left[ \sum_{k, s_x} w^{\ell_z s_z b}_{k s_x} Z^{s_x} \Pi_{k, s_x, s_z} \right] \Pi_{s_z} (\cdot) \sum_{k' s'_x} (\prime). \tag{S171}$$

Let us Fourier transform $w^{\ell_z s_z b}_{k s_x}$ with respect to $k$, to get $\hat{w}^{\ell_z s_z b}_{\ell_x s_x} = \frac{1}{2^K} \sum_k w^{\ell_z s_z b}_{k s_x} (-1)^{k \cdot \ell_x}$, meaning $w^{\ell_z s_z b}_{k s_x} = \sum_{\ell_x} \hat{w}^{\ell_z s_z b}_{\ell_x s_x} (-1)^{k \ell_x}$. Here, $K$ is the number of logical qubits, and it should be reminded that $k$ and $\ell_x$ are vectors of length $K$. Then we have

$$\left[ \sum_{k, s_x} w^{\ell_z s_z b}_{k s_x} Z^{s_x} \Pi_{k, s_x, s_z} \right] = \left[ \sum_{\ell_x, k, s_x} \hat{w}^{\ell_z s_z b}_{\ell_x s_x} (-1)^{k \ell_x} Z^{s_x} \Pi_{k, s_x, s_z} \right] = \left[ \sum_{\ell_x, s_x} \hat{w}^{\ell_z s_z b}_{\ell_x s_x} Z^{s_x} \sum_k (-1)^{k \ell_x} \Pi_{k, s_x, s_z} \right] \tag{S172}$$

$$= \left[ \sum_{\ell_x, s_x} \hat{w}^{\ell_z s_z b}_{\ell_x s_x} Z^{s_x} \bar{Z}^{\ell_x} \Pi_{s_x, s_z} \right]. \tag{S173}$$

This allows us to write the recovery map as

$$\mathcal{R}[\cdot] = \sum_{\ell_z s_z b} \bar{X}^{\ell_z} X^{s_z} \left[ \sum_{\ell_x, s_x} \hat{w}^{\ell_z s_z b}_{\ell_x s_x} Z^{s_x} \bar{Z}^{\ell_x} \Pi_{s_x} \right] \Pi_{s_z} (\cdot) \sum_{\ell'_x s'_x} (\prime). \tag{S174}$$

This proves that $Z$-type stabilisers are measured, as well as the $X$-type string operators as the correction unitaries. For the $X$-type stabilisers, in each sector, a coherent rotation is applied, with phases with weights given by $\hat{w}^{\ell_z s_z b}_{\ell_x s_x}$.

### E. Amplitude damping and the SW recovery channel

For ease of notation, from below, we no longer label the syndrome basis with tildes. We have

$$\left|\Psi'_{RQE}\right\rangle = \sum_k \frac{1}{\sqrt{2}} |k\rangle_R \sum_{\sigma,\tau} \sqrt{p(\sigma,\tau)} |\bar{k},\sigma,\tau\rangle_R |\sigma\rangle_E \tag{S175}$$

$$= \sum_k \frac{1}{\sqrt{2}} |k\rangle_R \sum_{s_z,\ell_z} \sum_{s_x,\ell_x} (-1)^{\ell_x(k\oplus \ell_z)} |\overline{k\oplus \ell_z},s_z,s_x\rangle_Q \sum_{\sigma \in s_z,\ell_z} \sum_{\tau \in s_x,\ell_x} \sqrt{p(\sigma,\tau)}(-1)^{f(\sigma,\tau)} |\sigma\rangle_E. \tag{S176}$$

Let $W(\sigma,s_x,\ell_x) = \sum_{\tau \in s_x,\ell_x} \sqrt{p(\sigma,\tau)}(-1)^{f(\sigma,\tau)}$. Then we have

$$\left|\Psi'_{RQE}\right\rangle = \sum_k \frac{1}{\sqrt{2}} |k\rangle_R \sum_{\sigma,\tau} \sqrt{p(\sigma,\tau)} |\bar{k},\sigma,\tau\rangle_R |\sigma\rangle_E \tag{S177}$$

$$= \sum_k \frac{1}{\sqrt{2}} |k\rangle_R \sum_{s_z,\ell_z} \sum_{s_x,\ell_x} (-1)^{\ell_x(k\oplus \ell_z)} |\overline{k\oplus \ell_z},s_z,s_x\rangle_Q \sum_{\sigma \in s_z,\ell_z} W(\sigma,s_x,\ell_x) |\sigma\rangle_E. \tag{S178}$$

Consider doing the partial trace over $Q$ and $E$. we notice that we will have $k_B = k_K$, as well as $k_B \oplus \ell_B^z = k_K \oplus \ell_K^z$ therefore $\ell_B^z = \ell_K^z$. We also have $s_B^z = s_K^z$ and $s_B^x = s_K^x$. Here, $B$ denotes 'bra' and $K$ denotes 'ket'. So we have

$$\rho'_E = \sum_k \frac{1}{2} \sum_{s_z,\ell_z} \sum_{s_x} \sum_{\ell_x,\ell'_x} (-1)^{\ell_x(k\oplus \ell_z)}(-1)^{\ell'_x(k\oplus \ell_z)} \sum_{\sigma,\sigma' \in s_z,\ell_z} W(\sigma,s_x,\ell_x)W(\sigma,s_x,\ell'_x) |\sigma\rangle\langle \sigma'|_E. \tag{S179}$$

Now $\sum_k \frac{1}{2}(-1)^{\ell_x(k\oplus \ell_z)}(-1)^{\ell'_x(k\oplus \ell_z)} = \delta_{\ell_x,\ell'_x}$, so

$$\rho'_E = \sum_{s_z,\ell_z} \sum_{s_x,\ell_x} \left(\sum_{\sigma \in s_z,\ell_z} W(\sigma,s_x,\ell_x)|\sigma\rangle_E\right) (\text{h.c.}). \tag{S180}$$

Purifying, we have (where $Q' = R'R''Q$)

$$\left|\tilde{\Psi}_{Q'RE}\right\rangle = \left(\sum_m \frac{1}{\sqrt{2}} |m\rangle_{R'} |m\rangle_R\right) \otimes \left(\sum_{s_z,\ell_z,s_x,\ell_x} |\bar{\ell}_z,s_z,s_x\rangle_Q |\ell_x\rangle_{R''} \sum_{\sigma \in s_z,\ell_z} W(\sigma,s_x,\ell_x)|\sigma\rangle_E\right). \tag{S181}$$

We want to compute:

$$M_{Q'} = \text{tr}_{RE}\left[\left|\tilde{\Psi}_{Q'RE}\right\rangle\!\!\left\langle\Psi_{QRE}\right| \langle\eta|_{R'}\langle\tilde{\eta}|_{R''}\right] \tag{S182}$$

$$= \text{tr}_{RE}\left[\sum_m \frac{1}{\sqrt{2}} |m\rangle_{R'}|m\rangle_R \sum_{s_z,\ell_z,s_x,\ell_x} |\bar{\ell}_z,s_z,s_x\rangle_Q |\ell_x\rangle_{R''} \sum_{\sigma \in s_z,\ell_z} W(\sigma,s_x,\ell_x)|\sigma\rangle_E \right. \tag{S183}$$

$$\left. \times \sum_k \frac{1}{\sqrt{2}} \langle k|_R \sum_{s'_z,\ell'_z} \sum_{s'_x,\ell'_x} (-1)^{\ell'_x(k\oplus \ell'_z)} \langle\overline{k\oplus \ell'_z},s'_z,s'_x|_Q \sum_{\sigma' \in s'_z,\ell'_z} W(\sigma',s'_x,\ell'_x)\langle\sigma'|_E \langle\eta\tilde{\eta}|_{R'R''}\right]. \tag{S184}$$

The trace over $R$ will force $m = k$. The trace over $E$ will force not only $\sigma = \sigma'$ but also $s_z = s'_X$ and $\ell_z = \ell'_z$. So we have

$$M_{Q'} = \frac{1}{2} \sum_m |m\rangle_{R'} \sum_{s_z,\ell_z,s_x,\ell_x} |\bar{\ell}_z,s_z,s_x\rangle_Q |\ell_x\rangle_{R''} \sum_{\sigma \in s_z,\ell_z} W(\sigma,s_x,\ell_x)W(\sigma,s'_x,\ell'_x) \tag{S185}$$

$$\sum_{s'_x,\ell'_x} (-1)^{\ell'_x(m\oplus \ell_z)} \langle\overline{m\oplus \ell_z},s_z,s'_x|_Q \langle\eta\tilde{\eta}|_{R'R''}. \tag{S186}$$

Let $C_{s_x\ell_x,s'_x\ell'_x}(s_z,\ell_z) := \sum_{\sigma \in s_z,\ell_z} W(\sigma,s_x,\ell_x)W(\sigma,s'_x,\ell'_x)$ and shift $m \to m \oplus \ell_z$:

$$M_{Q'} = \frac{1}{2} \sum_m \sum_{s_z,\ell_z,s_x,\ell_x} |m \oplus \ell_z\rangle_{R'} |\bar{\ell}_z,s_z,s_x\rangle_Q |\ell_x\rangle_{R''} C_{s_x\ell_x,s'_x\ell'_x}(s_z,\ell_z) \sum_{s'_x,\ell'_x} (-1)^{\ell'_x m} \langle\overline{m},s_z,s'_x|_Q \langle\eta\tilde{\eta}|_{R'R''}. \tag{S187}$$

Shift $\ell_z \to \ell_z \oplus m$:

$$M_{Q'} = \frac{1}{2}\sum_m \sum_{s_z,\ell_z,s_x,\ell_x} |\ell_z\rangle_{R'} |\overline{\ell_z \oplus m}, s_z, s_x\rangle_Q |\ell_x\rangle_{R''} C_{s_x\ell_x, s'_x\ell'_x}(s_z, \ell_z \oplus m) \sum_{s'_x,\ell'_x} (-1)^{\ell'_x m} \langle \overline{m}, s_z, s'_x|_Q \langle \eta\tilde{\eta}|_{R'R''} \quad \text{(S188)}$$

$$= \frac{1}{2}\text{CNOT}_{R' \to Q} \sum_m \sum_{s_z,\ell_z,s_x,\ell_x} |\ell_z\rangle_{R'} |\overline{m}, s_z, s_x\rangle_Q |\ell_x\rangle_{R''} C_{s_x\ell_x, s'_x\ell'_x}(s_z, \ell_z \oplus m) \sum_{s'_x,\ell'_x} (-1)^{\ell'_x m} \langle \overline{m}, s_z, s'_x|_Q \langle \eta\tilde{\eta}|_{R'R''}. \quad \text{(S189)}$$

Define

$$|\tilde{w}^{m,s_z}_{s_x,s'_x}\rangle_{R'R''} = \frac{1}{2}\sum_{\ell_z,\ell_x,\ell'_x} C_{s_x\ell_x, s'_x\ell'_x}(s_z, \ell_z \oplus m)(-1)^{\ell'_x m} |\ell_z\rangle_{R'} |\ell_x\rangle_{R''}. \quad \text{(S190)}$$

Then we have

$$M_{Q'} = \text{CNOT}_{R' \to Q} \sum_m \sum_{s_z,s_x,s'_x} |\overline{m}, s_z, s_x\rangle_Q |\tilde{w}^{m,s_z}_{s_x,s'_x}\rangle_{R'R''} \langle \overline{m}, s_z, s'_x|_Q \langle \eta\tilde{\eta}|_{R'R''} \quad \text{(S191)}$$

$$= \text{CNOT}_{R' \to Q} \sum_{m,s_z} |\overline{m}\rangle\langle\overline{m}|_{Q_L} |s_z\rangle\langle s_z|_{Q_Z} \sum_{s_x,s'_x} |s_x\rangle\langle s'_x|_{Q_X} |\tilde{w}^{m,s_z}_{s_x,s'_x}\rangle\langle \eta\tilde{\eta}|_{R'R''}. \quad \text{(S192)}$$

Now the diagonalisation of $M^{m,s_z} = \sum_{s_x,s'_x} |s_x\rangle\langle s'_x|_{Q_X} |\tilde{w}^{m,s_z}_{s_x,s'_x}\rangle\langle \eta\tilde{\eta}|_{R'R''}$ is non-trivial, due to the sign-structure of the matrix. However, let us assume that we are able to perform SVD on this matrix as follows:

$$M^{m,s_z} = \sum_j \nu_j^{m,s_z} |u_j\rangle\langle v_j|_{Q_X} \langle \eta\tilde{\eta}|_{R'R''}. \quad \text{(S193)}$$

Then the optimal unitary is given by

$$U_{Q'}^{\text{argmax}} = \sum_{m,s_z,j} |v_j\rangle_{Q_X} |\eta\tilde{\eta}\rangle_{R'R''} \langle u_j|_{Q_X R'R''} |\overline{m}\rangle\langle\overline{m}|_{Q_L} |s_z\rangle\langle s_z|_{Q_{C_X}} \text{CNOT}_{R' \to Q}. \quad \text{(S194)}$$

Now applying this onto $|\tilde{\Psi}_{Q'RE}\rangle$, we get

$$\sum_{m,s_z,j} |v_j\rangle_{Q_X} |\eta\tilde{\eta}\rangle_{R'R''} \langle u_j|_{Q_X R'R''} |\overline{m}\rangle\langle\overline{m}|_{Q_L} |s_z\rangle\langle s_z|_{Q_{C_X}} \text{CNOT}_{R' \to Q} \quad \text{(S195)}$$

$$\times \sum_{m'} \frac{1}{\sqrt{2}} |m'\rangle_{R'} |m'\rangle_R \sum_{s'_z,\ell'_z,s'_x,\ell'_x} |\overline{\ell'_z}, s'_z, s'_x\rangle_Q |\ell'_x\rangle_{R''} \sum_{\sigma \in s'_x, \ell'_z} W(\sigma, s'_x, \ell'_x) |\sigma\rangle_E \quad \text{(S196)}$$

$$= \sum_{m,s_z,j} |v_j\rangle_{Q_X} |\eta\tilde{\eta}\rangle_{R'R''} \langle u_j|s'_x m' \ell'_x\rangle_{Q_X R'R''} |\overline{m}\rangle_{Q_L} |s_z\rangle_{Q_{C_X}} \quad \text{(S197)}$$

$$\times \sum_{m'} \delta_{m,\ell'_z \oplus m'} \frac{1}{\sqrt{2}} |m'\rangle_{R'} |m'\rangle_R \sum_{\ell'_z,s'_x,\ell'_x} |\overline{\ell'_z \oplus m'}, s'_x\rangle_Q |\ell'_x\rangle_{R''} \sum_{\sigma \in s_z,\ell'_z} W(\sigma, s'_x, \ell'_x) |\sigma\rangle_E \quad \text{(S198)}$$

$$= \sum_{m,s_z,\ell_z,j} \frac{1}{\sqrt{2}} |m\rangle_R \underbrace{|\overline{m \oplus \ell_z}, s_z, v_j\rangle_Q |\eta\tilde{\eta}\rangle_{R'R''}}_{|\phi_{Q'}^{m,(s_z,\ell_z,j)}\rangle} \underbrace{\sum_{s_x,\ell_x} \langle u_j|s_x, m, \ell_x\rangle_{Q_X R'R''} \sum_{\sigma \in s_z,\ell_z} W(\sigma, s_x, \ell_x) |\sigma\rangle_E}_{|e_{s_z,\ell_z,j}\rangle_E}. \quad \text{(S199)}$$

Again, from the diagonal structure of $|\phi^{m,(s_z,\ell_z,j)}\rangle_{Q'}$ with respect to $s_z$, we can show that $s_z$ is measured, in a similar manner to the Petz and amplitude damping case.

---



\end{document}